\newcommand{\anyapproxweighteddiameter}{

Given any positive function $\alpha(n)$, any algorithm (even randomized) computing an $\alpha(n)$-approximation to the weighted diameter or bi-chromatic diameter in a given graph $G$ requires $\Tilde{\Omega}(\sqrt{n}+D)$ rounds.
}
\newcommand{\anyapproxdirecteddiameter}{

Given any positive function $\alpha(n)$, any algorithm (even randomized) computing an $\alpha(n)$-approximation to the diameter in a given directed graph $G$ requires $\Tilde{\Omega}(\sqrt{n}+D)$ rounds.
}
\newcommand{\anyapproxbichromaticdirecteddiameter}{
Given any positive function $\alpha(n)$, any algorithm (even randomized) computing an $\alpha(n)$-approximation to the bi-chromatic diameter in a given directed graph $G$ requires $\Tilde{\Omega}(\sqrt{n}+D)$ rounds.
}
\newcommand{\twoapproxbichromaticdirecteddiameter}{
For all constant $\epsilon>0$, there is no $o(\frac{n}{\log ^2 n})$ round algorithm for computing a $(2-\epsilon)$-approximation to the bi-chromatic diameter in a directed graph.
}
\newcommand{\anyapproxradius}{

Given any positive function $\alpha(n)$, any algorithm (even randomized) computing an $\alpha(n)$-approximation to the weighted (directed) radius in a given weighted (directed) graph $G$ requires $\Tilde{\Omega}(\sqrt{n}+D)$ rounds.
}
\newcommand{\twoapproxradius}{
Given any constant $\varepsilon>0$, any algorithm (even randomized) computing an $(2-\varepsilon)$-approximation to the weighted (directed) radius in a given weighted (directed) graph $G$ requires $\Tilde{\Omega}(n)$ rounds.
}
\newcommand{\ssspconnection}{
For any $\eps \geq 0$, given a $(1+\eps)$-approximation algorithm $\sssp$ for weighted and directed SSSP running in $T(n,\eps,D)$ rounds,  there exists an algorithm for $(2+\eps^3 + 3\eps^2 + 4\eps)$-approximate diameter, radius, and all eccentricities in $\tilde{O}(T(n,\eps,D)+D)$ rounds on weighted, directed graphs.
}
\newcommand{\approxupperbichromaticdiameter}{
There is an algorithm with complexity $\too(\sqrt{n}+D)$ that given an undirected, unweighted graph $G=(V,E)$, and sets  $S\subseteq V,T=V\backslash S$, w.h.p. computes a value $D_{ST}^*$ such that $\frac{3D_{ST}}{5}-\frac{6}{5}\leq D_{ST}^*\leq D_{ST}$.
}
\newcommand{\approxweightedupperbichromaticdiameter}{
There is an algorithm with complexity $T(SSSP)$ that given an undirected graph $G=(V,E)$, and sets  $S\subseteq V,T=V\backslash S$, computes a value $D^*$ such that $\frac{D_{ST}}{2}-W/2\leq D^*\leq D_{ST}$. Here $W$ is the minimum edge weight in $S\times T$. 
}
\newcommand{\approxweightedlowerbichromaticdiameter}{
For all constant $\epsilon>0$, there is no $o(\frac{n}{\log ^3 n})$ round algorithm for computing a $(\frac{5}{3}-\epsilon)$-approximation to the bi-chromatic diameter in an unweighted, undirected graph.
}
\newcommand{\corolapproxone}{
For any $\eps = 1/\polylog(n)$, there exists an algorithm for $(2+\eps)$-approximate diameter, radius and all eccentricities running in $\tilde{O}(\sqrt{n}+D)$ rounds on nonnegative weighted graphs, with $n$ nodes and hop-diameter $D$.
}
\newcommand{\corolapproxtwo}{
For any $\eps = 1/\polylog(n)$, there exists an algorithm for $(2+\eps)$-approximate diameter, radius and all eccentricities running in $\too(\sqrt{n}D^{1/4}+D)$ rounds on nonnegative weighted, directed graphs, with $n$ nodes and hop-diameter $D$.
}
\newcommand{\corolapproxthree}{
There exists an algorithm for $2$-approximate radius, diameter and all eccentricities running in $\tilde{O}(\sqrt{n}D^{1/4}+D)$ rounds on nonnegative weighted, directed graphs, with $n$ nodes and hop-diameter $D$.
}
\newcommand{\approxunweightedundirected}{
For any $k\in\N$, there exist algorithms that compute $(2-\frac{1}{2^k})$-approximate diameter and radius and $(3-\frac{4}{2^k+1})$-approximate eccentricities on unweighted, undirected graphs, that have running time of $\tilde{O}(n^{\frac{1}{k+1}}+D)$ rounds w.h.p.
}
\def \sssp{\mathcal{A_{\eps}}}
\newtheorem{theorem}{Theorem}[section]
\newtheorem{corollary}[theorem]{Corollary}
\newtheorem{lemma}{Lemma}[theorem]
\newtheorem{definition}[theorem]{Definition}
\newtheorem{remark}[theorem]{Remark}
\newcommand{\cut}{cut}
\newcommand{\size}[1]{\ensuremath{\left|#1\right|}}
\newcommand{\cgst}{$\mathsf{CONGEST}$\xspace}
\newcommand{\too}{\Tilde{O}}
\newenvironment{proof-sketch}{\noindent{\bf Sketch of Proof}\hspace*{1em}}{\qed\bigskip}
\newenvironment{proof-idea}{\noindent{\bf Proof Idea}\hspace*{1em}}{\qed\bigskip}
\newenvironment{proof-of-lemma}[1]{\noindent{\bf Proof of Lemma #1}\hspace*{1em}}{\qed\bigskip}
\newenvironment{proof-attempt}{\noindent{\bf Proof Attempt}\hspace*{1em}}{\qed\bigskip}
\newenvironment{definition-repeat}[1]{\begin{trivlist}
		\item[\hspace{\labelsep}{\bf\noindent Definition \ref{#1} }]\em }%
	{\end{trivlist}}
\newenvironment{lemma-repeat}[1]{\begin{trivlist}
		\item[\hspace{\labelsep}{\bf\noindent Lemma \ref{#1} }]\em }%
	{\end{trivlist}}
\newenvironment{theorem-repeat}[1]{\begin{trivlist}
		\item[\hspace{\labelsep}{\bf\noindent Theorem \ref{#1} }]\em }%
	{\end{trivlist}}
\newenvironment{corollary-repeat}[1]{\begin{trivlist}
		\item[\hspace{\labelsep}{\bf\noindent Corollary \ref{#1} }]\em }%
	{\end{trivlist}}
\def\fnum@figure{{\bf Figure \thefigure}}
\def\fnum@table{{\bf Table \thetable}}
\long\def\@mycaption#1[#2]#3{\addcontentsline{\csname
  ext@#1\endcsname}{#1}{\protect\numberline{\csname
  the#1\endcsname}{\ignorespaces #2}}\par
  \begingroup
    \@parboxrestore
    \small
    \@makecaption{\csname fnum@#1\endcsname}{\ignorespaces #3}\par
  \endgroup}
\def\mycaption{\refstepcounter\@captype \@dblarg{\@mycaption\@captype}}
\newcommand{\mathify}[1]{\ifmmode{#1}\else\mbox{$#1$}\fi}
\newcommand{\bigO}O
\newcommand{\set}[1]{\mathify{\left\{ #1 \right\}}}
\newcommand{\remove}[1]{}
\newcommand{\ignore}[1]{}
\def\eps{\varepsilon}
\def\N{\mathbb{N}}
\newcommand{\ceil}[1]{\left\lceil #1 \right\rceil}
\def\poly{{\rm poly}}
\def\polylog{{\rm polylog}}
\date{\today}
\begin{document}

\title{Distributed Distance Approximation}
\author{Bertie Ancona \\ \text{MIT, bancona@alum.mit.edu} \and Keren Censor-Hillel \\ \text{Technion, ckeren@cs.technion.ac.il}\and Mina Dalirrooyfard \\ \text{MIT, minad@mit.edu} \and Yuval Efron \\ \text{Technion, efronyuv@gmail.com} \and Virginia Vassilevska Williams \\ \text{MIT, virgi@mit.edu}}
\maketitle
\begin{abstract}
    Diameter, radius and eccentricities are fundamental graph parameters, which are extensively studied in various computational settings. Typically, computing approximate answers can be much more efficient compared with computing exact solutions. In this paper, we give a near complete characterization of the trade-offs between approximation ratios and round complexity of distributed algorithms for approximating these parameters, with a focus on the weighted and directed variants.

Furthermore, we study \emph{bi-chromatic} variants of these parameters defined on a graph whose vertices are colored either red or blue, and one focuses only on distances for pairs of vertices that are colored differently. Motivated by applications in computational geometry, bi-chromatic diameter, radius and eccentricities have been recently studied in the sequential setting [Backurs et al. STOC'18, Dalirrooyfard et al. ICALP'19]. We provide the first distributed upper and lower bounds for such problems. 

Our technical contributions include introducing the notion of \emph{approximate pseudo-center}, which extends the \emph{pseudo-centers} of [Choudhary and Gold SODA'20], and presenting an efficient distributed algorithm for computing approximate pseudo-centers. On the lower bound side, our constructions introduce the usage of new functions into the framework of reductions from 2-party communication complexity to distributed algorithms.

\end{abstract}
\newpage
\setlength\extrarowheight{3pt}

\section{Introduction}\label{sec:-intro}

The diameter and radius are central graph parameters, defined as the maximum and minimum eccentricities over all vertices, respectively, where the eccentricity of a vertex $v$ is the maximum distance out of $v$.
Computing the diameter and radius of a given graph are cornerstone problems with abundant applications. This is particularly the case in the context of distributed computing, where distances between nodes in a network (and in particular the graph diameter) directly influence the time it takes to communicate throughout the network. 
 
We focus on computing the diameter, radius and eccentricities in the classic \cgst model of distributed computation, in which $n$ nodes of a synchronous network communicate by exchanging messages of $O(\log{n})$ bits with their neighbors in the underlying network graph. In a seminal work, Frischknecht et al. \cite{FrischknechtHW12} showed that the diameter is hard to compute in \cgst, namely that $\Tilde{\Omega}(n)$\footnote{Throughout the paper, $\tilde{O}$ and $\tilde{\Omega}$ are used to hide poly-logarithmic factors} rounds are required, even in undirected unweighted graphs. Abboud et al.~\cite{AbboudCK16} showed that the same holds for computing the radius. Both of these results are tight up to logarithmic factors due to algorithms that compute all pairs shortest paths (APSP) in a given unweighted, undirected graph in $O(n)$ rounds, see Holzer and Wattenhofer, Lenzen and Peleg, and Peleg et al. \cite{HolzerW12,LenzenP13,PelegRT12}. Recently, Bernstein and Nanongkai \cite{Bernstein:2019:DEW:3313276.3316326}, presented an algorithm which computes exact APSP in a given weighted, directed graph in $\too(n)$ rounds as well. 

As computing the diameter and radius exactly in general graphs is hard,
 a natural relaxation is to settle for approximate computations. In an unweighted, undirected graph, a simple observation due to the triangle inequality is that computing a BFS tree from any node yields a $2$-approximation to the diameter or radius, and a $3$-approximation of all eccentricities.

Obtaining a more thorough understanding of the complexity landscape of computing approximations to these distance parameters has been an ongoing endeavour of the community. The current state of the art for diameter approximation is the algorithm by Holzer et al. \cite{HolzerPRW14} with round complexity of $O(\sqrt{n\log n}+D)$, that achieves a $\frac{3}{2}$-approximation of the diameter in a given unweighted, undirected graph (further discussion is deferred to \cref{subsection:related}).

However, many open cases have remained, and unveiling the full picture of the trade-offs between approximation ratios and round complexity for distance parameters in the \cgst model has remained a central open problem.
In this paper, we give a near-complete characterization of this trade-off for the problems of diameter, radius and eccentricities, focusing on the weighted and/or directed variants. For the problem of directed diameter, only the range $[\frac{3}{2},2]$ of approximation ratios remains open.

In some cases, originally motivated by computational geometry problems \cite{DBLP:journals/siamcomp/Yao82,DBLP:journals/ijcga/KatohI95,DBLP:journals/dcg/AgarwalES91,DBLP:journals/jgaa/DumitrescuG04}, we are interested in a ``bi-chromatic'' definition of the parameters. In the bi-chromatic setting, the vertices are partitioned into two sets, $S$ and $T=V\setminus S$, and the bi-chromatic eccentricity of a node $s\in S$ is the maximum distance from $s$ to a node in $T$. The bi-chromatic diameter and radius are the maximum and minimum bi-chromatic eccentricities of nodes in $S$.

The bi-chromatic versions of diameter and radius have received much recent attention in the sequential setting \cite{Backurs:2018:TTA:3188745.3188950,DBLP:conf/icalp/DalirrooyfardW019a}. In this paper, we initiate the study of these problems in the \cgst model, by providing upper and lower bounds for these problems. For example, we prove that a $\frac{5}{3}$-approximation to bi-chromatic diameter in an unweighted, undirected graph can be computed in $\tilde{O}(\sqrt{n}+D)$ rounds, and we prove this is tight in the sense that any improvement in the approximation ratio incurs a blowup in the round complexity to $\tilde{\Omega}(n)$.

A more comprehensive display of our results follows. Also, a comparison with previous work is depicted in Table \ref{table-upper} and Table \ref{table-lower} and is elaborated upon in Section \ref{subsection:related}.

\renewcommand{\arraystretch}{.9}
\begin{table}[h!]
\begin{tabular}{|c|c|c|c|c|}
\hline
\textbf{Problem} & \textbf{Approx.} & \textbf{Variant} & \textbf{Upper Bound $\tilde{O}(\cdot)$} & \textbf{Reference} \\
\hline {Diameter} & Exact & wted dir & $n$ & \cite{Bernstein:2019:DEW:3313276.3316326} \\
\cline{2-5} & $2-\frac{1}{2^k}$ & & $n^{\frac{1}{k+1}}+D$ & Theorem \ref{thm:approx_unweighted_undirected}, \cite{32Diam}${}^*$ \\
\cline{2-5} & $2$ & wted dir & $T(SSSP)$ & Corollary \ref{corol:-eccentapprox3-intro} \\
\cline{2-5} & $2+\epsilon$ & wted & $\sqrt{n}+D$ & \cite{becker_et_al:LIPIcs:2017:8003} \\
\cline{3-5} & & wted dir & $\sqrt{n}D^{1/4}+D$ & Corollary \ref{corol:-eccentapprox2-intro} \\
\hline {Radius} & Exact & wted dir & $n$ & \cite{Bernstein:2019:DEW:3313276.3316326} \\
\cline{2-5} & $2-\frac{1}{2^k}$ & & $n^{\frac{1}{k+1}}+D$ & Theorem \ref{thm:approx_unweighted_undirected} \\
\cline{2-5} & $2$ & wted dir & $T(SSSP)$ & Corollary \ref{corol:-eccentapprox3-intro} \\
\cline{2-5} & $2+\epsilon$ & wted & $\sqrt{n}+D$ & Corollary \ref{corol:-eccentApprox1-intro} \\
\cline{3-4}\cline{5-5} & & wted dir & $\sqrt{n}D^{1/4}+D$ & Corollary \ref{corol:-eccentapprox2-intro} \\
\hline {Eccentricities} & Exact & wted dir & $n$ & \cite{Bernstein:2019:DEW:3313276.3316326} \\
\cline{2-5} & $3-\frac{4}{2^k+1}$ & & $n^{\frac{1}{k+1}}+D$ & Theorem \ref{thm:approx_unweighted_undirected} \\
\cline{2-5} & $2$ & wted dir & $T(SSSP)$ & Corollary \ref{corol:-eccentapprox3-intro} \\
\cline{2-5} & $2+\epsilon$ & wted & $\sqrt{n}+D$ & Corollary \ref{corol:-eccentApprox1-intro} \\
\cline{3-4}\cline{5-5} & & wted dir & $\sqrt{n}D^{1/4}+D$ & Corollary \ref{corol:-eccentapprox2-intro}\\
\hline {Bi-chromatic Diameter} & Exact & wted dir & $n$ & \cite{Bernstein:2019:DEW:3313276.3316326} \\
\cline{2-5}  & $5/3$ & & $\sqrt{n}+D$ & Theorem \ref{thm:approx_bichro_diameter_unweighted_intro} \\
\cline{2-5} & $2$ & wted & $T(SSSP)$ & Theorem \ref{thm:approx_bichro_diameter_weighted_intro} \\
\hline
\end{tabular}
\vspace{2mm}
\caption{Upper bounds for the problems considered in this paper. A variant can be weighted, directed, both, or neither. Upper bounds hold for the listed variants and all subsets of those variants. Approximation factors are multiplicative but may omit additive error. The value $k$ may be any integer greater or equal to 1. We denote the round complexity of the current best exact weighted SSSP algorithm by $T(SSSP)$, currently $\tilde{O}(\min\set{\sqrt{nD},\sqrt{n}D^\frac{1}{4}+n^\frac{3}{5}} + D)$ by \cite{DBLP:conf/focs/ForsterN18}.  
${}^* \text{for } k=1.$
}
\label{table-upper}
\end{table}
\renewcommand{\arraystretch}{1}
\renewcommand{\arraystretch}{.9}
\begin{table}[h!]
\begin{tabular}{|c|c|c|c|c|}
\hline
\textbf{Problem} & \textbf{Approx.} & \textbf{Variant} & \textbf{Lower Bound $\tilde{\Omega}(\cdot)$} & \textbf{Reference} \\
\hline {Diameter} & $3/2-\eps$ & & $n$  & \cite{AbboudCK16} \\
\cline{2-3}\cline{5-5} & $2-\eps$ &  wted & & \cite{holzer_et_al:LIPIcs:2016:6597} \\ 
\cline{2-5} & $\poly(n)$ &  wted & $\sqrt{n}+D$ & \cite{DBLP:journals/dc/LenzenPP19} \\
\cline{3-3}\cline{5-5} & & dir & & Theorem \ref{thrmintro:-unweightedDirectedDiameterLowerBoundAny} \\
\hline {Radius} & $3/2-\eps$ & & $n$  & \cite{AbboudCK16} \\
\cline{2-3}\cline{5-5} & $2-\eps$ &  wted & & Theorem \ref{thrmintro:-weightedDirectedRadiusLowerBoundTwo} \\
\cline{3-3} & & dir & &  \\
\cline{2-5} & $\poly(n)$ &  wted & $\sqrt{n}+D$ & Corollary \ref{thrmintro:-weightedDirectedRadiusLowerBoundAny} \\
\cline{3-3} & & dir & & \\
\hline {Eccentricities} & $5/3-\eps$ & & $n$ & \cite{AbboudCK16} \\
\cline{2-3}\cline{5-5} & $2-\eps$ &  wted & & \cite{holzer_et_al:LIPIcs:2016:6597} \\
\cline{3-3}\cline{5-5} & & dir & & Theorem \ref{thrmintro:-weightedDirectedRadiusLowerBoundTwo} \\
\cline{2-5} & $\poly(n)$ &  wted & $\sqrt{n}+D$ & Corollary \ref{thrmintro:-weightedDirectedRadiusLowerBoundAny} \\
\cline{3-3} & & dir & & \\
\hline {Bi-chromatic Diameter} & $5/3-\eps$ &  & $n$ & Theorem \ref{thm:lowerbound_diameter_bichromatic_intro} \\
\cline{2-3}\cline{5-5} & $2-\eps$ & wted & & \cite{holzer_et_al:LIPIcs:2016:6597} \\
\cline{3-3}\cline{5-5} & & dir & & Theorem \ref{thm:lowerbound_diameter_bichromatic_directed_intro} \\
\cline{2-5} & $\poly(n)$ &  wted & $\sqrt{n}+D$ & Corollary \ref{thrmintro:-weightedDirectedRadiusLowerBoundAny} \\
\cline{3-3} & & dir & & \\
\hline
\end{tabular}
\vspace{2mm}
\caption{Lower bounds for the problems considered in this paper. A variant can be weighted, directed, both, or neither. Lower bounds hold for the listed variants and all supersets of those variants. Approximation factors are multiplicative.}
\label{table-lower}
\end{table}
\renewcommand{\arraystretch}{1}

\subsection{Our contributions and techniques}
\label{subsection:contribution}
As mentioned earlier, the \emph{eccentricity} $ecc(v)$ of a vertex $v$ is the distance $\max_{u\in V} d(v,u)$. The \emph{diameter} $D$ is the largest eccentricity in the graph, and the \emph{radius} $r$ is the smallest.

\subparagraph*{Directed/weighted Radius and Eccentricities.}

We present a connection between the complexity of computing or approximating the Single Source Shortest Paths (SSSP) problem and the complexity of approximating radius, diameter and eccentricities.
Formally, we prove the following theorem in Section \ref{sec:-approx}.

\begin{theorem}\label{thrmintro:-WeightedDirectedRadiusUpperBoundTwo}
\ssspconnection
\end{theorem}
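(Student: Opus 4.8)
The plan is to reduce all three problems (radius, diameter, and all eccentricities) to $O(\polylog n)$ invocations of $\sssp$, run on $G$ and on its reverse graph $G^{\mathrm{rev}}$, together with $\tilde O(D)$ rounds of pipelined aggregation. Here $D$ is the hop‑diameter of the (undirected) communication network, which is common to $G$ and $G^{\mathrm{rev}}$, so each call on $G^{\mathrm{rev}}$ also costs $T(n,\eps,D)$ rounds: every node knows the orientation and weight of its incident edges, so it can simulate $\sssp$ on $G^{\mathrm{rev}}$ by swapping the roles of in‑ and out‑edges. We assume $G$ is strongly connected; otherwise the radius, the diameter, and some eccentricities are $+\infty$, which is detectable in $O(D)$ rounds. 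Write $d(\cdot,\cdot)$ for true distances, $ecc(v)=\max_u d(v,u)$ for the out‑eccentricity, $ecc^{\text{in}}(v)=\max_u d(u,v)$, $r$ for the radius, and $\hat d$ for the estimates returned by $\sssp$, so that $d\le \hat d\le(1+\eps)\,d$ pointwise (all distances are finite by strong connectivity).

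The heart of the algorithm is to compute an \emph{approximate pseudo‑center}: a vertex $c^*$ with $ecc(c^*)\le\bigl(1+(1+\eps)^2\bigr)\,r$. This is where the notion of approximate pseudo‑center (extending the pseudo‑centers of Choudhary and Gold) is used, and it is the technical core. At a high level: fix an arbitrary source $s$; call $\sssp$ out of $s$ to obtain $\hat d(s,\cdot)$ and, by aggregation, a vertex $w$ maximizing $\hat d(s,\cdot)$; call $\sssp$ into $w$ (i.e., on $G^{\mathrm{rev}}$ from $w$) to obtain $\hat d(\cdot,w)$; then select $c^*$ to be a vertex that approximately balances $\hat d(s,c^*)$ against $\hat d(c^*,w)$ along a shortest $s\!\leadsto\! w$ path, which is an aggregation over the (approximate) shortest‑path structure, iterated $O(\polylog n)$ times to drive down the error. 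Once $c^*$ is fixed, two further calls suffice: $\sssp$ out of $c^*$, yielding $\hat d(c^*,\cdot)$ and $\hat r:=\widehat{ecc}(c^*)$, and $\sssp$ into $c^*$, yielding $\hat d(\cdot,c^*)$. All outputs are then computed in an extra $\tilde O(D)$ rounds: the radius estimate is $\hat r$; each vertex $v$ outputs $\hat e(v):=\max\bigl(\hat d(v,c^*),\hat r\bigr)$ as its eccentricity estimate; and the diameter estimate is $\max_v\bigl(\hat d(v,c^*)+\hat r\bigr)$.

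Granting the pseudo‑center guarantee, correctness is a triangle‑inequality computation. Since $ecc(c^*)\ge r$ and $ecc(c^*)\le(1+(1+\eps)^2)r$, we get $r\le \hat r\le(1+\eps)\,ecc(c^*)\le(1+\eps)\bigl(1+(1+\eps)^2\bigr)\,r=(2+4\eps+3\eps^2+\eps^3)\,r$. For any $v$: on one side $ecc(v)\le d(v,c^*)+ecc(c^*)\le\hat d(v,c^*)+\hat r\le 2\hat e(v)$; on the other side $\hat d(v,c^*)\le(1+\eps)d(v,c^*)\le(1+\eps)ecc(v)$ and $\hat r\le(1+\eps)(1+(1+\eps)^2)r\le(1+\eps)(1+(1+\eps)^2)ecc(v)$, so $\hat e(v)\le(2+4\eps+3\eps^2+\eps^3)\,ecc(v)$; hence $\hat e(v)$ is within the claimed factor of $ecc(v)$. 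Finally, $\max_v(\hat d(v,c^*)+\hat r)\ge ecc^{\text{in}}(c^*)+ecc(c^*)\ge$ (diameter) by the triangle inequality through $c^*$, while $\max_v(\hat d(v,c^*)+\hat r)\le(1+\eps)\bigl(ecc^{\text{in}}(c^*)+ecc(c^*)\bigr)\le 2(1+\eps)\cdot(\text{diameter})$, again within the claimed factor. Each $\sssp$ call costs $T(n,\eps,D)$, we make $O(\polylog n)$ of them, and every aggregation costs $\tilde O(D)$, for a total of $\tilde O(T(n,\eps,D)+D)$ rounds.

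The main obstacle is the approximate pseudo‑center: both the structural claim $ecc(c^*)\le(1+(1+\eps)^2)r$ and its $\tilde O(\cdot)$‑round distributed implementation. Unlike the undirected case, where every vertex has eccentricity at most $2r$, in a directed graph $ecc(z)$ can be arbitrarily larger than $r$ for essentially every vertex $z$ — including an arbitrary source, its farthest in‑ or out‑vertex, and the midpoint of the path between them — so no single‑sweep choice is a constant‑factor center, and pinning down a vertex of near‑minimum out‑eccentricity genuinely requires combining several shortest‑path computations with the extremality of $w$ and the defining property $d(c,u)\le r$ (for all $u$) of a true center $c$. Carrying this out while losing only a $(1+\eps)^2$ factor — the two nested approximate shortest‑path queries used to place $c^*$; the remaining $(1+\eps)$ in the final bound is the single $\sssp$ call out of $c^*$ — is the delicate point. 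A secondary issue is executing the balancing/selection step in $\tilde O(D)$ CONGEST rounds using only an \emph{approximate} shortest‑path tree and $O(\log n)$‑bit messages, in particular handling the corner case in which one heavy edge of the $s\!\leadsto\! w$ path straddles the target prefix length.
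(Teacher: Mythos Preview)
Your proposal has a genuine gap at exactly the point you flag as ``the main obstacle.'' You define an approximate pseudo-center as a \emph{single} vertex $c^*$ with $ecc(c^*)\le(1+(1+\eps)^2)r$, and you sketch finding it by picking an arbitrary source $s$, locating a far vertex $w$, and ``balancing'' along an approximate $s\leadsto w$ shortest path, iterated $O(\polylog n)$ times. But you never actually carry this out, and your own observation that in directed graphs the source, the far vertex, and any midpoint can all have eccentricity arbitrarily larger than $r$ shows why this sketch is not a proof: nothing in the balancing step, nor in unspecified iteration, forces the selected vertex to have small out-eccentricity. Asking for a single vertex with $ecc\le O(r)$ is essentially asking to locate an approximate center, which is the radius problem itself; you have not broken that circularity.

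The paper avoids this by defining the approximate pseudo-center as a \emph{set} $C$ of $O(\log^2 n)$ vertices, with the guarantee $D(C):=\max_{u}\min_{c\in C}d(c,u)\le(1+\eps)^2\,ecc(v)$ for every $v$ (so in particular $\le(1+\eps)^2 r$). No individual $c\in C$ need have small eccentricity. The set is built by an iterative halving procedure: maintain a working set $W$ (initially $V$); sample $S\subseteq W$ of size $\Theta(\log n)$; run $\sssp$ outward from each $s\in S$; let $a$ maximize $\hat d(S,\cdot)$; run $\sssp$ inward to $a$; remove from $W$ every $u$ with $\hat d(u,a)\ge\hat d(S,a)$; add $S$ to $C$; repeat. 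A hitting-set argument shows $|W|$ halves each round w.h.p., so $O(\log n)$ rounds and $|C|=O(\log^2 n)$ suffice. The eccentricity estimate is then the \emph{sum} $\max_{c\in C}\hat d(v,c)+\hat D(C)$, which is at least $ecc(v)$ (it is the length of an actual path through some $c$) and at most $\bigl((1+\eps)+(1+\eps)^3\bigr)ecc(v)=(2+4\eps+3\eps^2+\eps^3)ecc(v)$.

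A secondary issue: even granting your single-vertex guarantee $ecc(c^*)\le(1+(1+\eps)^2)r$, your eccentricity estimate $\hat e(v)=\max\bigl(\hat d(v,c^*),\hat r\bigr)$ does not give a $(2+4\eps+3\eps^2+\eps^3)$-approximation. You correctly derive $ecc(v)\le 2\hat e(v)$ and $\hat e(v)\le(2+4\eps+3\eps^2+\eps^3)\,ecc(v)$, but together these only pin $\hat e(v)$ to within a factor of $2(2+4\eps+3\eps^2+\eps^3)$ of $ecc(v)$. To get the stated ratio you need both the sum (not the max) and the tighter bound $(1+\eps)^2 r$ (not $(1+(1+\eps)^2)r$) on the pseudo-center quantity.
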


We now describe the challenges in proving the above and how we cope with them.
A useful notion for distance parameters is the \emph{center} of a graph, which is the vertex with the lowest eccentricity. Given the center $c$ of a graph, we can easily approximate all eccentricities of a given graph by performing an SSSP algorithm rooted at $c$, and letting each node $v$ estimate its eccentricity by outputting $d(v,c)+ecc(c)$. However, computing the center of a graph, or even its eccentricity (the radius), is a hard task that requires $\tilde{\Omega}(n)$ rounds \cite{AbboudCK16}.

For proving Theorem \ref{thrmintro:-WeightedDirectedRadiusUpperBoundTwo}, we rely on an approach of Choudhary and Gold \cite{2eccalgo}.
Here, one defines a notion of a \emph{pseudo-center} and one then shows how to compute a pseudo-center of size $O(\log ^2 n)$ sequentially in near-linear time. A pseudo-center $C$ is a set of nodes, whose goal is to mimic the center of the graph, by promising that all eccentricities are at least the maximal distance between any node to the pseudo-center $C$. Using such a pseudo-center, one estimates the eccentricity of every node, similarly to the case of computing the actual center. 

The algorithm of \cite{2eccalgo} for computing a small pseudo-center can be viewed as a reduction to Single Source Shortest Paths (SSSP), which is very efficient in the sequential setting. However, the current state-of-the-art {\em distributed} complexity of computing exact SSSP is very costly, and hence we wish to avoid it. To overcome this, we introduce the notion of an \emph{approximate pseudo-center}, which generalizes the notion of a pseudo-center. We prove that (i) an approximate 
pseudo-center of small size can be computed efficiently in a distributed manner (thus avoiding the complexities of exact SSSP), and (ii) an approximate pseudo-center is still sufficient for approximating the required distance parameters.

From Theorem \ref{thrmintro:-WeightedDirectedRadiusUpperBoundTwo}, using the $(1+\eps)$-approximate SSSP algorithms of \cite{becker_et_al:LIPIcs:2017:8003,DBLP:conf/focs/ForsterN18}, which run in $\tilde{O}((\sqrt{n} + D)/\eps)$ rounds on weighted, undirected graphs and $\tilde{O}((\sqrt{n}D^{1/4}+D)/\eps)$ rounds on weighted, directed graphs, respectively, we deduce the following corollaries:
\begin{corollary}{}\label{corol:-eccentApprox1-intro}

\corolapproxone
\end{corollary}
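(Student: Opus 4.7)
The plan is to derive this corollary as a direct application of Theorem~\ref{thrmintro:-WeightedDirectedRadiusUpperBoundTwo} instantiated with the $(1+\eps')$-approximate weighted SSSP algorithm of Becker et al.~\cite{becker_et_al:LIPIcs:2017:8003}, which runs in $\tilde{O}((\sqrt{n}+D)/\eps')$ rounds on nonnegative weighted undirected graphs. The undirected setting is a special case of the directed setting to which Theorem~\ref{thrmintro:-WeightedDirectedRadiusUpperBoundTwo} applies, so the black-box reduction from approximate eccentricities/diameter/radius to approximate SSSP is immediately available.

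The first step is to translate the target approximation factor $2+\eps$ of the corollary into the correct input approximation parameter $\eps'$ for Theorem~\ref{thrmintro:-WeightedDirectedRadiusUpperBoundTwo}. The theorem yields a $(2 + \eps'^3 + 3\eps'^2 + 4\eps')$-approximation, so I would choose $\eps'$ so that $\eps'^3 + 3\eps'^2 + 4\eps' \leq \eps$. For $\eps' \in (0,1]$ we have $\eps'^3 + 3\eps'^2 + 4\eps' \leq 8\eps'$, so setting $\eps' := \eps/8$ suffices; since $\eps = 1/\polylog(n)$, this choice is well-defined and lies in $(0,1]$ for sufficiently large $n$.

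The second step is to verify the round complexity. Plugging the Becker et al. algorithm into Theorem~\ref{thrmintro:-WeightedDirectedRadiusUpperBoundTwo} gives a total complexity of
\[
\tilde{O}\bigl(T(n,\eps',D) + D\bigr) \;=\; \tilde{O}\!\left(\tfrac{\sqrt{n}+D}{\eps'} + D\right).
\]
With $\eps' = \eps/8 = 1/\polylog(n)$, the factor $1/\eps'$ is polylogarithmic in $n$ and is absorbed into the $\tilde{O}$ notation, yielding the claimed bound of $\tilde{O}(\sqrt{n}+D)$ rounds.

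There is no genuine obstacle here; the corollary is essentially a bookkeeping exercise on top of Theorem~\ref{thrmintro:-WeightedDirectedRadiusUpperBoundTwo}. The only things to watch are (i) ensuring the simple inequality $\eps'^3 + 3\eps'^2 + 4\eps' \leq \eps$ is satisfied for the chosen $\eps'$ so that the output approximation factor stays within $(2+\eps)$, and (ii) confirming that the hidden $\mathrm{polylog}(n)$ overhead in $1/\eps'$ does not exceed what $\tilde{O}(\cdot)$ hides, which is immediate since $\eps = 1/\polylog(n)$.
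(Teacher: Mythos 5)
Your proposal is correct and takes exactly the same route as the paper: the paper derives this corollary by instantiating Theorem~\ref{thrmintro:-WeightedDirectedRadiusUpperBoundTwo} with the $(1+\eps')$-approximate weighted undirected SSSP algorithm of Becker et al.~\cite{becker_et_al:LIPIcs:2017:8003}, absorbing the polylogarithmic $1/\eps'$ factor into the $\tilde{O}$. Your explicit choice $\eps' = \eps/8$ and the check that $\eps'^3 + 3\eps'^2 + 4\eps' \leq \eps$ is just the bookkeeping the paper leaves implicit.
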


\begin{corollary}\label{corol:-eccentapprox2-intro}

\corolapproxtwo
\end{corollary}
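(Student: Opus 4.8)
The final statement to prove is Corollary~\ref{corol:-eccentapprox2-intro}, which asserts that for any $\eps = 1/\polylog(n)$, there is an algorithm computing a $(2+\eps)$-approximation to diameter, radius, and all eccentricities in $\too(\sqrt{n}D^{1/4}+D)$ rounds on nonnegative weighted, directed graphs.

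\medskip

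\noindent\textbf{Plan.} The plan is to instantiate Theorem~\ref{thrmintro:-WeightedDirectedRadiusUpperBoundTwo} (the SSSP-to-distance-parameters connection) with the known distributed $(1+\eps')$-approximate SSSP algorithm for weighted directed graphs. Concretely, I would first recall that Theorem~\ref{thrmintro:-WeightedDirectedRadiusUpperBoundTwo} says: given a $(1+\eps')$-approximate weighted directed SSSP algorithm $\sssp$ running in $T(n,\eps',D)$ rounds, one obtains a $(2+\eps'^3+3\eps'^2+4\eps')$-approximation to diameter, radius and all eccentricities in $\too(T(n,\eps',D)+D)$ rounds. Then I would plug in the Forster--Nanongkai \cite{DBLP:conf/focs/ForsterN18} directed $(1+\eps')$-approximate SSSP algorithm, which runs in $\too((\sqrt{n}D^{1/4}+D)/\eps')$ rounds. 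This immediately gives running time $\too((\sqrt{n}D^{1/4}+D)/\eps')$ and approximation ratio $2 + \eps'^3 + 3\eps'^2 + 4\eps'$.

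\medskip

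\noindent\textbf{Key steps.} First, given the target $\eps = 1/\polylog(n)$, choose $\eps' = \Theta(\eps)$ small enough that $\eps'^3 + 3\eps'^2 + 4\eps' \le \eps$; since for small $\eps'$ the dominant term is $4\eps'$, it suffices to take $\eps' = \eps/8$ (say), and one checks $\eps'^3 + 3\eps'^2 + 4\eps' \le 4\eps' + 4\eps'^2 \le \eps$ when $\eps' \le 1/4$ and $\eps'=\eps/8$, which holds for all large $n$ since $\eps = 1/\polylog(n) \to 0$. Note $\eps' = 1/\polylog(n)$ as well, so the SSSP algorithm's dependence on $1/\eps'$ is only polylogarithmic and gets absorbed into the $\too(\cdot)$ notation. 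Second, invoke Theorem~\ref{thrmintro:-WeightedDirectedRadiusUpperBoundTwo} with this $\sssp$ to obtain a $(2+\eps)$-approximation algorithm running in $\too(T(n,\eps',D)+D) = \too((\sqrt{n}D^{1/4}+D)/\eps' + D) = \too(\sqrt{n}D^{1/4}+D)$ rounds. Third, observe that the correctness guarantee of the underlying SSSP algorithm is with high probability (or deterministic, depending on the cited result), so the final algorithm succeeds w.h.p., matching the statement.

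\medskip

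\noindent\textbf{Main obstacle.} This corollary is essentially a direct substitution, so there is no deep obstacle — the real content lies in Theorem~\ref{thrmintro:-WeightedDirectedRadiusUpperBoundTwo} (the approximate pseudo-center machinery) and in the cited SSSP algorithm, both of which we may assume. The only mild subtlety to get right is the arithmetic bounding $\eps'^3+3\eps'^2+4\eps'$ by $\eps$: one must verify the chosen $\eps'$ lies in the regime where this inequality holds and simultaneously that $\eps'$ remains $1/\polylog(n)$ so that the $1/\eps'$ factor in the SSSP running time is swallowed by $\too(\cdot)$. A secondary point of care is confirming that the directed SSSP running time $\too((\sqrt{n}D^{1/4}+D)/\eps')$ from \cite{DBLP:conf/focs/ForsterN18} is stated for nonnegative weights and in the \cgst model, which it is, so no additional reduction is needed.
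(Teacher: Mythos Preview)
Your proposal is correct and matches the paper's approach exactly: the paper derives this corollary by plugging the Forster--Nanongkai $(1+\eps)$-approximate directed SSSP algorithm (running in $\tilde{O}((\sqrt{n}D^{1/4}+D)/\eps)$ rounds) into Theorem~\ref{thrmintro:-WeightedDirectedRadiusUpperBoundTwo}. You have additionally spelled out the $\eps' = \Theta(\eps)$ bookkeeping needed so that $\eps'^3+3\eps'^2+4\eps' \le \eps$ while $1/\eps'$ remains $\polylog(n)$, which the paper leaves implicit.
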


Using the exact SSSP algorithm of Chechik and Mukhtar \cite{10.1145/3382734.3405729} we obtain the following.

\begin{corollary}\label{corol:-eccentapprox3-intro}

\corolapproxthree
\end{corollary}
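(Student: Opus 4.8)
The plan is to simply invoke Theorem \ref{thrmintro:-WeightedDirectedRadiusUpperBoundTwo} with the right choice of the parameter $\eps$ and the right SSSP subroutine. Theorem \ref{thrmintro:-WeightedDirectedRadiusUpperBoundTwo} says that a $(1+\eps)$-approximate weighted, directed SSSP algorithm running in $T(n,\eps,D)$ rounds yields a $(2+\eps^3+3\eps^2+4\eps)$-approximate algorithm for diameter, radius, and all eccentricities in $\tilde O(T(n,\eps,D)+D)$ rounds on weighted, directed graphs. The key observation is that as $\eps\to 0$, the approximation factor $2+\eps^3+3\eps^2+4\eps\to 2$; so to get a clean $2$-approximation (rather than $2+o(1)$) one must be able to run \emph{exact} weighted directed SSSP, i.e. take $\eps=0$. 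This is exactly what the algorithm of Chechik and Mukhtar \cite{10.1145/3382734.3405729} provides: exact weighted SSSP in directed graphs in $\tilde O(\sqrt{n}D^{1/4}+D)$ rounds.

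Concretely, I would set $\eps = 0$ in Theorem \ref{thrmintro:-WeightedDirectedRadiusUpperBoundTwo}, so that the cubic, quadratic and linear error terms all vanish and the approximation factor becomes exactly $2 + 0 + 0 + 0 = 2$. With $\eps=0$, the hypothesized "$(1+\eps)$-approximation algorithm $\sssp$" is just an exact weighted directed SSSP algorithm, and its running time is $T(n,0,D) = \tilde O(\sqrt{n}D^{1/4}+D)$ by \cite{10.1145/3382734.3405729}. Plugging into the conclusion of the theorem gives an algorithm for $2$-approximate diameter, radius, and all eccentricities running in $\tilde O\bigl(\tilde O(\sqrt{n}D^{1/4}+D)+D\bigr) = \tilde O(\sqrt{n}D^{1/4}+D)$ rounds on nonnegative weighted, directed graphs with $n$ nodes and hop-diameter $D$, which is the claim.

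I do not expect any genuine obstacle here; the only points to check are bookkeeping ones. First, one must confirm that Theorem \ref{thrmintro:-WeightedDirectedRadiusUpperBoundTwo} is genuinely stated for all $\eps\ge 0$ (its statement reads "For any $\eps\geq 0$"), so the degenerate case $\eps=0$, corresponding to an exact SSSP oracle, is in scope. Second, one must check that the SSSP algorithm of \cite{10.1145/3382734.3405729} is indeed an exact algorithm for \emph{nonnegative weighted, directed} graphs in the \cgst model (not merely undirected or approximate), with the claimed $\tilde O(\sqrt{n}D^{1/4}+D)$ round complexity; this is what licenses the substitution. The mild subtlety — really the ``hard part,'' though it is not hard — is that the reduction in Theorem \ref{thrmintro:-WeightedDirectedRadiusUpperBoundTwo} may invoke the SSSP subroutine a polylogarithmic number of times (to build the approximate pseudo-center) and may perform additional $\tilde O(D)$-round aggregation steps; since all of these costs are already absorbed into the $\tilde O(\cdot)$ in the theorem's conclusion, no extra argument is needed, and the final round bound stands.
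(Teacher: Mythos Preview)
Your proposal is correct and follows exactly the same approach as the paper: the paper simply states that using the exact SSSP algorithm of \cite{10.1145/3382734.3405729}, which runs in $\tilde{O}(\sqrt{n}D^{1/4}+D)$ rounds, one obtains the corollary directly from Theorem~\ref{thrmintro:-WeightedDirectedRadiusUpperBoundTwo} with $\eps=0$. Your bookkeeping checks (that the theorem allows $\eps=0$, that the SSSP subroutine is exact and handles directed weighted graphs, and that the polylogarithmic overhead is absorbed into the $\tilde O$) are all appropriate and match what the paper implicitly relies on.
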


Regarding radius, the only previous result regarding the complexity of approximating the radius in the \cgst model is due to \cite{AbboudCK16}, in which they showed that for any $\varepsilon>0$, computing an $(3/2-\varepsilon)$-approximation to the radius in undirected, unweighted graphs requires $\Tilde{\Omega}(n)$ rounds. 
Abboud et al.~\cite{AbboudCK16} show that any algorithm computing an $(\frac{5}{3}-\varepsilon)$-approximation of all eccentricities requires $\Tilde{\Omega}(n)$ rounds as well. Having a complete understanding of the relationship between approximation ratio and the round complexity of computing unweighted, undirected radius remains an intriguing open problem.
As a step towards resolving this problem, we give a nearly full characterization of the approximation factor to round complexity mapping for radius in {\em weighted} or {\em directed} graphs in the \cgst model.

In Section \ref{sec:-hardnessOfApprox} we prove the following.
\begin{theorem}\label{thrmintro:-weightedDirectedRadiusLowerBoundTwo}
\twoapproxradius
\end{theorem}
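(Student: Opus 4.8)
The plan is to prove the lower bound by the standard reduction from two-party communication complexity to \cgst, in the style of Frischknecht, Holzer and Wattenhofer~\cite{FrischknechtHW12} and Abboud, Censor-Hillel and Khoury~\cite{AbboudCK16}, but with a \emph{weighted} (and, for the directed variant, \emph{oriented}) construction that pushes the gap between the two cases all the way to $2$. Concretely, I would build a family of graphs $G_{x,y}$ on $\Theta(n)$ vertices, partitioned as $V=V_A\cup V_B$, where the subgraph induced on $V_A$ together with the $V_A$-endpoints of the edges of the cut $C$ between $V_A$ and $V_B$ is determined by Alice's input $x$, and symmetrically for Bob and $y$; the cut $C$ is kept of polylogarithmic size via the usual ``bit-gadget'' trick, which multiplexes the many potential cross-edges through $O(\log n)$ shared gadget vertices. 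Then any $T$-round \cgst algorithm on $G_{x,y}$ is simulated by a protocol exchanging $O(T\cdot|C|\cdot\log n)=\tilde O(T)$ bits. The reduction will arrange that the radius of $G_{x,y}$ reveals the value of a Boolean function $f(x,y)$: for a weight parameter $W$, the radius equals $W$ when $f(x,y)=0$ and is at least $(2-\delta)W$ when $f(x,y)=1$, where $\delta>0$ can be taken an arbitrarily small constant below $\varepsilon$. Hence any $(2-\varepsilon)$-approximation algorithm for the radius decides $f$, so $T=\tilde\Omega(\mathrm{CC}(f))$; choosing $f$ with bounded-error randomized two-party communication complexity $\tilde\Omega(n)$ (for the chosen $|C|$) gives the claimed $T=\tilde\Omega(n)$ bound, which applies to randomized algorithms because the communication lower bound does.

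For the gadget I would use a set of \emph{center candidates} $c_1,\dots,c_m$, joined to one another at pairwise distance $o(W)$ by a clique of tiny-weight edges (so that no single ``hub'' vertex is created), together with \emph{target} vertices $t_1,\dots,t_m$ and an extra ``far'' vertex $\phi$ at distance $W$ from every $c_i$ and $2W$ from every $t_j$. The cross-structure encodes a matrix $M$ (a coordinatewise combination of Alice's and Bob's inputs) so that, for suitably tuned small weights, $d(c_i,t_j)=W$ when entry $(i,j)$ of $M$ is ``good'' and $d(c_i,t_j)=2W$ otherwise, with the whole graph connected and of hop-diameter $\mathrm{polylog}(n)$. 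Then $\mathrm{ecc}(c_i)=W$ exactly when row $i$ of $M$ is entirely good, so $\min_i\mathrm{ecc}(c_i)=W$ exactly when \emph{some} row of $M$ is entirely good, and otherwise every center has eccentricity $2W$; the vertex $\phi$ forces $\mathrm{ecc}(v)\geq 2W$ for every target $v$, and one checks that the remaining bit-gadget and auxiliary vertices also have eccentricity at least $(2-\delta)W$ in both cases, so that the radius is always realized at a center candidate. The function $f$ ``no row of $M$ is entirely good'' will then be a member of the Set-Disjointness family --- plain Set-Disjointness when $M$ lets one encode a single per-coordinate conjunction, or a composed, Tribes-like variant when the $\min$-over-$\max$ shape of the radius forces it (this is exactly the sort of new function this framework calls for) --- and I would either invoke or prove $\mathrm{CC}(f)=\tilde\Omega(n)$ by a direct-sum / information-complexity argument for (non-)disjointness. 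The weights are essential only for the $W$ versus $2W$ separation: in an unweighted graph the relevant distances are small integers and no gap beyond $3/2$ is attainable, which is why the unweighted radius lower bound of \cite{AbboudCK16} is only $(3/2-\varepsilon)$.

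The directed case is handled by orienting every edge of the same gadget consistently --- roughly, from center candidates toward targets and through the auxiliary structure, with the tiny-weight clique made bidirected --- so that the directed distances $d(c_i,t_j)$, and hence the out-eccentricities, reproduce the undirected quantities above; the identical reduction then yields $T=\tilde\Omega(n)$ for the (out-)radius of directed graphs, and in fact the directed construction can be made unweighted by exploiting the asymmetry of directed distances (though this is not needed for the statement).

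The step I expect to be the main obstacle is the gadget bookkeeping: ensuring that when $f(x,y)=1$ \emph{every single} vertex --- not only the designated center candidates, but also the targets, $\phi$, the bit-gadget vertices, and any vertices realizing the ``$2W$'' connections --- has eccentricity at least $(2-\delta)W$, while simultaneously (a) keeping the graph connected with polylogarithmic hop-diameter, (b) keeping the cut polylogarithmic so the simulation is lossless up to polylogarithmic factors, and (c) not disturbing the encoding of $f$. Naive attempts leak vertices of ``intermediate'' eccentricity --- a shared hub close to everything, or a midpoint of a long connection --- and such a vertex caps the achievable gap at $3/2$; removing all of them is the technical heart. A secondary difficulty is pinning down the precise communication problem and its lower bound in the needed regime, since plain Set-Disjointness only yields $(3/2-\varepsilon)$-hardness for the radius.
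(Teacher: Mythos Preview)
Your plan is sound in outline: you correctly identify the target gap ($W$ versus $2W$), the $\exists\forall$ shape of the radius that forces a Tribes-type function rather than plain Disjointness, and the main bookkeeping obstacle (ensuring every auxiliary vertex has eccentricity $\geq(2-\delta)W$). The paper follows the same overall scheme but makes quite different concrete choices in both cases. For the \emph{weighted} case it does \emph{not} use a bit-gadget: it builds four $N$-cliques on weight-$t$ edges plus two hub vertices, leaves the cut at size $\Theta(N)=\Theta(n)$, and reduces directly from Tribes on $N$ blocks of size $N$; the $\Omega(N^2)$ randomized lower bound for Tribes then gives $\Omega(N^2/(N\log N))=\Omega(n/\log n)$ rounds. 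The payoff is that the eccentricity bookkeeping becomes trivial --- every edge has weight $t$ or $1$, so any vertex outside $A^0\cup B^0$ is already at distance $\geq 2t$ from $A^1\cup B^1$ --- at the price of a large cut. For the \emph{directed} case the paper does not simply orient the weighted gadget; it uses a separate unweighted construction with directed paths of length $t$ in place of heavy edges, and reduces from a new function, \emph{Hitting Set Existence} (HSE: given $N$ Boolean vectors of dimension $O(\log N)$ on each side, is there an $a\in A$ with $a\cdot b\neq 0$ for all $b\in B$), for which it proves an $\Omega(N)$ randomized lower bound via a short reduction from Disjointness. Here the cut is $O(\log n)$, one edge per coordinate.

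One point in your plan would need adjusting if you pursued the bit-gadget route. With a polylogarithmic cut and $m=\Theta(n)$ center candidates, Alice and Bob can each encode only $\tilde O(n)$ bits into their sides, so the matrix $M$ cannot have $m^2$ independent entries as your description suggests; the encoding that actually fits is precisely HSE (each center and each target is a vector in $\{0,1\}^{O(\log n)}$, and $d(c_i,t_j)$ depends on whether $a_i\cdot b_j=0$). So your ``bit-gadget plus Tribes'' combination in fact resolves to ``bit-gadget plus HSE'', which is what the paper does for the directed case; for the weighted case the paper's large-cut-plus-Tribes route is simpler precisely because it dodges the auxiliary-vertex eccentricity analysis you flag as the crux.
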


A standard technique for proving lower bounds for the \cgst model, is to reduce it from 2-party communication complexity. In the context of the distance parameters discussed in this work, this framework was used by \cite{FrischknechtHW12} to show that any algorithm that distinguishes between networks with diameter 2 and 3 requires $\Tilde{\Omega}(n)$ rounds. Later, \cite{AbboudCK16} showed that this lower bound holds even when one considers sparse networks with only $O(n)$ edges (they also proved more results as discussed in the related work section). 
 
Many of the papers that employ this framework, reduce from either the Set Disjointness function, the Equality function, or the Gap Disjointness function \cite{DasSarma:2011:DVH:1993636.1993686,Censor-HillelD18,CzumajK18,BachrachCDELP19}. In this work, we enhance this framework by showing lower bounds using reductions from other functions, which were not used previously to obtain lower bounds for the \cgst model. Namely, in the proof of Theorem \ref{thrmintro:-weightedDirectedRadiusLowerBoundTwo}, we use the Tribes function, defined by Jayram et al. in \cite{DBLP:conf/stoc/JayramKS03}, and the Hitting Set Existence (HSE) function, which is a communication complexity variant of a problem introduced by Abboud et al. in \cite{Abboud:2016:AFP:2884435.2884463}. We elaborate upon this framework and the functions that we use in Section \ref{sec:-Prelim}.

The following is a corollary of Theorem \ref{thrmintro:-weightedDiameterLowerBoundAny} and Theorem \ref{thrmintro:-unweightedDirectedDiameterLowerBoundAny} which are stated below for the diameter, since any finite approximation to the radius, implies a finite approximation to the diameter, as $r\leq D\leq 2r$.

\begin{corollary}\label{thrmintro:-weightedDirectedRadiusLowerBoundAny}
\anyapproxradius
\end{corollary}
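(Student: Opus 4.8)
The plan is to obtain the bound by a round-preserving reduction from the \emph{diameter} lower bounds of Theorem~\ref{thrmintro:-weightedDiameterLowerBoundAny} (weighted, undirected) and Theorem~\ref{thrmintro:-unweightedDirectedDiameterLowerBoundAny} (unweighted, directed) to the radius problem. Since those theorems already assert an $\tilde{\Omega}(\sqrt{n}+D)$ bound, and the reduction changes neither the graph nor $n$ nor the hop-diameter $D$ (only the value returned by the algorithm is rescaled by a constant), the full bound --- \emph{including} the $D$ term --- transfers for free, so no separate argument for $\tilde{\Omega}(D)$ is needed. The weighted-directed case is then immediate, being a generalization of both.

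Consider first the weighted, undirected case. Suppose towards a contradiction that for some positive function $\alpha(n)$ there is an algorithm $A$ computing an $\alpha(n)$-approximation to the weighted radius in $o(\sqrt{n}+D)$ rounds. For every weighted undirected graph $G$, the triangle inequality applied through a center vertex gives $r(G)\le D(G)\le 2r(G)$. Running $A$ on $G$ returns a value $\tilde{r}$ with $r(G)\le \tilde{r}\le \alpha(n)\,r(G)$, and combining with $D(G)/2\le r(G)\le D(G)$ yields $D(G)\le 2\tilde{r}\le 2\alpha(n)\,D(G)$. Hence the algorithm ``run $A$ and output $2\tilde{r}$'' is a $\beta(n)$-approximation to the weighted diameter with $\beta(n)=2\alpha(n)$, still a positive function, running in $o(\sqrt{n}+D)$ rounds, contradicting Theorem~\ref{thrmintro:-weightedDiameterLowerBoundAny}, which holds for \emph{every} positive approximation function. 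This establishes the bound for weighted undirected (hence also weighted directed) graphs.

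For the (unweighted) directed case I would run the identical reduction on the hard instances underlying Theorem~\ref{thrmintro:-unweightedDirectedDiameterLowerBoundAny}. I expect this to be the only delicate point, and the main obstacle: the inequality $D\le 2r$ can fail for general directed graphs (e.g.\ a long directed path feeding into a vertex that reaches everything in one step has radius $O(1)$ but unbounded diameter), so one must either verify that the directed hard instances already satisfy $r\le D\le O(r)$ --- equivalently, that radius and diameter are both ``small'' on yes-instances and both ``large'' on no-instances of the underlying communication reduction --- or prepend a cheap gadget enforcing this while leaving $n$ and $D$ unchanged up to constants. Once that holds, the same computation as above yields a contradiction with the directed diameter lower bound, now with a different but still positive function $\beta(n)$. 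Everything outside of this directed-instance bookkeeping is a purely syntactic restatement of the two diameter theorems.
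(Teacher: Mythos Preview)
Your approach is the paper's: it states the corollary as a one-line consequence of the two diameter theorems via $r\le D\le 2r$. For the weighted (undirected) case your argument is complete and matches exactly.

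For the directed case you are actually more careful than the paper, and your caution is warranted: in the construction underlying Theorem~\ref{thrmintro:-unweightedDirectedDiameterLowerBoundAny}, if the SCSV subgraph $H$ happens to have exactly two connected components, then a node $u_H$ in the component \emph{not} containing the special vertex $v$ can reach every node of $G'$ (its own $H$-component directly, all of the $G$-layer via $(u_H,u_G)$, and the other $H$-component via the extra edge $(v_G,v_H)$), so the radius is finite while the diameter is infinite and $D\le 2r$ fails. The fix you anticipate is immediate: drop the extra edge $(v_G,v_H)$, i.e., revert to the construction of Theorem~\ref{thm:bichromatic_directed_diameter_lowerbound_any}. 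There no $G$-layer node can ever reach any $H$-layer node, and no $H$-layer node can reach an $H$-layer node in a different component of $H$, so whenever $H$ is disconnected \emph{every} vertex has infinite eccentricity. Hence the radius is finite iff the SCSV instance is a yes-instance, and any finite approximation of the directed radius solves SCSV in the same round budget.
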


\subparagraph*{Directed/Weighted Diameter.} In previous work, Holzer and Pinsker \cite{holzer_et_al:LIPIcs:2016:6597} showed a lower bound of $\Tilde{\Omega}(n)$ rounds for computing a $(2-\varepsilon)$-approximation of the diameter of a given weighted graph.
Shortly after, Becker et al. \cite{becker_et_al:LIPIcs:2017:8003} designed an algorithm that computes a $(2+o(1))$-approximation of weighted and directed diameter in $\Tilde{O}(\sqrt{n}D^{1/4}+D)$ rounds.
Such an algorithm makes one wonder, is there a smooth trade-off between the round complexity and the approximation ratio when going beyond a $2$-approximation, for either the directed or weighted variants? In other words, can one further reduce the round complexity if we are willing to settle for a worse approximation ratio? For  weighted diameter, this question was resolved by Lenzen et al. \cite{DBLP:journals/dc/LenzenPP19} in the negative, in the sense that the dependence on $n$ in the algorithm of \cite{becker_et_al:LIPIcs:2017:8003} is necessary (up to poly-logarithmic factors) for any approximation of the diameter in weighted or directed graphs. We give a proof of this result for completeness, and this allows us to more easily present a similar new result for the \emph{bi-chromatic} diameter case. The bi-chromatic diameter is a variant of the diameter problem that is discussed later.

\begin{theorem}\label{thrmintro:-weightedDiameterLowerBoundAny}
\anyapproxweighteddiameter
\end{theorem}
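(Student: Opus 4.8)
The plan is to establish two separate lower bounds, $\tilde{\Omega}(\sqrt n)$ and $\Omega(D)$, via hard instances that can be glued together, and to obtain both through the standard reduction from 2-party communication complexity. In that framework one fixes a partition $V = V_A \sqcup V_B$ of the vertices of the hard instance in which $G[V_A]$ depends only on Alice's input, $G[V_B]$ only on Bob's input, and a fixed cut $C$ of $\ell$ edges; a $T$-round \cgst algorithm (on $n=|V|$ nodes, with $O(\log n)$-bit messages) is then simulated by a public-coin protocol in which, each round, the endpoint of every cut edge forwards to the other side the $O(\log n)$-bit message it would send along that edge, yielding a protocol of total cost $O(T \cdot \ell \cdot \log n)$. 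Since Set Disjointness is insensitive to randomization (its randomized communication complexity on $K$ bits is $\Omega(K)$) and public coins can be fixed by Newman's theorem, the resulting round lower bounds hold for randomized algorithms as well.

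For the $\tilde{\Omega}(\sqrt n)$ bound I would reduce from Set Disjointness on $K=\Theta(n)$ coordinates. From an instance $(x,y)$ I would build an $n$-vertex weighted graph $G_{x,y}$ with two designated vertices $u \in V_A$ and $v \in V_B$ such that: if $x$ and $y$ are disjoint, the weighted diameter is at most a fixed value $d_0 = \polylog n$; and if some coordinate is selected by both $x$ and $y$, then the only $u$--$v$ route is a coordinate-indexed detour of weight exceeding $\alpha(n)\cdot d_0$ --- or, to avoid edge weights larger than $\poly(n)$ when $\alpha(n)$ is huge, that detour is simply absent, so $u$ and $v$ lie in different components and the diameter is infinite. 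An $\alpha(n)$-approximation algorithm must then distinguish the two cases (output at most $\alpha(n) d_0$ versus output more than $\alpha(n) d_0$, or $\infty$), so $T \cdot \ell \cdot \log n = \Omega(n)$. The crux is to do this with a cut of only $\ell = \Theta(\sqrt n)$ edges, which forces $T = \tilde{\Omega}(\sqrt n)$: I would group the $\Theta(n)$ coordinates into $\Theta(\sqrt n)$ blocks of $\Theta(\sqrt n)$ coordinates, route all of a block's information through a single cut edge, and add ``connector'' edges of very large weight that give every pair of vertices a $\polylog n$-hop path but never lie on any light weighted path --- this decoupling of hop-distance from weighted distance, which is exactly what weights buy us, is what keeps the hop-diameter of $G_{x,y}$ polylogarithmic despite routing $\Theta(\sqrt n)$ coordinates through each cut edge.

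For the $\Omega(D)$ bound I would use the elementary observation that the weighted diameter can hinge on a single edge located $\Omega(D)$ hops away from a node that must output the answer: take a path on $\Theta(D)$ vertices with unit-weight edges and let one end edge be present or absent; in the first case the diameter is $\Theta(D)$, in the second the graph is disconnected and the diameter infinite, yet a node at the far endpoint receives no information distinguishing the two cases for $D-1$ rounds and so cannot produce an $\alpha(n)$-approximation. Gluing the two constructions --- prepending such a path of length $\Theta(D)$ to the $\sqrt n$-gadget and assigning the heavy connector weights so that the hop-diameter becomes $\Theta(D)$ (when $D$ exceeds the polylogarithmic hop-diameter of the bare gadget; for smaller $D$ the $\sqrt n$ term already dominates $\tilde{\Omega}(\sqrt n + D)$) --- yields, for the relevant ranges of $n$ and $D$, a family of $n$-vertex weighted graphs of hop-diameter $\Theta(D)$ on which $\alpha(n)$-approximating the weighted diameter requires $\tilde{\Omega}(\sqrt n + D)$ rounds. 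The bi-chromatic version follows by coloring: put $u$ and its side red and $v$ and its side blue so that the diameter-determining pair is bi-chromatic and, in the disjoint case, every red--blue pair is within $d_0$; color the two far endpoints of the $D$-path with different colors analogously.

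I expect the main obstacle to be the design and verification of the $\sqrt n$-gadget, where three requirements must be met simultaneously: (i) $\Theta(n)$ input bits are packed into a cut of only $\Theta(\sqrt n)$ edges; (ii) a clean gap is produced, the diameter provably bounded by $d_0$ in the disjoint case and provably blown up (to more than $\alpha(n) d_0$, or to infinity) by any single shared coordinate; and (iii) the hop-diameter stays polylogarithmic in spite of (i) --- which is precisely what makes the $\sqrt n$ and the $D$ terms additive rather than one swallowing the other.
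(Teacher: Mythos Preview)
Your plan is in the right spirit but takes a much harder road than the paper. You propose to build, from scratch, a Set Disjointness gadget with a $\Theta(\sqrt{n})$-edge cut encoding $\Theta(n)$ input bits, keep its hop-diameter polylogarithmic via heavy ``connector'' edges, and then separately handle the $\Omega(D)$ term by gluing on a long path. You correctly identify the $\sqrt{n}$-gadget as the main obstacle --- and indeed, carrying that out would essentially reprove the Das Sarma et al.\ lower bound for verification problems.

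The paper sidesteps all of this by reducing, \emph{within the \cgst model}, from the Spanning Connected Subgraph Verification (SCSV) problem, which already carries a $\tilde{\Omega}(\sqrt{n}+D)$ lower bound as a black box. Given an SCSV instance $(G,H)$, simply assign weight~$1$ to every edge of $H$ and weight $n\cdot\alpha(n)$ to every edge of $E\setminus E_H$; this requires no communication at all. If $H$ is spanning and connected then every eccentricity is at most $n-1$, and otherwise some pair of vertices is separated in $H$ and hence at weighted distance at least $n\cdot\alpha(n)$. Any $\alpha(n)$-approximation of the weighted diameter must distinguish the two cases, so it inherits the SCSV lower bound verbatim. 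The bi-chromatic variant follows by putting a single arbitrary vertex in $S$ and the rest in $T$.

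So your approach could be made to work, but it re-derives from first principles a result that is already available; the paper's two-line reduction from SCSV delivers both the $\sqrt{n}$ and the $D$ terms at once and eliminates precisely the gadget design you flag as the hard part.
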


\begin{theorem}\label{thrmintro:-unweightedDirectedDiameterLowerBoundAny}
\anyapproxdirecteddiameter
\end{theorem}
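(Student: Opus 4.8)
The plan is to exploit the simple fact that a directed graph has finite diameter exactly when it is strongly connected, which makes \emph{any} finite-ratio approximation of the directed diameter at least as hard as deciding strong connectivity. Concretely, an algorithm that outputs an $\alpha(n)$-approximation of the diameter must output a finite value in $[\mathrm{diam},\alpha(n)\cdot\mathrm{diam}]$ when $\mathrm{diam}<\infty$, and can only output ``$\infty$'' when the graph is not strongly connected, since no finite value approximates $\infty$ within any factor. Thus the output reveals strong connectivity, and it suffices to show that deciding strong connectivity of a directed $n$-node graph of hop-diameter $D$ takes $\tilde{\Omega}(\sqrt n+D)$ rounds, even for randomized algorithms.

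I would derive this from the known $\tilde{\Omega}(\sqrt n+D)$ hardness of deciding connectivity of an \emph{undirected} graph, which follows from the standard reduction from Set Disjointness of \cite{DasSarma:2011:DVH:1993636.1993686} (the hard family is parameterized over the whole range of $D$, and the bound holds against randomized algorithms). Given an undirected graph $G$, form $\vec G$ by replacing every edge $\{u,v\}$ with the two arcs $u\to v$ and $v\to u$; then $\vec G$ is strongly connected iff $G$ is connected, $\vec G$ has the same node set and hop-diameter, and --- since $\mathsf{CONGEST}$ communication is along edges in both directions --- a $T$-round algorithm deciding strong connectivity of $\vec G$ is literally a $T$-round algorithm deciding connectivity of $G$. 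Composing with the observation above proves the theorem, including the additive $D$.

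For a self-contained proof in the 2-party-communication style used elsewhere in the paper (and paralleling the proof of Theorem~\ref{thrmintro:-weightedDiameterLowerBoundAny}), one reduces directly from Set Disjointness on a universe of $\Theta(n)$ elements, whose randomized communication complexity is $\Omega(n)$. From $X$ (Alice) and $Y$ (Bob) one builds a directed $n$-node graph $G(X,Y)$ with a vertex partition $V=V_A\cup V_B$ such that the arcs inside $V_A$ depend only on $X$, those inside $V_B$ only on $Y$, only $\tilde{O}(\sqrt n)$ arcs cross the cut, and $G(X,Y)$ is strongly connected iff $X\cap Y\neq\emptyset$ (or iff $X\cap Y=\emptyset$; the argument is symmetric). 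This is the ``lane/highway'' construction of \cite{DasSarma:2011:DVH:1993636.1993686} that also underlies the weighted construction of Theorem~\ref{thrmintro:-weightedDiameterLowerBoundAny}, except that the heavy edges used there to keep a detour long are replaced by \emph{absent directed} arcs --- so the detour becomes an actual non-reachability --- and a single ``return'' arc $t\to s$ is added to turn $s$-to-$t$ reachability into strong connectivity. A $T$-round $\alpha(n)$-approximation algorithm then compiles, via the usual cut simulation, into an $O(T\sqrt n\log n)$-bit Set-Disjointness protocol, forcing $T=\tilde{\Omega}(\sqrt n)$; using $\Theta(n/(\sqrt n+D))$ lanes of length $\Theta(\sqrt n+D)$ instead gives the additive $D$.

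The conceptual step is tiny, so the only real work is in the self-contained route: designing the lane gadget so that the cut stays $\tilde{O}(\sqrt n)$, \emph{and} strong connectivity matches the disjointness answer, \emph{and} the diameter in the reachable case is finite and $\tilde{O}(\sqrt n)$. The tension --- the main obstacle --- is that a single arc crossing a lane factorizes ``$s$ reaches $t$ through this lane'' into one bit about $X$ and one bit about $Y$, so an intersection test cannot be encoded across one crossing arc and must be carried along the length-$\tilde\Theta(\sqrt n)$ lane, which is exactly the mechanism in the construction being invoked. It is also worth fixing at the outset the (forced) conventions that an $\alpha(n)$-approximation of $\infty$ is $\infty$ and that the algorithm must run on all directed inputs, not only strongly connected ones; without them the statement is vacuous for super-polynomial $\alpha$, since an $n$-node unweighted strongly connected digraph has diameter at most $n-1$.
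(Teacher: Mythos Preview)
Your opening observation --- that any finite-factor approximation of the directed diameter distinguishes finite from infinite diameter and hence decides strong connectivity --- is correct and is exactly the reduction the paper uses. The gap is in how you obtain the $\tilde\Omega(\sqrt n+D)$ lower bound for strong connectivity itself.

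Your first route invokes ``hardness of deciding connectivity of an undirected graph'' and then orients every edge both ways. But in the \cgst model the communication network \emph{is} the input graph and is connected by assumption; there is no hard ``is the network connected?'' problem. What \cite{DasSarma:2011:DVH:1993636.1993686} shows hard is SCSV: deciding whether a \emph{marked subgraph} $H$ of the connected network $G$ is spanning and connected. Replacing every edge of $G$ by two opposite arcs yields a digraph that is always strongly connected and encodes nothing. Keeping only the $H$-arcs does not work either: the underlying communication network on which the directed-diameter algorithm runs would then be $H$, which need not be connected and whose hop-diameter is not the $D$ of the SCSV instance. And orienting the non-$H$ edges unidirectionally can still create strong connectivity even when $H$ is disconnected. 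In short, the step ``strong connectivity of a directed \cgst instance $\Leftrightarrow$ $H$ is spanning and connected, while the communication network remains $G$'' is nontrivial, and it is precisely what your sketch omits.

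The paper supplies this step with a short gadget (building on its proof of Theorem~\ref{thm:bichromatic_directed_diameter_lowerbound_any}): duplicate each $v$ into $v_G,v_H$; put all of $G$ bidirectionally on the $G$-copies (so the network stays connected with hop-diameter $O(D)$), put $H$ bidirectionally on the $H$-copies, add one-way arcs $v_H\!\to v_G$ for every $v$, and a single back-arc $v_G\!\to v_H$ for one fixed $v$. The resulting digraph is strongly connected iff $H$ is spanning and connected, and each original node simulates its two copies with constant overhead. Your second, ``self-contained'' route would ultimately need to rebuild both this separation of network-connectivity from problem-connectivity \emph{and} the Das~Sarma lane construction --- which you yourself flag as ``the only real work.'' A minor misattribution along the way: the paper's proof of Theorem~\ref{thrmintro:-weightedDiameterLowerBoundAny} is already a reduction from SCSV (set non-$H$ weights to $n\alpha(n)$), not a direct two-party lane construction.
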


To prove these theorems we reduce from the problem of Spanning Connected Subgraph Verification (SCSV) to approximating these parameters. The SCSV problem is known to admit the above lower bound due to Das Sarma et al.~\cite{DasSarma:2011:DVH:1993636.1993686}. The key challenge is to construct a reduction in a manner that can be efficiently simulated in \cgst. 
The proofs are given in Section \ref{ssec:any_approx_lower_bound}.

\subparagraph*{Undirected and Unweighted Diameter, Radius and Eccentricities.} 
Abbout et al. \cite{AbboudCK16} show that for any $\varepsilon>0$, any algorithm computing an $(\frac{3}{2}-\varepsilon)$-approximation of diameter or radius in unweighted undirected graphs has round complexity $\Tilde{\Omega}(n)$. Furthermore, any algorithm computing an $(\frac{5}{3}-\varepsilon)$-approximation to all eccentricities has round complexity $\Tilde{\Omega}(n)$. For upper bounds, the state of art for diameter approximation is an algorithm by Holzer et al. \cite{HolzerPRW14}, computing a $3/2$-approximation in $\Tilde{O}(\sqrt{n\log n}+D)$ rounds.
Fully understanding the mapping of approximation ratios in the range $[\frac{3}{2},2)$ for diameter and radius, and in the range $(\frac{5}{3},3)$ for all eccentricities, to their respective correct round complexity in the \cgst model remains open. As a step towards resolving this open problem, we present a simple distributed implementation of a sequential approximation algorithm of Cairo et al. \cite{Cairo:2016:NBA:2884435.2884462} for diameter, radius and eccentricities with the following parameters.

\begin{theorem}\label{thm:approx_unweighted_undirected}
\approxunweightedundirected
\end{theorem}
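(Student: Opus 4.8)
The plan is to implement, essentially verbatim, the sequential approximation algorithm of Cairo, Grossi and Rizzi~\cite{Cairo:2016:NBA:2884435.2884462} in the \cgst model; since that paper already establishes the ratios $(2-\frac{1}{2^k})$ for diameter and radius and $(3-\frac{4}{2^k+1})$ for all eccentricities, the only thing left to prove is that the algorithm can be run in $\too(n^{1/(k+1)}+D)$ rounds (we may assume $k=O(\log n)$, since $n^{1/(k+1)}=O(1)$ for larger $k$ and the ratios have already converged to $2$ and $3$). At a high level their algorithm runs $O(k)$ phases; a phase performs a BFS from a ``witness'' vertex, carves out the ball of the $s$ nearest vertices to it (with $s$ following a geometric schedule of powers $n^{i/(k+1)}$, ties broken by identifier), samples $\too(n/s)$ vertices intended to hit that ball, and uses the sample to locate the next witness as a vertex of the ball that is far from the sampled set; only the innermost, size-$n^{1/(k+1)}$ ball, together with the $O(k)$ witnesses, is ever explored by a BFS from \emph{every} one of its vertices. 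In total a full BFS is launched from only $\too(n^{1/(k+1)})$ vertices, consistent with the $\too(m\,n^{1/(k+1)})$ sequential running time in which each BFS costs $\Theta(m)$.

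The distributed implementation then composes three standard \cgst primitives, all within the target complexity. (i) Aggregation and broadcast along a fixed BFS tree of $G$ --- sums, maxima, minima, and the distance histogram of a given vertex --- in $O(D)$ rounds. (ii) Source detection from an arbitrary vertex set $X$: in $O(D)$ rounds every $v$ learns $d(v,X)=\min_{x\in X}d(v,x)$; this is what we use for the (possibly large) sampled sets $Q$, so that the next witness (the vertex of the current ball maximizing $d(\cdot,Q)$) and the induced candidate estimate $\max_v d(v,Q)$ are obtained by one further aggregation. (iii) Pipelined multi-source BFS from a \emph{small} set $X$ with $|X|=\too(n^{1/(k+1)})$: in $O(|X|+D)$ rounds every $v$ learns the whole vector $(d(v,x))_{x\in X}$; this is used only for the innermost ball and the witnesses. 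Carving the ``$s$ nearest vertices to $w$'' is done by one BFS from $w$, aggregating the distance histogram to $w$ so that $w$ computes the cutoff distance and the tie-breaking identifier rank and broadcasts them, after which every vertex knows its ball membership, all in $O(D)$ rounds; the random sample $Q$ is produced by independent local coin flips at rate $\Theta((\log n)/s)$ with no communication, and a Chernoff bound together with a union bound over the $O(nk)$ candidate balls (one per possible center and per relevant radius in the schedule) guarantees, w.h.p.\ and simultaneously for all phases, that $|Q|=\too(n/s)$ and that $Q$ meets the corresponding ball --- exactly the event the analysis of~\cite{Cairo:2016:NBA:2884435.2884462} relies on.

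Summing the $O(k)$ phases gives $\too(kD)+\too(n^{1/(k+1)}+D)=\too(n^{1/(k+1)}+D)$ rounds. The diameter estimate is the maximum of the collected candidate values (witness eccentricities, innermost-ball eccentricities, and the $\max_v d(v,Q)$ terms), obtained by one more aggregation and broadcast; for radius and all eccentricities we apply the symmetric variants of~\cite{Cairo:2016:NBA:2884435.2884462}, and for the all-eccentricities version no extra communication is needed, since each vertex's estimate is a function of the labels $d(v,w_j)$, $d(v,Q_j)$ and $d(v,u)$ for $u$ in the innermost ball that $v$ has already gathered, together with the globally broadcast eccentricities of the witnesses and of the innermost-ball vertices.

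The step I expect to be the real obstacle is the bookkeeping that pins down which queries the sequential algorithm actually makes: one must verify that full per-source distance vectors are only ever needed from source sets of size $\too(n^{1/(k+1)})$, and that every larger object appearing in the algorithm --- the big balls and the hitting sets for them --- is accessed only through min-distances $d(\cdot,Q)$ and scalar statistics, which cost $O(D)$. A naive reading that ran a pipelined BFS from a size-$\too(n^{i/(k+1)})$ hitting set in some phase would blow the budget. A minor accompanying point is to check that the adaptivity of the ball centers does not compromise the sampling argument, which is why the union bound above ranges over all potential balls rather than only the ones actually chosen.
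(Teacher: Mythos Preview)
Your plan is essentially the paper's: implement Cairo--Grossi--Rizzi in \cgst, defer the approximation guarantees to that paper, and argue only the round complexity via multi-source BFS and tree aggregations. The high-level structure and the primitives you list match.

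The one substantive divergence is exactly the point you flag as ``the real obstacle.'' You worry that a phase's hitting set may have size $\tilde{O}(n^{i/(k+1)})$ for some $i>1$, so you restrict yourself to a single source-detection pass $d(\cdot,Q)$ per phase and reserve full per-source BFS for the witnesses and the innermost ball only. In the paper's implementation this worry simply does not arise: the sample $S_i$ is drawn from the current ball $W_i$ (not from $V$) at a fixed rate $q\log n/n$ with $q=(n/\log n)^{1/(k+1)}$, so $|S_0|=\tilde{O}(n^{1/(k+1)})$ and the later $|S_i|$ decay geometrically; the final $S_k=W_k$ also has size $\tilde{O}(n^{1/(k+1)})$. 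Hence the total number of BFS sources --- all witnesses together with \emph{every} element of \emph{every} $S_i$ --- is $\tilde{O}(n^{1/(k+1)})$, and the paper just runs the Lenzen--Peleg $O(|X|+D)$ multi-source BFS on the whole collection. This dissolves the bookkeeping issue you anticipate and is simpler than your two-tier scheme.

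Your workaround has a potential cost you should be aware of. For the all-eccentricities guarantee, the paper's estimate at a generic vertex $v$ is $\max_s d(v,s)$ ranging over \emph{all} BFS sources $s\in\bigcup_i S_i\cup\{w_j\}$, which needs the individual distances, not only the set distance $d(v,Q_j)=\min_{q\in Q_j}d(v,q)$. Replacing $\max_{q\in Q_j} d(v,q)$ by $d(v,Q_j)$ (and, for diameter, $\max_{q\in Q_j}\mathrm{ecc}(q)$ by $\max_v d(v,Q_j)$) gives a smaller quantity, and you would have to re-verify that the Cairo--Grossi--Rizzi analysis still delivers the stated ratios under that substitution. The paper sidesteps this entirely by doing full BFS from every sample element, which the size bound above makes affordable.
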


\subparagraph*{Bi-chromatic Diameter and Radius.} To the best of our knowledge, no previous results regarding bi-chromatic distance parameters are known in distributed settings. Roughly speaking, these variants are defined using only distances between pairs of nodes in $S\times T$ where $S,T\subseteq V, T=V\setminus S$. $D_{ST}$,$R_{ST}$ respectively denote the $ST$-diameter $\max_{s\in S,t\in T} d(s,t)$ and the $ST$-radius $\min_{s\in S}\max_{t\in T} d(s,t)$ (also see Section \ref{subsec:prelim}).
In the following, proven in Section \ref{subsec:ApxST}, $T(SSSP)$ refers to the distributed complexity of exact weighted SSSP.

\begin{theorem}\label{thm:approx_bichro_diameter_unweighted_intro}

\approxupperbichromaticdiameter
\end{theorem}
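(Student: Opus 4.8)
The plan is to run in \cgst a distributed simulation of a near-linear-time sequential $\frac53$-approximation for $ST$-diameter, in the spirit of the $\frac32$-approximation of the plain diameter of Holzer et al.\ \cite{HolzerPRW14} and of the sequential $ST$-diameter algorithms of Backurs et al.\ \cite{Backurs:2018:TTA:3188745.3188950}. The algorithm splits into two regimes, decided after a count-and-broadcast in $O(D)$ rounds. If $\min(|S|,|T|)\le\sqrt n$, say $|T|\le\sqrt n$, run a pipelined multi-source BFS from all of $T$; then every node knows $d(\cdot,t)$ for all $t\in T$, and a convergecast computes $\max_{s\in S}d(s,t)$ for every $t\in T$, so we output $D_{ST}$ exactly in $\too(\sqrt n+D)$ rounds. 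The interesting regime is $|S|,|T|>\sqrt n$, where we run the following core routine.

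Let $s_0$ be the minimum-ID vertex of $S$; BFS from $s_0$, and by a convergecast find the farthest $T$-vertex $t_0$, setting $\ell_1:=d(s_0,t_0)$. BFS from $t_0$; find the farthest $S$-vertex $s_1$, set $\ell_2:=d(t_0,s_1)$, and identify the witness vertex $w^\star$ prescribed by the sequential analysis (a far vertex of the BFS tree from $t_0$). Let $W$ be the set of $\lceil\sqrt n\rceil$ vertices minimizing $(d(\cdot,w^\star),\mathrm{ID}(\cdot))$, obtained by a pipelined convergecast carrying the $\lceil\sqrt n\rceil$ smallest such pairs, then a broadcast. Concurrently, each node joins a sample $R$ independently with probability $\Theta(\tfrac{\sqrt n\log n}{n})$, so that w.h.p.\ $|R|=\too(\sqrt n)$ and $R$ meets the set of $\sqrt n$ nearest vertices of every node. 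Run one pipelined multi-source BFS from $W\cup R$, and then by a single convergecast over a fixed BFS tree compute, for every source $x\in W\cup R$, the value $e_x$ equal to $\max_{t\in T}d(x,t)$ if $x\in S$ and to $\max_{s\in S}d(x,s)$ if $x\in T$. Output $D_{ST}^*:=\max\{\ell_1,\ell_2,\max_{x\in W\cup R}e_x\}$.

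For the round complexity: a single-source BFS together with a ``find and broadcast the farthest node of a given color'' step costs $O(D)$ rounds, and the routine uses $O(1)$ of them; selecting the $\lceil\sqrt n\rceil$ closest vertices to a fixed vertex is a convergecast of $\lceil\sqrt n\rceil$ $O(\log n)$-bit items plus a broadcast, i.e.\ $\too(\sqrt n+D)$; the multi-source BFS from $k=\too(\sqrt n)$ sources pipelines $k$ BFS waves, with each edge forwarding at most one $(\text{source},\text{distance})$ pair per source, for $\too(k+D)$ rounds; and the per-source eccentricity computation is $k$ pipelined max-convergecasts, again $\too(k+D)$. The total is $\too(\sqrt n+D)$ with $O(\log n)$-bit messages, and the randomized guarantees hold w.h.p.\ by Chernoff and a union bound.

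Correctness has an easy half and a hard half. The bound $D_{ST}^*\le D_{ST}$ is immediate, since every quantity output is the length of a shortest path between a vertex of $S$ and a vertex of $T$. For the lower bound let $h:=D_{ST}$, realized by $a\in S$, $b\in T$. If $\ell_1$ or $\ell_2$ is at least $\frac{3h-6}{5}$ we are done; otherwise every $T$-vertex is within $\frac{3h-6}{5}$ of $s_0$ and every $S$-vertex within $\frac{3h-6}{5}$ of $t_0$, whence $d(s_0,b),d(t_0,a)<\frac{3h-6}{5}$ and thus $d(s_0,a),d(t_0,b)>\frac{2h+6}{5}$: each diametral endpoint is close to one of $s_0,t_0$ and far from the other. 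One then argues, using the ball $W$ around $w^\star$ together with $R$ --- $R$ handles the case where the $\sqrt n$-neighborhood of $a$ (or of $b$) is dense, $W$ the case where it is sparse --- that some source $x\in W\cup R$ is a $T$-vertex close to $a$ or an $S$-vertex close to $b$, so that $e_x\ge\frac{3h-6}{5}$; the additive $-\frac{6}{5}$ absorbs the two unit slacks incurred when replacing $a,b$ by nearby BFS sources. The main obstacle is exactly this last step --- choosing $w^\star$ (and, if the sequential argument needs it, a second reference vertex and ball) so that the case analysis closes with ratio $\frac53$ rather than some weaker constant such as $2$; this is the technical heart and mirrors the sequential $ST$-diameter argument of \cite{Backurs:2018:TTA:3188745.3188950}, re-derived so that every primitive it invokes (BFS, $\sqrt n$-nearest sets, color-restricted eccentricity aggregation) is one of the \cgst building blocks above. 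A secondary, more mechanical difficulty is scheduling the multi-source BFS and the $k$ simultaneous eccentricity convergecasts so that congestion does not push the round count past $\too(\sqrt n+D)$.
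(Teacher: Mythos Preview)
Your proposal has the right distributed scaffolding (pipelined multi-source BFS, convergecasts, sampling at rate $\tilde\Theta(n^{-1/2})$), and the round-complexity accounting is fine. The problem is in the correctness argument, which you explicitly leave as ``the technical heart'' without filling it in --- and the structure you sketch is not quite enough to close it at $5/3$.

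Two concrete issues. First, the $5/3$-approximation for \emph{bi-chromatic} diameter is not in Backurs et al.\ \cite{Backurs:2018:TTA:3188745.3188950}; that paper gives a $3$-approximation for general $ST$-diameter and proves it is tight. The $5/3$ bound you need is from Dalirrooyfard et al.\ \cite{DBLP:conf/icalp/DalirrooyfardW019a}, and it is the latter that the paper implements. Second, and more substantively, your ``$R$ handles dense, $W$ handles sparse'' dichotomy is the plain-diameter $3/2$ mechanism and does not survive the color constraint. If the random sample $R$ lands a vertex $r$ close to the diametral endpoint $a\in S$, but $r$ itself lies in $S$, then the quantity $e_r=\max_{t\in T}d(r,t)$ you compute is useless: both $a$ and $r$ are in $S$, and nothing forces $d(r,b)$ to be large. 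The same obstruction applies to $W$. The fix in \cite{DBLP:conf/icalp/DalirrooyfardW019a} --- and in the paper's proof --- is to take \emph{two} samples, $Z\subseteq S$ and $X\subseteq T$, and for each sampled $t\in X$ also run BFS from $s(t)$, the closest $S$-vertex to $t$ (and symmetrically $t(s)$ for $s$ in the ball around $w$). This ``closest opposite-color vertex'' step is what lets the case analysis produce a source of the correct color within $2D_{ST}/5$ of a diametral endpoint, and without it the argument stalls at a $2$-approximation rather than $5/3$. Your parenthetical ``if the sequential argument needs it, a second reference vertex and ball'' is on the right track, but you also need the opposite-color-neighbor BFS sources, and the five-way case split ($D_1,\dots,D_5$) that the paper spells out.
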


\begin{theorem}\label{thm:approx_bichro_diameter_weighted_intro}
\approxweightedupperbichromaticdiameter
\end{theorem}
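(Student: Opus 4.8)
The plan is to reduce the problem to two calls of exact weighted SSSP plus $O(D)$ rounds of standard aggregation, where throughout $D$ denotes the hop-diameter. We first identify, in $O(D)$ rounds, a minimum-weight edge $(a,b)$ crossing the bipartition, with $a\in S$, $b\in T$ and $w(a,b)=W$ (such an edge exists whenever $D_{ST}$ is finite, i.e.\ $G$ is connected): build a BFS tree from an arbitrary root in $O(D)$ rounds, let every node report via convergecast the minimum weight of an incident edge whose other endpoint has the opposite color (breaking ties by endpoint identifiers), and broadcast the winning edge $(a,b)$ together with $W$ to all nodes.

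We then run the exact weighted SSSP algorithm twice, once with source $a$ and once with source $b$, so that afterwards every node $v$ knows both $d(a,v)$ and $d(b,v)$. One more convergecast over the BFS tree computes
\[
E_a=\max_{t\in T} d(a,t),\qquad E_b=\max_{s\in S} d(b,s),
\]
and we output $D^*:=\max\{E_a,E_b\}$ (broadcast to all nodes). The non-SSSP steps take $O(D)$ rounds; since exact weighted SSSP already requires $\Omega(D)$ rounds, the total round complexity is $O(T(SSSP))$, as claimed.

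Correctness is short. Upper bound: for every $t\in T$ the pair $(a,t)\in S\times T$, so $d(a,t)\le D_{ST}$ and hence $E_a\le D_{ST}$; symmetrically $E_b\le D_{ST}$ since $b\in T$; thus $D^*\le D_{ST}$. Lower bound: pick $s^*\in S$, $t^*\in T$ with $d(s^*,t^*)=D_{ST}$. By the triangle inequality $D_{ST}\le d(s^*,a)+d(a,t^*)$, where $d(a,t^*)\le E_a\le D^*$ because $t^*\in T$, and
\[
d(s^*,a)\le d(s^*,b)+w(a,b)=d(s^*,b)+W\le E_b+W\le D^*+W,
\]
because $s^*\in S$. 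Adding these gives $D_{ST}\le 2D^*+W$, i.e.\ $D^*\ge \frac{D_{ST}}{2}-\frac{W}{2}$.

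The only real obstacle is choosing the two SSSP sources correctly. Running SSSP from an arbitrary vertex, or from an arbitrary vertex and then from the $T$-vertex farthest from it, only yields a $3$-approximation of $D_{ST}$, because the intermediate vertex used as a pivot can itself have $S$--$T$ eccentricity as large as $D_{ST}$. Anchoring both sources at the endpoints of a \emph{minimum-weight} crossing edge is exactly what makes the two uses of the triangle inequality collapse to a factor $2$ with only a single additive $W$; and taking the maximum of the two separately computed one-to-all eccentricities --- rather than, say, routing one SSSP from a virtual vertex joined to both $a$ and $b$ --- is what keeps the additive term at $W$ instead of $2W$. Everything else is routine CONGEST bookkeeping.
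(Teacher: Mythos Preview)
Your proposal is correct and follows essentially the same approach as the paper: find the minimum-weight crossing edge $(a,b)$ with $a\in S$, $b\in T$, run exact SSSP from each endpoint, and output $\max\{\max_{t\in T}d(a,t),\max_{s\in S}d(b,s)\}$; the paper cites Dalirrooyfard et al.\ for the sequential algorithm and gives the same distributed implementation. Your correctness argument via the triangle inequality (bounding $d(s^*,a)\le d(s^*,b)+W$) is exactly the one the paper alludes to.
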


We remark that using very similar algorithms to the ones of \cref{thm:approx_bichro_diameter_unweighted_intro} and \cref{thm:approx_bichro_diameter_weighted_intro}, one can obtain the following results, whose proofs we omit due to similarity to the main ideas in the proofs we provide for the above two theorems. 
\begin{remark}\label{remark:additional_results_unweighted}
There are algorithms with complexity $\too(\sqrt{n}+D)$ that given an undirected, unweighted graph $G=(V,E)$, and sets  $S,T\subseteq V$, compute w.h.p. the following.
\begin{enumerate}
    \item  A value $R_{ST}^*$ such that $R_{ST}\leq R_{ST}^*\leq\frac{5R_{ST}}{3}+\frac{5}{3}$, in the case that $S=V\backslash T$.
    \item A $2$-approximation to all $ST$-eccentricities.
    \item A $2$-approximation to $R_{ST}$.
\end{enumerate}
\end{remark}

\begin{remark}\label{remark:additional_results_weighted}
There are algorithms with complexity $T(SSSP)$ that given an undirected graph $G=(V,E)$, and sets $S,T\subseteq V$, compute the following.
\begin{enumerate}
    \item A value $R_{ST}^*$ such that $R_{ST}\leq R_{ST}^*\leq 2R_{ST}+W$, in the case that $S=V\backslash T$. Here $W$ is the minimum edge weight in $S\times T$.
    \item A $3$-approximation to all $ST$-eccentricities.
    \item A $3$-approximation to $R_{ST}$.
\end{enumerate}
\end{remark}

We complement these upper bounds with several lower bounds. We show that in the weighted case, one cannot hope to do better than a $\frac{5}{3}$-approximation for bi-chromatic diameter with $O(n^{1-\epsilon})$ rounds for some $\epsilon>0$.
Additionally, as a step towards realizing the complexity of finding a better than 2-approximation for directed diameter, we show that for \emph{bi-chromatic} diameter, in which one is tasked with finding the largest distance between a pair of nodes in different sets of a given partition of the graph, finding such an approximation is a hard task. 

Formally, we prove the following theorems in Section \ref{ssec:-bichromaticD_hard}.

\begin{theorem}\label{thm:lowerbound_diameter_bichromatic_intro}

\approxweightedlowerbichromaticdiameter
\end{theorem}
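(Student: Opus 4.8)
The plan is to reduce from the two-party Set Disjointness problem. Recall that for inputs $a,b\in\{0,1\}^N$ held by Alice and Bob, deciding whether $a$ and $b$ share a common $1$-coordinate requires $\Omega(N)$ bits of communication, even for randomized protocols with constant error. I would reduce this to approximating the bi-chromatic diameter by building, from $(a,b)$, an unweighted undirected graph $G$ on $n=\Theta(N)$ vertices together with a partition $V=S\sqcup T$ such that: (i) if $a$ and $b$ intersect, then $d_G(s,t)\le 3$ for every $s\in S,\ t\in T$, i.e.\ $D_{ST}\le 3$; and (ii) if $a$ and $b$ are disjoint, then some pair $s^\star\in S,\ t^\star\in T$ has $d_G(s^\star,t^\star)\ge 5$, i.e.\ $D_{ST}\ge 5$. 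Given such a reduction, a $(\tfrac53-\epsilon)$-approximation algorithm for $D_{ST}$, outputting $\widehat D$ with $D_{ST}/(\tfrac53-\epsilon)\le\widehat D\le D_{ST}$, lets the two players decide Disjointness: in case (i) $\widehat D\le 3$, whereas in case (ii) $\widehat D\ge 5/(\tfrac53-\epsilon)>3$, so testing whether $\widehat D\le 3$ separates the two cases.

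The graph $G$ mirrors the sequential $ST$-diameter hardness construction of Backurs et al.\ \cite{Backurs:2018:TTA:3188745.3188950} (see also Dalirrooyfard et al.\ \cite{DBLP:conf/icalp/DalirrooyfardW019a}). At a high level: view the $N$ coordinates as a ground set, place a ``coordinate layer'' in the middle, attach $s^\star$ to the coordinates where $a$ is $1$ and $t^\star$ to the coordinates where $b$ is $1$, and add a few auxiliary hub vertices and edges so that (a) a common $1$-coordinate yields a length-$3$ path $s^\star\!\to(\text{coord})\!\to(\text{coord})\!\to t^\star$, while with no common coordinate the shortest $s^\star$-$t^\star$ route is forced through the hubs and has length exactly $5$; and (b) every other $S$-$T$ pair is always within distance $3$ regardless of the input, so that $D_{ST}$ is governed solely by $d_G(s^\star,t^\star)$. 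The assignment of the coordinate layer, the hubs, and the remaining vertices to $S$ or to $T$ must be done so that the number of edges crossing between $S$-assigned and $T$-assigned vertices stays small; naively splitting each coordinate node into an $S$-copy and a $T$-copy joined by an edge produces a cut of $\Theta(N)$ edges, which is useless, so I would route the coordinate-equality tests through a logarithmic-width index/bit gadget in the style of Frischknecht et al.\ \cite{FrischknechtHW12}, bringing the cut down to $O(\log^2 n)$ edges while keeping $|V|=\Theta(N)$.

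With such a $G$ in hand, the standard simulation argument applies: Alice simulates the \cgst execution on the $S$-assigned vertices, Bob on the $T$-assigned ones, and in each round they exchange only the $O(\log n)$-bit messages sent along the $O(\log^2 n)$ cut edges, i.e.\ $O(\log^3 n)$ bits per round. An $R$-round $(\tfrac53-\epsilon)$-approximation algorithm thus yields an $O(R\log^3 n)$-bit protocol for Set Disjointness on $\Theta(n)$ bits, forcing $R=\Omega(n/\log^3 n)$; since Set Disjointness is hard even for randomized protocols, the bound holds for randomized algorithms as well (a randomized \cgst algorithm gives a public-coin protocol, and standard error reduction applies).

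The main obstacle I anticipate is item (b) above: engineering the gadget so that every $S$-$T$ pair other than $(s^\star,t^\star)$ is always at distance at most $3$ — including coordinate-to-coordinate, coordinate-to-hub, and hub-to-hub pairs — while simultaneously (1) forcing $d_G(s^\star,t^\star)$ to equal exactly $5$ in the disjoint case, in particular ruling out length-$4$ detours through the hubs or through other vertices' attachments, and (2) keeping the $S/T$ cut polylogarithmic. These three requirements pull against one another: extra shortcut edges that make generic $S$-$T$ pairs close also tend to create a length-$\le 4$ $s^\star$-$t^\star$ path, and an index gadget that shrinks the cut can introduce additional short routes. Verifying the exact distances in all cases, and checking that the construction remains unweighted and undirected, is the bulk of the work; the communication-complexity step is then routine, and a careful accounting of the gadget sizes and cut edges yields exactly the claimed $\Omega(n/\log^3 n)$ bound.
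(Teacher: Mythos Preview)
Your reduction has a real gap at exactly the step you flag as the main obstacle. With Set Disjointness and a single distinguished pair $(s^\star,t^\star)$, the coordinate layer has $N$ nodes, and the bit-gadget replacement you propose does not preserve the $3$-versus-$5$ gap. A Frischknecht-style bit gadget connects $c_i^A$ to $c_j^B$ at distance $3$ whenever $i$ and $j$ agree in at least one bit position, which is essentially always; so in both the intersecting and the disjoint case you get a length-$5$ path $s^\star\text{--}c_i^A\text{--}\mathrm{bit}^A\text{--}\mathrm{bit}^B\text{--}c_j^B\text{--}t^\star$ as soon as $a$ and $b$ are each nonzero somewhere, and the gap collapses. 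More generally, bit gadgets compress an $N$-to-$N$ index matching across a cut; they do not let you carry out $N$ independent coordinate-coincidence tests through $O(\log N)$ edges while keeping distances additively tight. The tension you describe between ``all other $S$--$T$ pairs at distance $\le 3$'', ``$(s^\star,t^\star)$ at distance exactly $5$'', and ``polylogarithmic cut'' is not just the bulk of the work---it is, for a single-pair Disjointness encoding, unresolved.

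The paper sidesteps this entirely by reducing from Orthogonal Vectors with $N$ vectors per side of dimension $d=O(\log N)$ (randomized communication complexity still $\Omega(N)$). Now the coordinate layer has only $d$ nodes, split into $c_A,c_B$ copies joined by one edge each, so the Alice/Bob cut is just these $d=O(\log N)$ edges---no gadget needed. The $3$-vs-$5$ structure is scaled to $(3t{+}1)$ vs.\ $(5t{+}1)$ by replacing edges with length-$t$ paths, yielding $n=\Theta(tN\log N)$ vertices and the claimed $\Omega(n/\log^3 n)$ bound. (The Backurs et al.\ and Dalirrooyfard et al.\ constructions you cite are themselves OV reductions, not Disjointness reductions, so ``mirroring'' them already points to OV.) A smaller, separate issue: you identify the bichromatic partition $S\sqcup T$ with the Alice/Bob simulation partition, but these need not coincide---in the paper $T$ consists only of the $b^0$ nodes, while Bob simulates the $b^0$'s, the $c_B$'s, and all incident path vertices.
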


\begin{theorem}\label{thm:lowerbound_diameter_bichromatic_directed_intro}
\twoapproxbichromaticdirecteddiameter
\end{theorem}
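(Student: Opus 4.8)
The plan is to prove this by the standard reduction from two-party communication complexity to \cgst, in the style of \cite{FrischknechtHW12,AbboudCK16,DasSarma:2011:DVH:1993636.1993686}, but starting from a function of \emph{linear} communication complexity so as to reach $\tilde\Omega(n)$ rather than $\tilde\Omega(\sqrt n + D)$. Concretely, I would reduce from Set Disjointness on a universe of size $N=\Theta(n)$, whose randomized communication complexity is $\Omega(N)=\Omega(n)$. Given Alice's input $x\in\{0,1\}^N$ and Bob's input $y\in\{0,1\}^N$, the goal is to build a \emph{directed} graph $G_{x,y}$ on $\Theta(n)$ vertices, equipped with a red/blue partition $S\cup T=V$, whose vertex set splits into a part $V_A$ that Alice can construct from $x$ alone and a part $V_B$ that Bob can construct from $y$ alone, with only $O(\log n)$ edges of \emph{fixed} orientation crossing between $V_A$ and $V_B$. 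An $R$-round \cgst algorithm running on $G_{x,y}$ is then simulated by Alice and Bob exchanging only the messages sent along the $O(\log n)$ cut edges, i.e.\ $O(R\cdot\log n\cdot\log n)$ bits total; so if the algorithm's output lets the players decide disjointness we obtain $R\cdot\log^2 n=\Omega(n)$, that is $R=\Omega(n/\log^2 n)$, ruling out any $o(n/\log^2 n)$-round algorithm.

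The core is the gadget. To keep the cut at $O(\log n)$ edges while still encoding $\Theta(n)$ input bits I would use the \emph{bit-gadget} of Abboud--Censor-Hillel--Khoury \cite{AbboudCK16}: identify $[N]$ with $\{0,1\}^k$, $k=\log N$; put an ``index node'' $a_i$ for each $i\in[N]$ together with $2k$ ``bit nodes'' on Alice's side, the mirror structure $b_i$ plus $2k$ bit nodes on Bob's side, and let only the $O(k)=O(\log n)$ fixed edges joining the two bit layers cross the cut. Wire $a_i$ to the $k$ bit nodes coding $i$, and symmetrically $b_i$, with every edge \emph{oriented} so that a directed traversal must run from the $a$-world into the bit layer, across the (flipped-coordinate) cut edges, and into the $b$-world, and arrange, via the $x$- and $y$-dependent edges that live strictly on one side, that $a_i$ ``activates'' iff $x_i=1$ and $b_i$ iff $y_i=1$. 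The decisive property is the classical one of this coding combined with directedness: for $i\neq i'$ the codes differ in some coordinate, which yields a \emph{short} directed route from $a_i$ to $b_{i'}$ through that bit position, whereas no such short route connects $a_i$ to $b_i$; forbidding the ``return'' orientations that would otherwise create alternative shortcuts --- exactly what directedness permits and what the undirected Theorem \ref{thm:lowerbound_diameter_bichromatic_intro} cannot afford --- forces the only $a_i\to b_i$ path to be a long (but finite) detour of length $\ge 2L$. Coloring all of $V_A$ red and all of $V_B$ blue and adding $O(1)$-length directed padding paths between every other red--blue pair, one gets that the bi-chromatic diameter $D_{ST}$ equals a fixed constant $L$ when $x,y$ are disjoint and is $\ge 2L$ (finite) when they intersect, while the hop-diameter of $G_{x,y}$ stays $O(\log n)$.

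Given such a $G_{x,y}$, a $(2-\epsilon)$-approximation algorithm distinguishes the two cases: its output $\tilde D$ satisfies $D_{ST}\le\tilde D\le (2-\epsilon)D_{ST}$, hence $\tilde D\le(2-\epsilon)L<2L$ when $x,y$ are disjoint and $\tilde D\ge D_{ST}\ge 2L$ when they intersect, so thresholding $\tilde D$ at $2L$ recovers disjointness (the underestimate convention for approximation works identically, thresholding at $L$). Since $L$ may be taken to be an absolute constant, the argument applies to every constant $\epsilon>0$; together with the simulation accounting above and $n/\log^2 n\gg\log n\ge D$, this yields the theorem. This is, in essence, the distributed incarnation of the SETH/Orthogonal-Vectors hardness of $(2-\epsilon)$-approximating directed bi-chromatic diameter in the sequential model \cite{Backurs:2018:TTA:3188745.3188950,DBLP:conf/icalp/DalirrooyfardW019a}.

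The step I expect to be the real obstacle is making the gadget meet all of its requirements simultaneously: (i) only $O(\log n)$ cut edges while $\Theta(n)$ bits are encoded, with \emph{every} input-dependent edge strictly on one side --- this is what the bit-gadget is for, but one must check that nothing input-dependent straddles the cut; (ii) a clean factor-$2$ gap, which forces a careful orientation of \emph{every} edge so that a single-index crossing $a_i\to b_i$ costs $\ge 2L$ hops while any $a_i\to b_j$ with $i\neq j$ costs $\le L$ hops, with the length bookkeeping through the shared $O(\log n)$-size bit layers being the delicate part; and (iii) guaranteeing that \emph{every} red--blue pair, not only the diagnostic ones, is within distance $L$ in the disjoint case, which requires auxiliary short directed paths (a correctly oriented hub, or a bipartite shortcut) that neither inflate the cut nor introduce an unintended long directed path. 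Once a directed gadget with these properties is in hand, the communication-complexity accounting and the approximation-gap argument are routine.
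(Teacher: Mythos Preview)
Your high-level framework is right, and it is the same one the paper uses: a two-party reduction through a graph with an $O(\log n)$ cut. But the proposal is a plan rather than a proof, and the paper's actual construction differs from your sketch in two ways that matter.

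First, the paper does \emph{not} reduce from Set Disjointness via the Abboud--Censor-Hillel--Khoury bit gadget. It reduces from Orthogonal Vectors with dimension $d=O(\log N)$, using one pair of ``coordinate nodes'' $c_A,c_B$ per coordinate and the single cut edge $c_A\to c_B$. The input is encoded directly into the edges to coordinate nodes ($a^t\to c_A$ iff $a[c]=1$; $c_B\to b^0$ iff $b[c]=1$), so there is no separate ``activation'' mechanism to design: $a^t$ reaches $b^0$ in three hops iff $a\cdot b\neq 0$. Your bit-gadget route fixes the index wiring and then has to graft on input-dependent hub edges; the paper's OV route collapses those two roles into one, which is precisely why items (i) and (iii) on your obstacle list disappear.

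Second, the paper does not obtain an exact $L$ versus $2L$ gap with an absolute constant $L$. It attaches a directed path of length $t=\Theta(1/\eps)$ to each $a$ (so the source is $a^0$), and adds an explicit detour path $P$ of length $t-2$ from every $b^0$ back to the $c_B$ layer on Bob's side only. Then the non-orthogonal case has $ST$-diameter $t+3$ (and the paper checks the coordinate-to-$b^0$ distances using $P$, handling your item (iii)), while an orthogonal pair forces the use of $P$, giving distance at least $2t+3$. The ratio $\tfrac{2t+3}{t+3}$ exceeds $2-\eps$ for $t$ a constant depending on $\eps$, the graph has $n=\Theta(N/\eps)$ nodes, and the $O(\log N)$ cut yields the $\Omega(n/\log^2 n)$ bound. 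Your assertion that ``$L$ may be taken to be an absolute constant'' is the step that is neither carried out nor, in the bit-gadget-plus-hub design you sketch, obviously true: the hub path and the bit-gadget path have different constant lengths, so without the $t$-padding trick you will see an additive slack that erodes the factor-$2$ gap. In short, the missing idea is path-padding (for the gap) together with the explicit one-sided detour $P$ (for finiteness), and the cleaner move is to start from OV rather than Disjointness so the input sits on the cut-adjacent edges.
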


Finally, we show that for both the directed and weighted cases, any approximation of the bi-chromatic diameter requires $\tilde{\Omega}(\sqrt{n}+D)$ rounds. The weighted case is proved as part of \cref{thrmintro:-weightedDiameterLowerBoundAny}. In Section \ref{ssec:any_approx_lower_bound}, we prove separately the directed case, which is stated formally as follows.

\begin{theorem}\label{thm:bichromatic_directed_diameter_lowerbound_any}
\anyapproxbichromaticdirecteddiameter
\end{theorem}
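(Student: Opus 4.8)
The plan is to reduce from Spanning Connected Subgraph Verification (SCSV) — the same starting point used for Theorems~\ref{thrmintro:-weightedDiameterLowerBoundAny} and~\ref{thrmintro:-unweightedDirectedDiameterLowerBoundAny} — which requires $\tilde{\Omega}(\sqrt{n}+D)$ rounds in \cgst, even for randomized algorithms, by \cite{DasSarma:2011:DVH:1993636.1993686}, and already on instance families whose ``yes'' instances have $\Theta(n)$ nodes and hop-diameter $\Theta(D)$. An SCSV instance is a connected communication graph $G_0=(V_0,E_0)$ with a marked subset $H\subseteq E_0$ (each node knowing which of its incident edges are marked), and the task is to decide whether $(V_0,H)$ is connected. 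I would first reinterpret such an instance as a directed-diameter instance: let $G_1$ be the digraph on vertex set $V_0$ obtained by replacing each marked edge $\{u,v\}\in H$ with the two arcs $u\to v$ and $v\to u$. Then $G_1$ is strongly connected iff $(V_0,H)$ is connected and spanning, so $\mathrm{diam}(G_1)$ is finite on ``yes'' instances and $+\infty$ on ``no'' instances (equivalently, one could take as the starting point the hard directed-diameter instances underlying Theorem~\ref{thrmintro:-unweightedDirectedDiameterLowerBoundAny}).

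The new ingredient for the bi-chromatic version is a ``doubling'' gadget that converts a directed diameter into an $S$-$T$ diameter. From $G_1=(V_0,A_1)$ I would build $G'$ on vertex set $V_0\times\{0,1\}$: for every arc $(x,y)\in A_1$ include all four arcs $x^{(i)}\to y^{(j)}$ with $i,j\in\{0,1\}$, and for every $v$ include the ``twin'' arcs $v^{(0)}\to v^{(1)}$ and $v^{(1)}\to v^{(0)}$; color $S=V_0\times\{0\}$ and $T=V_0\times\{1\}=V(G')\setminus S$. A short argument shows $d_{G'}(u^{(0)},v^{(1)})=d_{G_1}(u,v)$ for all $u\neq v$: any walk in $G'$ from $u^{(0)}$ to $v^{(1)}$ projects (forgetting the copy index) to a walk in $G_1$ from $u$ to $v$ whose length equals the number of non-twin arcs used, hence is at least $d_{G_1}(u,v)$; conversely a shortest $u$-to-$v$ path in $G_1$ lifts to a walk of the same length whose last arc crosses into copy~$1$. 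The twin arcs keep $d_{G'}(u^{(0)},u^{(1)})=1$, so the diagonal never dominates, and therefore $D_{ST}(G')=\mathrm{diam}(G_1)$, which is finite exactly on the ``yes'' SCSV instances.

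This already yields the theorem. Any algorithm computing an $\alpha(n)$-approximation of $D_{ST}$ on $G'$ must output a finite value on ``yes'' instances and $+\infty$ on ``no'' instances (an $\alpha(n)$-approximation of $+\infty$ is $+\infty$), hence decides SCSV on $G_0$; and the reduction preserves randomization. Moreover the algorithm can be executed on $G_0$ with only $O(1)$ round overhead: each $v\in V_0$ simulates $v^{(0)}$ and $v^{(1)}$ (the twin arcs are node-internal, hence free), and each arc $x^{(i)}\to y^{(j)}$ of $G'$ is routed over the edge $\{x,y\}\in E_0$; since at most a constant number of $G'$-arcs project onto any single edge of $G_0$, one \cgst round on $G'$ costs $O(1)$ \cgst rounds on $G_0$. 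As $G'$ has $2|V_0|=\Theta(n)$ vertices and, on the hard ``yes'' instances, hop-diameter $\Theta(D)$, an algorithm for $\alpha(n)$-approximate bi-chromatic directed diameter running in fewer than $\tilde{\Omega}(\sqrt{n}+D)$ rounds would, through this simulation, break the $\tilde{\Omega}(\sqrt{n}+D)$ \cgst lower bound for SCSV of \cite{DasSarma:2011:DVH:1993636.1993686} — a contradiction.

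I expect the difficulty to lie not in any single step but in two places where directedness meets the model: (i) verifying the distance identity $D_{ST}(G')=\mathrm{diam}(G_1)$ cleanly in all corner cases (vertices isolated in $H$, the $u=v$ diagonal, and the $+\infty$ case), and (ii) confirming that the node count and hop-diameter of $G'$ remain within constant factors of those of $G_0$ on the hard family, so that both the $\sqrt{n}$ term and the $D$ term of the lower bound survive the reduction; the remaining \cgst simulation is routine and of the same flavor as the other reductions in Section~\ref{ssec:any_approx_lower_bound}.
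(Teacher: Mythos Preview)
Your primary construction has a real gap at exactly the point you flag as difficulty~(ii). In the \cgst model the communication network for a directed-graph problem is the undirected version of the input, and an algorithm's round complexity is quoted in terms of \emph{that} network's hop-diameter. Your $G_1$, and hence your $G'$, contains only arcs coming from the marked edges $H$; the edges of $E_0\setminus H$ disappear entirely. Thus on ``no'' SCSV instances the undirected $G'$ is disconnected (infinite hop-diameter), and even on ``yes'' instances the hop-diameter of $(V_0,H)$ is not controlled by the hop-diameter $D$ of $G_0$ --- in the hard family of~\cite{DasSarma:2011:DVH:1993636.1993686} it is typically $\Theta(\sqrt{n})$. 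A hypothetical fast bi-chromatic-diameter algorithm is only promised to run in few rounds relative to the hop-diameter of its own input $G'$, so simulating it on $G_0$ does not yield an $o(\sqrt{n}+D)$-round SCSV procedure and no contradiction results.

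Your parenthetical alternative --- applying the doubling gadget to the directed-diameter hard instances of Theorem~\ref{thrmintro:-unweightedDirectedDiameterLowerBoundAny} rather than to $G_1$ --- does work, since those instances retain a full copy of $G_0$ in one layer and so have hop-diameter $\Theta(D)$ regardless of $H$. The paper's own proof is simpler still and avoids doubling: it builds a two-layer digraph with an $H$-layer, a $G$-layer carrying \emph{all} of $E_0$, and one-way arcs $v_H\to v_G$; then it simply takes $S=\{v_H\}$ a singleton and $T$ the rest, so that $D_{ST}=ecc(v_H)$, which is finite iff $H$ is spanning and connected. The $G$-layer is precisely what keeps the underlying communication graph connected with hop-diameter $O(D)$, and is what your $G_1$ lacks.
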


\subsection{Additional related work}
\label{subsection:related}

The state of the art algorithm for $3/2$-approximation of unweighted, undirected diameter \cite{HolzerPRW14} was preceded by a significant number of works. Notable examples are Holzer's and Wattenhofer's algorithm computing a $3/2$-approximation of the diameter in undirected, unweighted graphs in $O(n^{3/4}+D)$ rounds \cite{HolzerW12}, and the independent work of Peleg et al. \cite{PelegRT12}, which achieves the same approximation in $O(D\sqrt{n}\log n)$ rounds.
Later, Lenzen and Peleg \cite{LenzenP13} improved this upper bound to $O(\sqrt{n}\log n+D)$.

Approximations to more concrete variants of distance computations such as APSP and SSSP have been extensively studied in the \cgst as well. Examples include the deterministic  $(1+o(1))$-approximation to APSP by Nanongkai \cite{DBLP:conf/stoc/Nanongkai14}, and the $(1+\epsilon)$-approximation algorithm for SSSP of Becker et al. \cite{becker_et_al:LIPIcs:2017:8003}. The near optimal algorithm of Bernstein and Nanongkai for APSP \cite{Bernstein:2019:DEW:3313276.3316326} was preceded by a series of papers that set to realize the complexity of APSP in \cgst \cite{DBLP:conf/focs/HuangNS17,DBLP:conf/stoc/Elkin17,DBLP:conf/ipps/AgarwalR19,DBLP:conf/podc/AgarwalRKP18,DBLP:journals/corr/abs-1810-08544}. Given that \cite{Bernstein:2019:DEW:3313276.3316326} is a randomized Las Vegas algorithm, there remains a gap between between the best known deterministic and randomized algorithms for APSP, with the deterministic state of the art being $\tilde{O}(n^{4/3})$ \cite{DBLP:conf/spaa/AgarwalR20}. For SSSP, the state of the art algorithm of \cite{10.1145/3382734.3405729} was also preceded by a series of improvements \cite{DBLP:conf/stoc/Elkin17,DBLP:conf/stoc/Nanongkai14,becker_et_al:LIPIcs:2017:8003,DBLP:conf/stoc/LenzenP13, DBLP:conf/stoc/GhaffariL18,DBLP:conf/stoc/HenzingerKN16,DBLP:journals/siamcomp/ElkinN19,DBLP:conf/focs/ForsterN18} from the folklore $O(n)$ Bellman-Ford algorithm.

Approximations to distance computations have been studied in various distributed settings, such as the congested clique model. Starting from \cite{DBLP:journals/dc/Censor-HillelKK19}, which presented the first non trivial algorithms for both exact, and approximated APSP in the model. From there a series of works designed more and more efficient algorithms for approximating distances in the model \cite{DBLP:conf/wdag/ChechikM19,becker_et_al:LIPIcs:2017:8003,DBLP:conf/podc/Censor-HillelDK19,DBLP:conf/opodis/DinitzN19,DBLP:journals/siamcomp/ElkinN19,DBLP:conf/spaa/ElkinN19,DBLP:conf/wdag/Gall16}, with the most recent work being the $poly(\log \log n)$ approximations for APSP and Multi Source Shourtest Paths \cite{DBLP:journals/corr/abs-2003-03058}.

Conditional hardness results for these parameters are very well-studied in the sequential setting, within fine-grained complexity, under assumptions such as the Strong Exponential Time Hypothesis (SETH) \cite{DBLP:journals/jcss/ImpagliazzoP01}. For details, see e.g., the work of Backurs et al. \cite{Backurs:2018:TTA:3188745.3188950} or the survey by Vassilevska Williams \cite{vsurvey}. Returning to the \cgst model,
in some topologies such as planar graphs, work by Li and Parter \cite{LiP19} showed that the diameter of an unweighted, undirected graph can even be computed in a sublinear number of rounds.

The lower bound framework for reducing 2-party communication complexity to \cgst was introduced by Peleg and Rubinovich in \cite{PelegR00}, in which they show that any algorithm solving the minimum spanning tree (MST) problem has round complexity $\Tilde{\Omega}(\sqrt{n}+D)$. Since then, there has been a surge of lower bounds for the \cgst model employing this framework; examples include \cite{DasSarma:2011:DVH:1993636.1993686,DruckerKO13,FischerGKO18,DBLP:journals/corr/abs-1901-01630,CzumajK18,BachrachCDELP19}.
In an independent concurrent work, \cite{GKP20} show another angle of the landscape of the complexity of diameter approximation, proving that for any constant $\epsilon>0$, any algorithm approximating the diameter of a given unweighted, undirected graph, within a factor of $(\frac{3}{5}+\epsilon)$, $(\frac{4}{7}+\epsilon)$, or $(\frac{6}{11}+\epsilon)$, must have a round complexity of at least $\tilde{\Omega}(n^{1/3})$, $\tilde{\Omega}(n^{1/4})$, or $\tilde{\Omega}(n^{1/6})$, respectively.

\section{Preliminaries}\label{sec:-Prelim}

\subsection{The Model \& Definitions}\label{ssec:-model}
\label{subsec:prelim}
This paper considers the \cgst model of computation. In this model, a synchronized network of $n$ nodes is represented by an undirected, unweighted, simple graph $G=(V,E)$. In each round, each node can send a different message of $O(\log n)$ bits to each of its neighbors.

Next, we define the network parameters that we discuss in the paper.
\begin{definition}\label{def:-prameters}
Given a weighted, directed graph $G=(V,E)$, denote by $d(u,v)$ the weight of the lightest directed path starting at node $u$ and ending at node $v$. If there is no such path, we define $d(u,v)=\infty$. Here, the weight of a path $P$ is the sum of the weights of its edges. The eccentricity $ecc(u)$ of a node $u$ is defined to be $\max\limits_{v\in V} d(u,v)$. The radius $r$ of $G$ is defined to be $\min\limits_{v\in V} ecc(v)$. The diameter $D$ of $G$ is defined to be $\max\limits_{v\in V} ecc(v)$.

\end{definition}

The $ST$ variants of these distance parameters are defined as follows.

\begin{definition}[$ST$ and bi-chromatic diameter, radius and eccentricities.]\label{def:ST_diameter}
 Given a weighted graph $G=(V,E)$, and two non empty subsets $S,T\subseteq V$,
 given $v\in S$, we define its $ST$-eccentricity by $ecc(v)=\max\limits_{u\in T} d(v,u)$.
 We define the $ST$-diameter of $G$ to be $D_{ST}=\max\limits_{v\in S} ecc(v)$. The $ST$-radius of $G$ is defined to be $R_{ST}=\min\limits_{v\in S} ecc(v)$. 
 When $S=V\backslash T$, the $ST$ parameters are called bi-chromatic.
\end{definition}

\subsection{The Communication Complexity  Framework} The high level idea of applying the framework of reductions from 2-party communication complexity to obtain lower bounds in the \cgst model is as follows. We pick some function $f:\set{0,1}^k\times \set{0,1}^k\to \set{0,1}$, and then reduce any efficient communication protocol for it to an efficient \cgst algorithm for the discussed problem. We start with our two players Alice (A) and Bob (B), each of them respectively receives a binary string of length $k$ denoted by $x,y\in \set{0,1}^k$.

We construct a graph $G=(V,E)$ we call the \emph{fixed graph construction}, and we partition the set of vertices $V$ into the sets $V_A,V_B$. We call the cut induced by $V_A,V_B$ the \emph{communication cut}, and we denote the number of edges in this cut by $|cut|$.

Now, given $x$ and the graph $G[V_A]$ (i.e., the subgraph of $G$ induced by $V_A$), Alice modifies the graph $G[V_A]$ in any way that may depend only on $x$, and Bob does the same with $y$ and $G[V_B]$. Denote the resulting graph by $G_{x,y}$, and denote its number of nodes by $n$.

The resulting graph $G_{x,y}$ should be constructed such that it has some property $P$ (e.g. radius at least 3) iff $f(x,y)=1$. Now, assuming there is an algorithm $Alg$ in the \cgst model that decides $P$ in $T$ rounds, Alice and Bob can simulate this algorithm on $G_{x,y}$, and the only communication required between them is for simulating messages that are sent on edges in the communication cut. 
Thus, Alice and bob can simulate $Alg(G_{x,y})$ while communicating $O(T\cdot |cut|\cdot \log n)$ bits of communication. Furthermore, by the property of $G_{x,y}$, deciding $P$ on $G_{x,y}$ allows them to compute $f(x,y)$ with  $O(T\cdot |cut|\cdot \log n)$ bits of communication.
Therefore, a lower bound on the communication complexity of $f$, implies a lower bound on $T$, which is the round complexity of the distributed algorithm.

We next elaborate on the functions $f$ that we use in our reductions.

\begin{definition}[The Set Disjointness Problem (Disj) \cite{RAZBOROV1992385}]
\label{def:DISJ_Def}
Alice and Bob receive subsets $X,Y\subseteq[n]$, respectively, represented as binary vectors of length $n$. Their goal is to decide whether $X\cap Y=\emptyset$.
\end{definition}

It is known by \cite{DBLP:journals/jcss/Bar-YossefJKS04,Kushilevitz:1996:CC:264772,RAZBOROV1992385} that the randomized communication complexity of Disj on inputs of size $n$ is $\Omega(n)$.

\begin{definition}
[The Tribes (ListDISJ) Problem \cite{DBLP:conf/stoc/JayramKS03}] 
\label{def:-LSDISJ}
Alice and Bob are given sets $A_i,B_i\in\set{0,1}^N$ for each $i\in [N]$. They must output 1 if and only if there is some $i$ such that $A_i$ and $B_i$ are disjoint, i.e. there is no $j$ such that $A_{ij}=B_{ij}=1$. We treat the inputs $x$ and $y$ as binary strings of length $N^2$, such that $x=A_1\circ ...\circ A_N,y=B_1\circ...\circ B_N$. Here, $\circ$ refers to string concatenation.
\end{definition}

The Tribes function is defined in \cite{DBLP:conf/stoc/JayramKS03}, where a lower bound of $\Omega(N^2)$ communication bits is proved, even for randomized protocols. 


\begin{definition}[Orthogonal vectors (OV)]
\label{def:OV_communication}
Alice and Bob each receive $N$ binary vectors of length $d$, namely, $X,Y\subseteq \set{0,1}^d,|X|=|Y|=N$ for some $d$. It holds that $OV(X,Y)=1$ iff there exists $v\in A,u\in B$ such that $u\cdot v=0$, where $\cdot$ denotes the standard inner product over the reals.
\end{definition}

It is known by \cite{BRINGMANN201810} that the randomized communication complexity of OV on inputs of size $N$ and $d=\Theta(\log N)$ is $\Omega(N)$.


\begin{definition}[The Hitting Set Existence (HSE) Problem \cite{Abboud:2016:AFP:2884435.2884463}]\label{def:-HSE}
Alice and Bob are given sets $A$ and $B$ of $N$ Boolean vectors of size $d$. They must output 1 if and only if there is some vector $a\in A$ such that for all vectors $b\in B$ it holds that $a\cdot b \neq 0$.
\end{definition}

The HSE problem was used in \cite{Abboud:2016:AFP:2884435.2884463} for showing fine-grained complexity lower bounds, and conjectured to be sequentially hard. Here, we prove by reducing the Disj problem to the HSE problem that HSE is hard in the 2-party communication setting.
\begin{theorem}
\label{thrm:-HSEhard}
The HSE problem on sets of size $N$ of vectors of size $d=2\log{N}+1$ requires $\Omega(N)$ bits of communication.
\end{theorem}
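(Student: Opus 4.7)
My plan is to prove the $\Omega(N)$ bound by a communication-free reduction from the Set Disjointness problem DISJ (Definition \ref{def:DISJ_Def}), whose randomized communication complexity on inputs of length $N$ is known to be $\Omega(N)$. Given DISJ inputs $X, Y \in \{0,1\}^N$, Alice and Bob will each locally build an HSE instance consisting of $N$ vectors of dimension $d = 2\log N + 1$, designed so that the HSE answer is $1$ if and only if $X \cap Y \neq \emptyset$. Since each party can construct its side of the instance from its own input together with the public indices in $[N]$, the reduction uses zero bits of communication; any $c$-bit HSE protocol therefore gives a $c$-bit DISJ protocol, forcing $c = \Omega(N)$.

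The core construction is a standard index-matching gadget. For $i \in [N]$, let $\mathrm{bin}(i) \in \{0,1\}^{\log N}$ denote the binary representation of $i$, and let $\overline{\mathrm{bin}(i)}$ denote its bitwise complement. Alice will set $a_i := (\mathrm{bin}(i), \overline{\mathrm{bin}(i)}, X_i)$, and Bob will set $b_j := (\overline{\mathrm{bin}(j)}, \mathrm{bin}(j), Y_j)$, both living in $\{0,1\}^{2\log N + 1}$. A short direct expansion yields $a_i \cdot b_j = H(\mathrm{bin}(i), \mathrm{bin}(j)) + X_i Y_j$, where $H$ denotes Hamming distance. This identity has two immediate consequences: whenever $i \neq j$ the inner product is at least $1$ (regardless of $X,Y$), while when $i = j$ the first term vanishes and $a_i \cdot b_i = X_i Y_i$.

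Combining these properties, some $a_i \in A$ satisfies $a_i \cdot b_j \neq 0$ for every $b_j \in B$ iff $X_i Y_i = 1$, i.e., iff $i \in X \cap Y$. Hence $\mathrm{HSE}(A,B) = 1 \Leftrightarrow X \cap Y \neq \emptyset$, completing the reduction and yielding the claimed lower bound. The argument is essentially one short inner-product calculation, so I do not foresee a genuine technical obstacle; the only minor bookkeeping is that for general $N$ one uses $\lceil \log_2 N \rceil$ bits to represent indices, leaving the dimension at $2\lceil \log_2 N \rceil + 1 = O(\log N)$, consistent with the statement of the theorem.
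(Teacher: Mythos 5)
Your reduction is correct and is essentially identical to the paper's: both encode index $i$ via the concatenation $(\mathrm{bin}(i),\overline{\mathrm{bin}(i)},X_i)$ on Alice's side and $(\overline{\mathrm{bin}(j)},\mathrm{bin}(j),Y_j)$ on Bob's side, so that $a_i\cdot b_j$ is nonzero automatically for $i\neq j$ and equals $X_iY_i$ for $i=j$, reducing DISJ to HSE with zero communication. Your Hamming-distance identity $a_i\cdot b_j = H(\mathrm{bin}(i),\mathrm{bin}(j)) + X_iY_j$ is just a cleaner packaging of the same case analysis the paper carries out directly.
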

\begin{proof}
We reduce from the Set Disjointness problem.
Let $X,Y\in\set{0,1}^N$ be the sets received by Alice and Bob respectively. Each may convert these sets into vector sets $A$ and $B$ as follows.

Let $b(i)$ be the bit representation of the number $i$, and let $\overline{b}(i)$ be the bitwise complement of $b(i)$. For each $i$ in $[n]$, let $A_i$ be the concatenation of $b(i)$, $\overline{b}(i)$, and $X_i$, where $X_i$ is the $i$th element of $X$. For each $i$ in $[N]$, let $B_i$ be the concatenation of $\overline{b}(i)$, $b(i)$, and $Y_i$, where $Y_i$ is the $i$th element of $Y$. Let $A = \set{A_i}_i$ and $B = \set{B_i}_i$.

We claim that $X$ and $Y$ are disjoint if and only if there does not exist a hitting set in the instance $(A,B)$. If $X$ and $Y$ are disjoint, then for all $i$, $A_i\cdot B_i = 0$, because $b(i)$ and $\overline{b}(i)$ are orthogonal by definition and either $X_i$ or $Y_i$ is 0. Thus there is no hitting set.

Otherwise, there is some $i$ such that $X_i = Y_i = 1$. Note that for all $i\neq j$, $A_i\cdot B_j \neq 0$. Also, $A_i$ and $B_i$ are both 1 in the last bit, by our construction. Thus for all $j$, $A_i\cdot B_j \neq 0$, and $A_i$ is a hitting set.

No additional communication is required for the reduction, so $\Omega(N)$ bits are still required due to the lower bound for the Set Disjointness problem.
\end{proof}


\section{Approximation Algorithms}\label{sec:-approx}

\subsection{Approximations for weighted directed variants}
\label{subsec:ApxWD}
In this section, we prove our approximation algorithms, starting with the connection between the complexity of $SSSP$ and approximating distance parameters. Formally, we prove the following theorem, and then we deduce Corollaries \ref{corol:-eccentApprox1-intro}, \ref{corol:-eccentapprox2-intro}, and \ref{corol:-eccentapprox3-intro}.

\begin{theorem-repeat}{thrmintro:-WeightedDirectedRadiusUpperBoundTwo}
\ssspconnection
\end{theorem-repeat}

We briefly remind the reader of the discussion in the introduction regarding the theorem.
In order to obtain fast algorithms and maintaining the quality of the approximation, we generalize the notion of \emph{pseudo-center} defined by Choudhary and Gold \cite{2eccalgo} into \emph{approximate pseudo-center}. We show how to compute such a set of small size, and we show that such a set suffices to obtain the approximations detailed in Theorem \ref{thrmintro:-WeightedDirectedRadiusUpperBoundTwo}.

\begin{definition}
A $\alpha$-approximate pseudo-center is a set $C$ of nodes such that for all nodes $v\in V$, $ecc(v) \geq \max_{u\in V}\min_{c\in C}\set{d(c, u)/\alpha}$.
\end{definition}

We begin by showing that we can compute a small approximate pseudo-center efficiently.

\begin{lemma}
\label{lem:sssptopscenter}
Given a $(1+\eps)$-approximate, $T(n, \eps, D)$-round SSSP algorithm $\sssp$, there is a Las Vegas algorithm to compute a $(1+\eps)^2$-approximate pseudo-center of size $O(\log^2(n))$ of a graph $G=(V,E)$ in $\tilde{O}(T(n,\eps,D))$ rounds of communication, with high probability.
\end{lemma}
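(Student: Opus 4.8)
The plan is to realize the sequential ``reduce pseudo-center to SSSP'' strategy of Choudhary and Gold \cite{2eccalgo} distributedly, using only the approximate oracle $\sssp$ and never learning the true radius $r=\min_v ecc(v)$ exactly. I would first restate the goal: a set $C$ is a $\beta$-approximate pseudo-center iff $\max_{u}\min_{c\in C} d(c,u)\le\beta r$, i.e.\ iff the out-balls $B_{\mathrm{out}}(c,\rho)=\{u:d(c,u)\le\rho\}$ for $c\in C$ cover $V$ with $\rho=\beta r$. The true center $c^\star$ (with $ecc(c^\star)=r$) already covers $V$ with one out-ball, so the optimal cover has size $1$; the obstacle is that we cannot afford the $\Theta(n)$ SSSP calls greedy set cover would need, nor can we afford to pin down $c^\star$ (a one-element pseudo-center is essentially an approximate center, which is $\tilde\Omega(n)$-hard by \cite{AbboudCK16}).

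The heart of the algorithm is a subroutine that, given a guess $\rho$, maintains an uncovered set $U$ (initially $V$) and runs in phases. In a phase I would sample $t=\Theta(\log n)$ uniform nodes $u_1,\dots,u_t\in U$, call $\sssp$ once to each $u_j$ \emph{on the reverse graph} (single-sink distances $\hat d(\cdot,u_j)$ with $d\le\hat d\le(1+\eps)d$; on undirected $G$ this is ordinary SSSP from $u_j$), set $B_j=\{c:\hat d(c,u_j)\le\rho\}$ and $\hat I=\bigcap_{j}B_j$, then pick any $\hat c\in\hat I$ (e.g.\ minimum ID, via a convergecast on a fixed BFS tree in $O(D)$ rounds), add it to $C$, call $\sssp$ from $\hat c$, and delete $\{u:\hat d(\hat c,u)\le\rho\}$ from $U$. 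The key claim is that with high probability \emph{every} node of $\hat I$ is ``heavy'', meaning it reaches $\ge|U|/2$ of the current $U$ within $\rho$: a ``light'' node $c$ (fewer than $|U|/2$ such neighbours) has $\Pr_{u_j}[c\in B_j]<1/2$, so $\Pr[c\in\hat I]<2^{-t}\le n^{-\Theta(1)}$, and a union bound over the $\le n$ light nodes finishes it; at the same time $c^\star\in\hat I$ whenever $\rho\ge(1+\eps)r$, since then $\hat d(c^\star,u_j)\le(1+\eps)r\le\rho$. Consequently $\hat c$ is heavy w.h.p., so $|U|$ halves each phase, the subroutine covers $V$ in $O(\log n)$ phases with $|C|=O(\log n)$ (so a fortiori $O(\log^2 n)$) and $O(\log^2 n)$ SSSP calls; and since each deleted $u$ has $d(\hat c,u)\le\hat d(\hat c,u)\le\rho$, the output $C$ \emph{deterministically} satisfies $\max_u\min_{c\in C}d(c,u)\le\rho$.

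To choose $\rho$, I would run the subroutine over the $O((\log n)/\eps)$ powers of $1+\eps$ up to the (polynomial) distance bound of \cgst and binary-search for the smallest $\rho^\star$ at which it finishes covering $V$; monotonicity (larger $\rho$ only enlarges each $\hat I$ and eases covering) makes the search valid, and for every $\rho\ge(1+\eps)r$ the set $\hat I\ni c^\star$ is nonempty at every phase so the subroutine always succeeds there. Hence $\rho^\star$ is at most the first power of $1+\eps$ exceeding $(1+\eps)r$, i.e.\ $\rho^\star\le(1+\eps)^2 r$, and therefore $\max_u\min_{c\in C}d(c,u)\le\rho^\star\le(1+\eps)^2 r$, so for every $v$ we get $ecc(v)\ge r\ge \max_u\min_{c\in C}\{d(c,u)/(1+\eps)^2\}$ — a valid $(1+\eps)^2$-approximate pseudo-center, with \emph{certainty}. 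Only the running time is probabilistic: $O(\log\log n+\log(1/\eps))$ subroutine calls, each $O(\log^2 n)$ calls to $\sssp$ plus $O(D)$-round BFS-tree operations, all of which is $\tilde O(T(n,\eps,D))$ since SSSP already costs $\Omega(D)$; whenever a phase draws a non-heavy $\hat c$ (a low-probability event we detect by counting $|U\cap\{u:\hat d(\hat c,u)\le\rho\}|$ on the BFS tree) we simply resample the phase, preserving the Las Vegas guarantee. The degenerate case $r=\infty$ (no node reaches all others, so the pseudo-center inequality is vacuous) is handled by outputting any single node.

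The step I expect to be the main obstacle is getting the ``heavy intersection'' claim to play well with \emph{approximate} distances on both sides at once — so that $c^\star$ really does fall into $\hat I$ while every light node really does stay out — and, tied to this, verifying that selecting the smallest successful power of $1+\eps$ costs only one extra $(1+\eps)$ factor beyond the $(1+\eps)$ already spent to force $c^\star\in\hat I$; this product is exactly the $(1+\eps)^2$ in the statement. The remaining ingredients — simulating in-/out-SSSP via the reverse graph, convergecasts on a BFS tree, and the polynomial bound on distances — are standard \cgst machinery.
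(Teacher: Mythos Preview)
Your approach is genuinely different from the paper's, and it contains a subtle but real gap in the halving step. You argue that every $\hat c\in\hat I$ is heavy in the sense that $|\{u\in U:d(\hat c,u)\le\rho\}|\ge|U|/2$, because $c\in B_j$ forces $d(c,u_j)\le\hat d_j(c,u_j)\le\rho$. That part is fine. But the set you actually delete is $\{u:\hat d'(\hat c,u)\le\rho\}$, computed with a \emph{different} approximate oracle (the outward call from $\hat c$), and since $\hat d'\ge d$ this set can be strictly smaller than $\{u:d(\hat c,u)\le\rho\}$. A node that is $d$-heavy at radius $\rho$ need not be $\hat d'$-heavy at radius $\rho$. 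Your resampling does not save you: because $\hat c$ is chosen deterministically (minimum ID) from $\hat I$ and $c^\star\in\hat I$ whenever $\rho\ge(1+\eps)r$, if some fixed $d$-heavy but $\hat d'$-light node has a smaller ID than $c^\star$ you will keep selecting it regardless of how often you redraw the $u_j$, so the phase never halves $U$ and the Las Vegas runtime bound fails. The natural repair---delete $\{u:\hat d'(\hat c,u)\le(1+\eps)\rho\}$---does guarantee halving, but then the certified covering radius is $(1+\eps)\rho^\star\le(1+\eps)^3 r$, one $(1+\eps)$ factor beyond what the lemma claims. (Your binary-search monotonicity is also shaky, since for fixed coins a larger $\rho$ can change which minimum-ID node lands in $\hat I$; replacing it by a linear scan over the powers is safe but costs an extra $\log(1/\eps)$ factor.)

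The paper sidesteps both issues by never guessing $\rho$. It maintains a set $W$ of ``uncertified'' nodes, samples $S\subseteq W$ of size $\Theta(\log n)$, runs $\sssp$ outward from each $s\in S$, takes $a=\arg\max_u d_{\sssp}(S,u)$, runs $\sssp$ inward to $a$, removes every $u$ with $d_{\sssp}(u,a)\ge d_{\sssp}(S,a)$, and adds all of $S$ to $C$. The threshold $d_{\sssp}(S,a)$ is adaptive and is compared against values produced by the \emph{same} oracle call, so there is no true-versus-approximate mismatch: once $S$ hits the bottom half of $W$ under $d_{\sssp}(\cdot,a)$ (a standard hitting-set event), the entire top half is removed. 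The pseudo-center inequality is then immediate from $(1+\eps)\,ecc(u)\ge d_{\sssp}(u,a)\ge d_{\sssp}(S,a)\ge \max_x\min_{c\in C}d(c,x)/(1+\eps)$, with exactly the two $(1+\eps)$ factors the statement allows and no search over radii.
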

\begin{proof}
Let the set $C$ begin empty, and let $W$ begin as the set $V$. Throughout the proof, running $\sssp$ outward (inward) from a vertex $v\in V$ means computing the distances \emph{from} $v$ to the rest of the nodes (\emph{to} $v$ from the rest of the nodes). We repeat the following until $W$ is empty: \begin{itemize}
    \item Assign each node in $W$ to a set $S$ independently with probability $\min\set{1, 24\log(n)/|W|}$. Resample if $|S|<8\log{n}$ or $|S|>36\log{n}$.
    \item Run $\sssp$ outward from each node in $S$, and for all $u\in V$, compute estimated distances $d_{\sssp}(S, u) = \min_{s\in S}\set{d_{\sssp}(s, u)}$.
    \item Let $a$ be the node with the largest estimated distance from $S$. Then, we broadcast $d_{\sssp}(S, a)$ to all nodes in the graph using some BFS tree.
    \item Run $\sssp$ inward from $a$, and remove all nodes $u$ where $d_{\sssp}(u, a) \geq d_{\sssp}(S, a)$ from $W$.
    \item Add $S$ to $C$.
\end{itemize}

First, we argue that $C$ is a $(1+\eps)^2$-approximate pseudo-center. We only remove a node $u$ from $W$ when $d_{\sssp}(u, a) \geq d_{\sssp}(S, a)$ for some sample $S$. Let $a^*$ be the node that is truly farthest from $S$; then $d_{\sssp}(S, a) \geq d(S,a^*)/(1+\eps) \geq \max_{x\in V}\min_{c\in C}\set{d(c, x)/(1+\eps)}$, because $S\subseteq C$. We also note that by similarly bounding the error of $\sssp$,  it holds that $d_{\sssp}(u, a) \leq (1+\eps)d(u,a) \leq (1+\eps)ecc(u)$, so we may conclude that $$(1+\eps)ecc(u) \geq \max_{x\in V}\min_{c\in C}\set{d(c, x)/(1+\eps)}.$$ In other words, $ecc(u) \geq \max_{x\in V}\min_{c\in C}\set{d(c, x)/(1+\eps)^2}$, which meets the definition of a $(1+\eps)^2$-pseudo-center.

Next, we argue that each iteration requires $\tilde{O}(T(n,\eps,D))$ rounds. Using a Chernoff bound, it is simple to show that in each round, $8\log{n}\leq |S|\leq 36\log{n}$ with probability at least $1-1/n^4$, so we expect to resample a sub-constant number of times. We then run $\sssp$ from each node in $S$ and we run it again once to the node $a$, for a total of $O(\log{n}\cdot T(n,\eps,D))$ rounds. The rest of each iteration involves a constant number of broadcasts that take $O(D)$ rounds in total.

Finally, we argue that with high probability, we only have $O(\log{n})$ iterations in our algorithm. We do this by showing that in iteration $i$, the size of $W$ reduces by at least half with high probability, i.e. $|W_i|/2\geq |W_{i+1}|$. Consider the set $X\subseteq W_i$ of $|W_i|/2$ nodes with the smallest $d_{\sssp}(u,a)$, $u\in W_i$. Note that $S_i$ is a randomly sampled subset of $W_i$ of size at least $8\log{n}$, and thus intersects $X$ with probability at least $(1-1/n^5)$, as argued in Lemma \ref{lemma:helper_lemma} below \cite{2eccalgo} with no further assumptions. 

All nodes in $W_i\backslash X$ are at least as far as any node in that intersection under $\sssp$, by definition. This implies that for all $u\in W_i\backslash X$, $d_{\sssp}(u, a) \geq d_{\sssp}(S, a)$, which implies that all $|W_i|/2$ nodes of $W_i\backslash X$ will be removed from $W_i$ in iteration $i$.
\end{proof}

\begin{lemma}[Lemma 2.1 in \cite{2eccalgo}]\label{lemma:helper_lemma}
Let $U$ be a universe set of size at most $n$, and let $S_1,...,S_n\subseteq U$ such that $|S_i|\geq L$ for each $i\in [n]$. Let $c$ be some constant and $r=\frac{n(c+1)\ln n}{L}$. Let $S\subseteq U$ be a random subset of size $r$, then it holds that $S\cap S_i\neq \emptyset$ for all $i$ with probability $1-n^{-c}$.
\end{lemma}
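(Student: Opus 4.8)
The plan is to prove Lemma~\ref{lemma:helper_lemma} by a textbook first-moment argument: bound the probability that the random subset $S$ misses a single fixed $S_i$, then take a union bound over the (at most $n$) indices. First I would dispose of the degenerate cases: if $r\ge|U|$ then $S=U$ hits every $S_i$ and there is nothing to prove, and I would tacitly replace $r$ by $\lceil r\rceil$, which only strengthens the conclusion. Likewise, if $r>|U|-|S_i|$ for some $i$ then $S$ cannot avoid $S_i$ at all and that term contributes probability $0$, so we may assume $r\le|U|-|S_i|$ in the main estimate.

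The core step is the single-set bound. Since $S$ is a uniformly random $r$-subset of $U$, we have $\Pr[S\cap S_i=\emptyset]=\binom{|U|-|S_i|}{r}\big/\binom{|U|}{r}=\prod_{j=0}^{r-1}\frac{|U|-|S_i|-j}{|U|-j}$. Each factor equals $1-\frac{|S_i|}{|U|-j}\le 1-\frac{|S_i|}{|U|}\le 1-\frac{L}{n}$, where the first inequality uses $|U|-j\le|U|$ and the second uses $|S_i|\ge L$ and $|U|\le n$. Hence $\Pr[S\cap S_i=\emptyset]\le(1-L/n)^{r}\le e^{-rL/n}$, and substituting $r=\frac{n(c+1)\ln n}{L}$ yields $e^{-(c+1)\ln n}=n^{-(c+1)}$. (Equivalently, one can reveal the elements of $S$ one at a time without replacement; conditioned on all previous picks lying outside $S_i$, the universe still contains all $|S_i|$ elements of $S_i$ among its $|U|-j$ remaining elements, so the conditional miss probability at step $j$ is $1-\frac{|S_i|}{|U|-j}\le 1-\frac{L}{n}$, and multiplying gives the same bound.)

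Finally I would apply a union bound over the at most $n$ sets $S_1,\dots,S_n$: the probability that some $S_i$ is missed is at most $n\cdot n^{-(c+1)}=n^{-c}$, so with probability at least $1-n^{-c}$ the subset $S$ intersects every $S_i$, which is the claim. I do not expect any genuine obstacle here; the only points requiring a little care are the integrality/size edge cases for $r$ and correctly chaining the inequalities $1-\frac{|S_i|}{|U|-j}\le 1-\frac{L}{n}$, which rely simultaneously on $|S_i|\ge L$ and $|U|\le n$.
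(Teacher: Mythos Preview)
Your argument is correct and is the standard hitting-set computation one would expect. Note, however, that the paper does not actually supply its own proof of this lemma: it is quoted verbatim as Lemma~2.1 of \cite{2eccalgo} and used as a black box, so there is nothing in the paper to compare your approach against. Your write-up would serve perfectly well as the omitted proof.
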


Now that we showed how to compute an approximate pseudo-center, we show that it is sufficient for approximating the distance parameters as claimed.

\begin{lemma}
\label{lem:pscentertoecc}
Given a $(1+\eps)^2$-approximate pseudo-center $C$ and a $(1+\eps)$-approximate SSSP algorithm $\sssp$ taking $T(n,\eps,D)$ rounds, we may compute $(2+\eps^3 + 3\eps^2 + 4\eps)$-approximate eccentricities for all nodes in $O(|C|\cdot T(n,\eps,D)+D)$ rounds.
\end{lemma}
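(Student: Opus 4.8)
The plan is to treat the approximate pseudo-center $C$ as a stand-in for the graph center, following (and relaxing) the strategy of Choudhary and Gold. First I would run the SSSP algorithm $\sssp$ once \emph{outward} from each $c\in C$ and once \emph{inward} to each $c\in C$; this is $2|C|$ invocations, costing $O(|C|\cdot T(n,\eps,D))$ rounds, after which every vertex $u$ holds the estimates $d_{\sssp}(c,u)$ for all $c\in C$ and every vertex $v$ holds $d_{\sssp}(v,c)$ for all $c\in C$. Then I would compute the single global value $R_{\sssp}:=\max_{u\in V}\min_{c\in C} d_{\sssp}(c,u)$ by having each $u$ form $\min_{c\in C} d_{\sssp}(c,u)$ locally, aggregating the maximum along a BFS tree, and broadcasting it in $O(D)$ rounds (so, given $C$, the procedure is deterministic). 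Finally each vertex $v$ would output $\hat{e}(v):=\max_{c\in C} d_{\sssp}(v,c)+R_{\sssp}$, evaluated locally; the round count is then $O(|C|\cdot T(n,\eps,D)+D)$, as required.

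The crux --- and what I expect to be the main obstacle --- is picking this particular estimator and proving it correct. The more natural-looking choice $\min_{c\in C}\{d_{\sssp}(v,c)+\max_{u} d_{\sssp}(c,u)\}$ does over-estimate $ecc(v)$, but it cannot be bounded by $2\,ecc(v)$: a pseudo-center need not contain any low-eccentricity vertex, so for a near-central $v$ every term of that minimum can be large. The estimator $\max_{c} d_{\sssp}(v,c)+R_{\sssp}$ works precisely because $\max_{c\in C} d(v,c)\le ecc(v)$ automatically (each $c$ is a vertex, hence a valid target for $v$), while $R:=\max_{u}\min_{c} d(c,u)$ is exactly the quantity controlled by the approximate-pseudo-center guarantee, namely $R\le(1+\eps)^2\,ecc(v)$ for every $v$.

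For correctness I would establish two inequalities. For the lower bound $\hat{e}(v)\ge ecc(v)$: let $u_v$ attain $ecc(v)=d(v,u_v)$ and $c'$ attain $\min_{c\in C} d(c,u_v)$; then $ecc(v)\le d(v,c')+d(c',u_v)$ by the triangle inequality, and since $\sssp$ never underestimates, $d(v,c')\le d_{\sssp}(v,c')\le\max_{c} d_{\sssp}(v,c)$ while $d(c',u_v)=\min_{c} d(c,u_v)\le\min_{c} d_{\sssp}(c,u_v)\le R_{\sssp}$, so $ecc(v)\le\hat{e}(v)$. For the upper bound: $\max_{c} d_{\sssp}(v,c)\le(1+\eps)\max_{c} d(v,c)\le(1+\eps)\,ecc(v)$, and bounding each inner minimum by the value at its true minimizer gives $R_{\sssp}\le(1+\eps)\max_{u}\min_{c} d(c,u)=(1+\eps)R\le(1+\eps)^3\,ecc(v)$ via the pseudo-center property; hence $\hat{e}(v)\le(1+\eps)\,ecc(v)+(1+\eps)^3\,ecc(v)=(1+\eps)\bigl(1+(1+\eps)^2\bigr)ecc(v)=(2+4\eps+3\eps^2+\eps^3)\,ecc(v)$, matching the claimed ratio. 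Since $ecc(v)\le\hat{e}(v)\le(2+\eps^3+3\eps^2+4\eps)\,ecc(v)$ holds for every $v$ simultaneously, taking the maximum (resp.\ minimum) of the $\hat{e}(v)$ also gives the same-factor approximation for the diameter (resp.\ radius). A last loose end is reachability: if $d(v,c)=\infty$ for some $c\in C$ then $ecc(v)=\infty$ trivially, and if $\min_{c} d(c,u)=\infty$ for some $u$ the pseudo-center guarantee forces $ecc(v)=\infty$ for all $v$; in both cases $\hat{e}(v)=\infty$ and the bound holds vacuously.
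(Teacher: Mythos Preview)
Your proposal is correct and follows essentially the same approach as the paper: run $\sssp$ to and from every $c\in C$, aggregate the global quantity $\max_{u}\min_{c} d_{\sssp}(c,u)$ over a BFS tree, and estimate each eccentricity by $\max_{c} d_{\sssp}(v,c)$ plus that quantity, with the same $(1+\eps)+(1+\eps)^3$ upper-bound analysis. Your lower-bound argument via the triangle inequality through the minimizing $c'$ is in fact more explicit than the paper's one-line justification, and your side remarks (why the alternative estimator fails, the $\infty$ corner cases) are extra but harmless.
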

\begin{proof}
First, we run $\sssp$ to and from each node in $C$, so that each node $v\in V$ stores $d_{\sssp}(c,v)$ and $d_{\sssp}(v,c)$ for all $c\in C$. Each node $u$ internally determines 
$\min_{c\in C}\set{d_{\sssp}(c, u)}$. Then, using aggregation over a BFS tree, the nodes determine, and then broadcast the value $D_{\sssp}(C) := \max_{u\in V}\min_{c\in C}\set{d_{\sssp}(c, u)}$. Thus, the aggregation takes $O(D)$ rounds.
Each node $v$ approximates its eccentricity as $\max_{c\in C}\set{d_{\sssp}(v, c)} + D_{\sssp}(C)$.

First, note that this estimate is at least the true eccentricity of $v$, as each computed distance represents some path in the graph, and in this distance a path can go from $v$ to any node in $C$ and then any node in $V$.

We argue that this is a $(2+\eps^3 + 3\eps^2 + 4\eps)$-approximation. The estimated distance $\max_{c\in C}\set{d_{\sssp}(v, c)}$ is at most $(1+\eps)\cdot ecc(v)$, because $\sssp$ overestimates by at most a factor of $1+\eps$. By our definition of $(1+\eps)^2$-approximate pseudo-center, $D(C)\leq (1+\eps)^2ecc(v)$. Our estimate $D_{\sssp}(C)$ is at most $(1+\eps)\cdot D(C)$, so $D_{\sssp}(C)\leq (1+\eps)^3ecc(v)$. Thus, $\max_{c\in C}\set{d_{\sssp}(v, c)} + D_{\sssp}(C) \leq (1+\eps+(1+\eps)^3)\cdot ecc(v) = (2+\eps^3 + 3\eps^2 + 4\eps)\cdot ecc(v)$.

We compute $\sssp$ twice for each element of $C$, and broadcast a constant number of values to all nodes, so the total number of rounds is $O(|C|\cdot T(n,\eps,D)+D)$.
\end{proof}

\begin{proof}[Proof of Theorem \ref{thrmintro:-WeightedDirectedRadiusUpperBoundTwo}]$ $
Applying Lemma \ref{lem:sssptopscenter} and Lemma \ref{lem:pscentertoecc}, given a $(1+\eps)$-approximate algorithm $\sssp$ for SSSP running in $T(n,\eps,D)$ rounds, we may compute $(2+\eps^3 + 3\eps^2 + 4\eps)$-approximations for all eccentricities in $O(\log^2(n)\cdot T(n,\eps,D)+D)$ rounds.
\end{proof}

Using the $(1+\eps)$-approximate SSSP algorithms of \cite{becker_et_al:LIPIcs:2017:8003,DBLP:conf/focs/ForsterN18}, which run in $\tilde{O}((\sqrt{n} + D)/\eps)$ rounds on weighted, undirected graphs and $\tilde{O}((\sqrt{n}D^{1/4}+D)/\eps)$ rounds on weighted, directed graphs respectively, we achieve the following corollaries:

\begin{corollary-repeat}{corol:-eccentApprox1-intro}
\corolapproxone
\end{corollary-repeat}

\begin{corollary-repeat}{corol:-eccentapprox2-intro}
\corolapproxtwo
\end{corollary-repeat}

Using the exact SSSP algorithm of \cite{10.1145/3382734.3405729}, which runs in $\tilde{O}(\sqrt{n}D^{1/4}+D)$ rounds, we obtain the following corollary. 

\begin{corollary-repeat}{corol:-eccentapprox3-intro}
\corolapproxthree
\end{corollary-repeat}

\subsection{Approximations for unweighted undirected variants}
\label{subsec:ApxUwUd}

We now turn to our attention towards approximation algorithms for the unweighted, undirected versions of the distance parameters we consider.

\begin{theorem-repeat}{thm:approx_unweighted_undirected}
\approxunweightedundirected
\end{theorem-repeat}

We provide a distributed implementation of the sequential algorithm of Cairo et al. \cite{Cairo:2016:NBA:2884435.2884462}.
The high level approach for approximating the diameter is as follows. Suppose $d_1$ and $d_2$ are two nodes that realize the diameter. Our goal is to run BFS trees from various roots, and take the depth of the deepest tree as the estimate for the diameter. To obtain a better-than-2 approximation, we must promise that the set of roots includes a node that is sufficiently far from one of $d_1, d_2$ (for example, the depth of the BFS tree rooted at a node that lies exactly in between these two nodes approximates the diameter only by a factor of 2). The tension that we face is how to choose a small enough set of roots, since the size of this set directly affects the running time.

The approach for finding a node that is sufficiently far from $d_2$ is to find a node that is sufficiently close to $d_1$. Since we do not know which node is $d_1$ (otherwise we would simply run a BFS from it and find the exact diameter), we aim to find a set of nodes for which will be roots of BFS trees, such that every node has a close enough root (and in particular $d_1$ will have a sufficiently close root).

Cairo et al. \cite{Cairo:2016:NBA:2884435.2884462} introduce the following approach. An initial set of roots $S_0$ is sampled. Because of the need to keep this set small, there could be a node in the graph that is not close enough to the set. The algorithm finds the node $w_1$ that is farthest from this set, and would ideally like to add all nodes in its neighborhood of some sufficiently large radius into the set of roots, where the radius is such that it is guaranteed that all nodes in the graph are close to the new set of roots. However, this neighborhood could be too large. Thus, from this neighborhood, a smaller set $W_1$ of the closest nodes to $w_1$ is extracted. This set is still not promised to be sufficiently small, and hence the procedure is now repeated: a small enough subset $S_1$ is sampled from $W_1$ to be added to the set of roots, and again the farthest node from it is search for, in order to add more roots that promise that all nodes are close enough to the set of roots. This balancing act continues for $k$ iterations: adding more roots -- to be able to argue that all nodes have a close enough root, but not adding too many roots -- to be able to efficiently construct BFS trees from all of them.

\begin{proof}[Proof of Theorem \ref{thm:approx_unweighted_undirected}]
We show how to implement the algorithm of \cite{Cairo:2016:NBA:2884435.2884462} and refer there for correctness. 

We start by implementing a sampling procedure that outputs nodes $w_1,...,w_k$ and sets $S_0,...,S_k$.
Given $k$, let $W_0=V$ and let $\ell _0=n$, and let $q=\frac{n}{\log n}^{\frac{1}{k+1}}$.
First, the nodes compute a 2-approximation $D'$ of the diameter using a single BFS, and all nodes receive it. This is done in $O(D)$ rounds.

For each $i=0,...,k-1$, each node does the following. 
\begin{enumerate}

\item The nodes construct a set $S_i\subseteq V$  by letting each node in $W_i$ sample itself independently into the set with probability $\frac{q\cdot \log n}{n}$.
Define $Z_i=(V\backslash W_i)\cup S_i$.

\item Now, our goal is for each node $v$ to know the $id$ of the node farthest from $Z_i$, denoted by $w_{i+1}$. For this, the nodes of the network learn of $w_{i+1}$ in the following way. Each node knows whether it is in $Z_i$, so all nodes in $Z_i$ broadcast the message $(0, id)$, where $id$ is the unique ID of each broadcasting node. If a node $v$ in $V\backslash Z_i$ receives a set of messages of the form $(d,id)$ for the first time, it broadcasts $(d_{min}+1,id(v))$ in the next round, where $d_{min}$ is the smallest value of $d$ it received. After broadcasting their first message, all nodes thereafter broadcast the largest $(d,id)$ message they have received. After $D'$ rounds, all nodes will have the same value $(d_{max},id)$ stored, so whichever node has UID $id$ knows it is $w_{i+1}$. This routine takes $O(D)$ time.

\item Next, the nodes compute a BFS tree $T$ rooted at $w_{i+1}$, and compute $W_{i+1}$, which is the set of the first $\ell_{i+1} =\ceil{\frac{\ell_i}{q}}$ nodes encountered. This set can be learned in the following way. First $w_{i+1}$ can learn the values $|N_j(w_{i+1})|$ for all $j$, which is the number of nodes within distance at most $j$ from $w_{i+1}$. This is done using the BFS tree $T$ and an aggregate summation of the $T$-degrees of all nodes. Now, $w_{i+1}$ finds a value $j$ such that $$|N_j(w_{i+1})|\leq \ell_{i+1}  \leq |N_{j+1}(w_{i+1})|$$

Next, using another aggregate computation on the tree, each node in $N_j(w_{i+1})$ can learn how many nodes of distance $j+1$ there are in its subtree. Then, $w_{i+1}$ picks sufficiently many descendants in the tree such that they can choose $|N_{j+1}(w_{i+1})|-\ell_{i+1}$ nodes into the set $W_{i+1}$ from their own descendants, and so on, where at the end, ties are being broken arbitrarily.
This process takes $O(D)$ rounds as well. At the end of this process, each node $v$ knows whether $v\in W_{i+1}$ or not. If it holds that $W_{i+1}\cap Z_i\neq \emptyset$, then we continue to the $(i+1)$-th iteration. This can be checked since all nodes know whether they are in $Z_i$ and $W_{i+1}$, and if a node is in both sets, it can broadcast that information in $O(D)$ rounds using an aggregate computation. If the check fails, the $i$-th iteration is repeated.

\end{enumerate}
~\\This completes the description of the loop. Finally, at the end of the last iteration, we set $S_k=W_k$.
Note that each iteration takes $O(D)$ rounds, and it is proven in \cite{Cairo:2016:NBA:2884435.2884462} that the condition of $W_{i+1}\cap Z_i\neq \emptyset$ at the end of each iteration holds w.h.p.

Once we have obtained $w_1,...,w_k$, and the sets $S_0,...,S_k$ we 
run a BFS from each $w_i$ and all elements of each $S_i$. This gives that these nodes know their eccentricities and all nodes know their distance to each $w_i$ and all elements of $S_i$, for all $i$.
These $\tilde{O}(n^{\frac{1}{k+1}})$ BFS computations complete in $\tilde{O}(n^{\frac{1}{k+1}}+D)$ rounds using the algorithm of \cite{LenzenP13}, which allows computing a BFS from every node in a given set $S$ in an unweighted, undirected graph can be done in $O(S+D)$ rounds. 

Now, each node estimates its eccentricity by the farthest node from itself it is aware of: If the node $v$ is one of $S_0,...,S_k$, or one of $w_1,...,w_k$, then $v$ knows its distance to all nodes in the graph. Otherwise, $v$ is only aware of its distances to the nodes in $S_0,...,S_k$, and the nodes $w_1,...,w_k$. The estimated diameter is computed using an aggregation of the maximal estimated eccentricity, and the radius is estimated using an aggregation of the minimal estimated eccentricity. Both aggregate computations take $O(D)$ rounds.
\end{proof}

\subsection{Approximations for ST variants}
\label{subsec:ApxST}

Lastly, we prove Theorems \ref{thm:approx_bichro_diameter_unweighted_intro} and \ref{thm:approx_bichro_diameter_weighted_intro}.
The proof of Theorem \ref{thm:approx_bichro_diameter_unweighted_intro} follows the same ideas as the proof of Theorem \ref{thm:approx_unweighted_undirected}, with the appropriate modifications for the bi-chromatic variant.
\begin{theorem-repeat}{thm:approx_bichro_diameter_unweighted_intro}
\approxupperbichromaticdiameter
\end{theorem-repeat}

\begin{proof}
We give a distributed implementation of the sequential algorithm given by Dalirrooyfard et al. in \cite{DBLP:conf/icalp/DalirrooyfardW019a}.
We denote by $s^*,t^*$ the nodes that satisfy $d(s^*,t^*)=D_{ST}$.

Throughout the algorithm, there are going to be 5 different estimates of the diameter, $D_1,\dots,D_5$, and then we output the maximal one among them as our approximation for the bi-chromatic diameter. The outline of the proof is that we run BFS trees from sets of nodes, such that for one of these nodes, $v$, its distance to either $s^*$ or $t^*$ is at least $3D_{ST}/5$, which gives the claimed approximation (the small additive term comes from rounding in case $D_{ST}/5$ is not an integer).

\textbf{Estimate $D_1$:}
    Each node $v\in S$ samples itself independently into a set $Z$ with probability  $\frac{c\sqrt{n}\ln n}{|S|}$ for some constant $c$, and each node $u\in T$ samples itself independently into a set $X$ with probability $\frac{C\sqrt{n}\ln n}{|T|}$, for some sufficiently large constant $C$. By standard arguments, the sizes of the sets $Z$ and $X$ are $\tilde{O}(\sqrt{n})$, w.h.p.
The algorithm then runs BFS trees from all nodes in $Z,X$, which can be done in $O(|Z|+|X|+D)$ rounds by \cite{LenzenP13}, and thus completes in $\tilde{O}(\sqrt{n}+D)$ rounds w.h.p. Let $D_1=\max_{z\in Z,t\in T} d(z,t)$. The value of $D_1$ can be made known to all nodes in $O(D)$ rounds using a standard aggregate maximum computation. 
Note that if for some $z\in Z$ it holds that $d(s^*,z)\leq \frac{2D_{ST}}{5}$, then $d(z,t^*)\geq \frac{3D_{ST}}{5}$. Since $D_1\geq d(z,t^*)$, in this case our output at the end is within the claimed approximation.

\textbf{Estimate $D_2$:}
For every $t\in X$, denote by $s(t)$ the closest node in $S$ to $t$, which can be computed by a simple aggregation on the BFS tree from $t$, which for all trees is bounded by $O(|X|+D)$ rounds.
We now run BFS trees from all nodes $\set{s(t)\mid t\in X}$. Denote $D_2=\max\limits_{t\in X,t'\in T} d(s(t),t')$. 
Note that if for some $t\in X$, it holds that $d(s^*,t)\leq \frac{D_{ST}}{5}$, then $d(s^*,s(t))\leq \frac{2D_{ST}}{5}$ since $s(t)$ is closer to $t$ than $s^*$. Thus we get that $d(s(t),t^*)\geq \frac{3D_{ST}}{5}$. Since $D_2\geq d(s(t),t^*)$, in this case we also output a good estimate.

~\\\textbf{Estimate $D_3,D_4,D_5$:}
If $D_1,D_2$ do not achieve the needed approximation, this means that $d(s^*,X)>\frac{D_{ST}}{5}$, and $d(s^*,Z)>\frac{2D_{ST}}{5}$. Our next goal is to find such a node $w$, that satisfies $d(w,X)>\frac{D_{ST}}{5}$, and $d(w,Z)>\frac{2D_{ST}}{5}$.

For each $s\in S$, denote by $D_s$ the largest integer for which $d(s,X)>\frac{D_s}{5}$ and $d(s,Z)>\frac{2D_s}{5}$. Each node can compute this integer internally, and the network can compute $w=\arg \max\limits_{s\in S} D_s$ and $D'=\max\limits_{s\in S} D_s$ in $O(D)$ rounds. 
Note that $D'\geq D_{ST}$ since $d(s^*,X)>\frac{D_{ST}}{5}$, and $d(s^*,Z)>\frac{2D_{ST}}{5}$.

As proven in \cite{DBLP:conf/icalp/DalirrooyfardW019a}, it suffices to look at the nodes of $S$ of distance at most $\frac{2D'}{5}$ from $w$, from which we get the estimates $D_3,D_4$, and nodes of $T$ within distance at most $\frac{D'}{5}$ of $w$, from which we get the estimate $D_5$, as we explain next.

\textbf{Estimate $D_3,D_4$:}
We now run a BFS tree from $w$, and denote by $S_w$ all nodes of $S$ at distance at most $\frac{2D'}{5}$ from $w$. It is proved in \cite{DBLP:conf/icalp/DalirrooyfardW019a} that w.h.p. $|S_w|\leq \sqrt{n}$. Thus we can run BFS trees from all nodes in $S_w$ in $O(\sqrt{n}+D)$ rounds, and we denote $D_3=\max\limits_{s\in S_w,t\in T} d(s,t)$. 
For every node $s\in S_w$, we denote by $t(s)$ the closest node in $T$ to $s$. We now run BFS trees from all nodes $\set{t(s)\mid s\in S_w}$, and denote $D_4=\max\limits_{s\in S_w,s'\in S} d(t(s),s')$. These values can again be learned by the entire network using an aggregate maximum computation.

Here too, if we get that $D_3\geq \frac{3D_{ST}}{5}$ or $D_4\geq \frac{3D_{ST}}{5}$, then again our estimate is a good approximation.
Otherwise if both $D_3,D_4$ are smaller than $\frac{3D_{ST}}{5}$, since $D_3\geq d(w,t^*)$, in particular $d(w,t^*)<\frac{3D_{ST}}{5}$. Now we look at the node $b$ on the shortest $P_{wt^*}$ from $w$ to $t^*$ that satisfies $d(w,b)=\frac{2D_{ST}}{5}$. 
The proof then shows, using the information known thus far and triangle inequalities, that it must hold that $b\in T$, since otherwise $D_4\geq \frac{3D_{ST}}{5}$.

\textbf{Estimate $D_5$:}
Finally, denote by $T_w$ the set of nodes of $T$ at distance at most $\frac{D'}{5}$ from $w$. It is proved in \cite{DBLP:conf/icalp/DalirrooyfardW019a} that w.h.p. $|T_w|\leq \sqrt{n}$. In another $O(\sqrt{n}+D)$ rounds, we run BFS trees from all nodes in $T_w$, and denote $D_5=\max\limits_{t\in T_w,s\in S} d(s,t)$.

To conclude the proof, assume towards a contradiction that $D_5<\frac{3D_{ST}}{5}$, and consider the node $a$ on the shortest $P_{wt^*}$ from $w$ to $t^*$ that satisfies $d(w,b)=\frac{D_{ST}}{5}$. Similarly to before, the proof shows that it must hold that $a\in S$.
Now, the proof shows that in this case, it must hold that $D_4$ is in fact a good approximation.
If $a\in S$ and $b\in T$, there has to be an edge $(s',t'),s'\in S,t'\in T$ on the path $P_{t^*w}$ on the part between $a,b$. In particular this means that $s'\in S_w$, so we ran BFS from $t(s')$. The proof shows, using triangle inequalities, that $d(s^*,t(s))\geq \frac{3D_{ST}}{5}$, and since $D_4\geq d(s^*,t(s))$, the analysis is concluded. 

\end{proof}

\begin{theorem-repeat}{thm:approx_bichro_diameter_weighted_intro}
\approxweightedupperbichromaticdiameter 
\end{theorem-repeat}

\begin{proof}
We implement the sequential algorithm of Dalirrooyfard et al. \cite{DBLP:conf/icalp/DalirrooyfardW019a} in the distributed setting. 
The approach of the algorithm is to find the minimum weight edge between $S$ and $T$, denoted by $(s,t)$, and to let $D'=\max\{\max_{s'\in S} d(s',t), \max_{t'\in T}d(s,t')\}$ be our estimate of the $ST$ diameter. Using the triangle inequality, this can be shown to provide an approximate solution as stated.

To find the edge $(s,t)$, each vertex broadcasts its minimum weight $ST$ edge in the first step, and then upon receiving messages from neighbors, each node updates and broadcasts the minimum weight known to it. After $O(D)$ rounds all nodes have the minimum $ST$ edge, where $D$ is the diameter of the graph. 

Now, the nodes run two SSSP instances, one from $s$ and one from $t$, in $O(T(SSSP))$ rounds. In another $O(D)$ rounds of propagating the maximal distances, all nodes can compute the required estimate $D'=\max\{\max_{s'\in S} d(s',t), \max_{t'\in T}d(s,t')\}$. Since $T(SSSP)\ge O(D)$, the total number of rounds is $O(T(SSSP))$. 
\end{proof}


\section{Hardness of Approximation}\label{sec:-hardnessOfApprox}

In this section, we prove the lower bound results of the paper. As stated, we use reductions from 2-party communication complexity.
To formalize the reductions, we restate the following definition from Censor-Hillel et al. \cite{DBLP:conf/wdag/Censor-HillelKP17}.

\begin{definition}[Family of Lower Bound Graphs]
\label{def: family of lb graphs}
	Given integers $K$ and $n$, a Boolean function $f:\{0,1\}^{K} \times \{0,1\}^{K} \to \{0 ,1 \}$ and some Boolean graph property or predicate denoted $P$,
	a set of graphs $\set{G_{x,y}=(V,E_{x,y}) \mid x,y\in \{0,1\}^K}$ is called a \emph{family of lower bound graphs} with respect to $f$ and $P$ if the following hold:
	\begin{enumerate}
		\item The set of vertices $V$ is the same for all the graphs in the family, and we denote by $V_A,V_B$ a fixed partition of the vertices.
		\item Given $x,y\in \{0,1\}^{K}$, the only part of the graph which is allowed to be dependent on $x$ (by adding edges or weights, no adding vertices) is $G[V_A]$.
		\item Given $x,y\in \{0,1\}^{K}$, the only part of the graph which is allowed to be dependent on $y$ (by adding edges or weights, no adding vertices) is $G[V_B]$.
		\item $G_{x,y}$ satisfies $P$ if and only if $f(x,y)=1$.
	\end{enumerate}
    The set of edges $E(V_A,V_B)$ is denoted by $E_{\cut}$, and is the same for all graphs in the family.
\end{definition}
We use the following theorem whose proof can be found in Censor-Hillel et al. \cite{DBLP:conf/wdag/Censor-HillelKP17}, with $CC^R(f)$ denoting the randomized communication complexity of $f$.

\begin{theorem}\label{generallowerboundtheorem}
	Fix a function $f:\{0,1\}^{K} \times \{0,1\}^{K} \to \{0 ,1 \}$ and a predicate $P$. If there exists a family of lower bound graphs $\{G_{x,y} \}$ w.r.t $f$ and $P$, then
	every randomized algorithm for deciding $P$ takes $\Omega(CC^R (f)/(\size{E_{\cut}} \log n))$ rounds.
\end{theorem}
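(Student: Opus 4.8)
The plan is the now-standard simulation argument: turn a fast \cgst algorithm for $P$ into a cheap two-party protocol for $f$, then invoke the communication lower bound $CC^R(f)$. Suppose $Alg$ is a (possibly randomized) \cgst algorithm deciding $P$ in $T$ rounds with success probability at least $2/3$. I would build a two-party protocol $\Pi$ for $f$ as follows. On inputs $x$ and $y$, Alice and Bob consider the graph $G_{x,y}$ from the assumed family of lower bound graphs: by properties (1)--(3) of \Cref{def: family of lb graphs}, Alice can construct $G[V_A]$ from $x$ alone and Bob can construct $G[V_B]$ from $y$ alone, while the vertex set $V$, the partition $(V_A,V_B)$, all identifiers, and the cut edge set $E_{\cut}$ are input-independent and hence hard-wired into $\Pi$. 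The players first use public randomness to sample the random tapes of all $n$ nodes in the would-be execution of $Alg$; if only private coins are permitted, Newman's theorem adds an extra $O(\log n)$ bits, which changes nothing asymptotically.

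The heart of $\Pi$ is a round-by-round simulation of $Alg$ on $G_{x,y}$ maintaining the invariant that, after simulating round $i$, Alice holds the complete internal state of every node in $V_A$ and Bob holds the complete internal state of every node in $V_B$. This holds initially since each node's initial state depends only on its identifier, its incident edges, and its random tape, all known to the owning player. To advance to round $i+1$: each node's outgoing messages are a function of its current state and tape, so Alice can compute, for every $a\in V_A$, all messages $a$ sends; she delivers internally those bound for other nodes of $V_A$, and transmits to Bob only the message crossing each cut edge $(a,b)\in E_{\cut}$ with $b\in V_B$, and symmetrically Bob transmits the messages on cut edges heading into $V_A$. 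Each such message is $O(\log n)$ bits, so one simulated round costs $O(\size{E_{\cut}}\log n)$ bits; after the exchange both players update the states of their own nodes, restoring the invariant.

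After $T$ rounds the simulation is complete, so Alice knows the output of every node in $V_A$ and Bob that of every node in $V_B$; with $O(1)$ further bits they combine these (according to whatever output convention ``deciding $P$'' uses, e.g.\ taking an OR of per-node rejections) into the global decision of $Alg$ on $G_{x,y}$. By property (4) of the family this decision equals $f(x,y)$ whenever $Alg$ is correct, so $\Pi$ computes $f$ with probability at least $2/3$ and total communication $O\big(T\cdot\size{E_{\cut}}\log n\big)$ bits. Since any such protocol needs $\Omega(CC^R(f))$ bits, rearranging yields $T=\Omega\big(CC^R(f)/(\size{E_{\cut}}\log n)\big)$, as claimed.

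I do not expect a genuine mathematical obstacle here — the content of the theorem \emph{is} this simulation. The only care needed is bookkeeping: checking that everything Alice and Bob rely on about the ``fixed'' part of $G_{x,y}$ (the common vertex names, the partition, the cut, and any input-independent edges) is truly independent of $x,y$ so that it can legitimately be built into $\Pi$; and reconciling the randomness model (public versus private coins, and a possible constant-factor success amplification) with the precise definition of $CC^R(f)$. Both are routine and affect only constants, which is why the theorem is quoted rather than reproved in full.
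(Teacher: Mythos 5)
Your proposal is correct and is exactly the standard simulation argument that the paper itself sketches in its ``Communication Complexity Framework'' subsection and that underlies the cited proof in Censor-Hillel et al.: simulate the $T$-round algorithm, paying $O(\size{E_{\cut}}\log n)$ bits per round for the cut messages, and compare the total against $CC^R(f)$. No meaningful difference in approach, and the bookkeeping points you flag (fixed cut, public vs.\ private coins via Newman) are handled the same way there.
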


\subsection{Lower bounds for radius}
\label{ssec:-radius_hard}
We start with proving our two lower bounds for weighted or directed radius approximations.

We divide the proof of \cref{thrmintro:-weightedDirectedRadiusLowerBoundTwo} into two cases which we prove separately.

\begin{theorem-repeat}{thrmintro:-weightedDirectedRadiusLowerBoundTwo}[\textbf{Weighted case}]
For any $\eps=1/\poly(n)$, $(2-\eps)$-approximation of the radius of a weighted graph with $n$ nodes requires $\Omega(n/\log n)$ rounds, even when the graph has constant hop-diameter.
\end{theorem-repeat}
\begin{proof}
We reduce from the Tribes problem with vector sets $A$ and $B$ of size $N$. This construction is similar to that of \cite[Theorem 7]{holzer_et_al:LIPIcs:2016:6597}.

\begin{figure}[ht]
        \begin{center}
        \includegraphics[width=.7\textwidth]{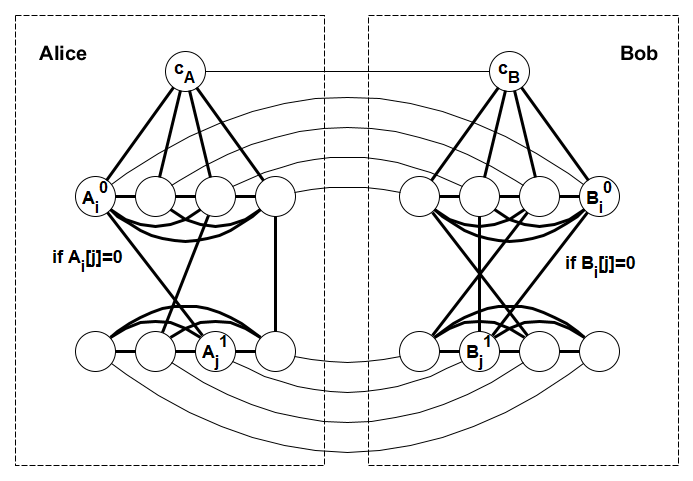}
        
        \caption{Sketch of Theorem \ref{thrmintro:-weightedDirectedRadiusLowerBoundTwo}, weighted case construction. Bold lines represent edges of weight $t$.}
        \label{fig:-wtedRad2}
        \end{center}
\end{figure}

Figure \ref{fig:-wtedRad2} illustrates our family of lower bound graphs. We construct four cliques $A^0,A^1,B^0,B^1$ of size $N$, where the edges of the cliques have weight $t$, a value we will set later. Let $K_i$ be the $i$th node in clique $K$. Add two nodes $c_A$ and $c_B$.

Connect all nodes in $A^0$ to $c_A$ with edges of weight $t$, and connect all nodes in $B^0$ to $c_B$ with edges of weight $t$. Connect $c_A$ and $c_B$ with an edge of weight 1. For all $i\in[N]$ and $b\in\set{0,1}$, connect $A_i^b$ and $B_i^b$ with an edge of weight 1. Connect $A_i^0$ and $A_j^1$ with an edge of weight $t$ if and only if $A_i[j]=0$. Connect $B_i^0$ and $B_j^1$ with an edge of weight $t$ if and only if $B_i[j]=0$. Alice will simulate the nodes $A^0\cup A^1\cup\set{c_A}$, and Bob will simulate the nodes $B^0\cup B^1\cup\set{c_B}$.

First, we claim that if $(A,B)$ is a `yes' instance of Tribes, then the radius is at most $t+2$. To show this, note that in this case, there must be some $i$ such that the $i$th vectors of $A$ and $B$ are orthogonal. Consider the node $A_i^0$. It may reach in distance at most $t+1$ all nodes in $B^0\cup A^0$, via a clique edge and an edge in the matching between $A^0$ and $B^0$. It may also reach $\set{c_A,c_B}$ in at most $t+1$. It may also reach all nodes in $A^1\cup B^1$ in distance at most $t+2$, because for any $j$ where $A_i[j] = 0$ or $B_i[j]=0$, either $A_i^0$ may reach $A_j^1$ in distance $t$ or $B_i^0$ may reach $B_j^1$ in distance $t$. Since $A_i$ and $B_i$ are orthogonal, this is true for all $j$. Thus the eccentricity of $A_i^0$ is at most $t+2$, which upper-bounds the radius.

Second, we claim that if $(A,B)$ is a `no' instance of Tribes, then the radius is at least $2t$. To see this, first note that $c_A$ and $c_B$ have eccentricity at least $2t$, because that is the shortest possible distance between them and $B^1\cup A^1$. By the same argument, the eccentricity of all nodes in $A^1\cup B^1$ is also at least $2t$. For all $i$, $A_i$ and $B_i$ are not orthogonal, which means that for all $i$ there is some $j$ such that neither $A_i^0$ nor $B_i^0$ has an edge to $B_j^1$ or $A_j^1$. Clearly any other path from $B_i^0$ or $A_i^0$ to $B_j^1$ or $A_j^1$ is at least of length $2t$, via a clique edge of weight $t$. Thus the eccentricities of all nodes are at least $2t$, so the radius is at least $2t$.

We set $t=\ceil{\frac{4}{\eps}}$ so that a $(2-\eps)$-approximate radius algorithm needs to distinguish between $t+2$ and $2t$. The constructed graph $G_{A,B}$ has $n=O(N)$ nodes with a cut of size $O(n)$, which by Theorem \ref{generallowerboundtheorem} and the lower bound of $\Omega(N^2)$ for the communication complexity of Tribes, implies that the radius algorithm requires $\Omega(n/\log{n})$ rounds.
\end{proof}


\begin{theorem-repeat}{thrmintro:-weightedDirectedRadiusLowerBoundTwo}[\textbf{Directed case}]
For any $\eps>0$, $(2-\eps)$-approximation of the radius of a directed graph with $n$ nodes requires $\Omega(n\eps/\log^2(n\eps))$ rounds, even when the graph has constant hop-diameter and $\tilde{O}(n\eps)$ edges.
\end{theorem-repeat}
\begin{proof}
We reduce from the HSE problem on sets $A$ and $B$ of size $N$ and vectors of size $d=2\log{N}+1$. First, if Alice detects a coordinate that is 0 for all elements of $A$, she conveys this information to Bob, and both remove that coordinate from all vectors. This requires only $O(\log{N})$ bits of communication.

\begin{figure}[ht]
        \begin{center}
        \includegraphics[width=\textwidth]{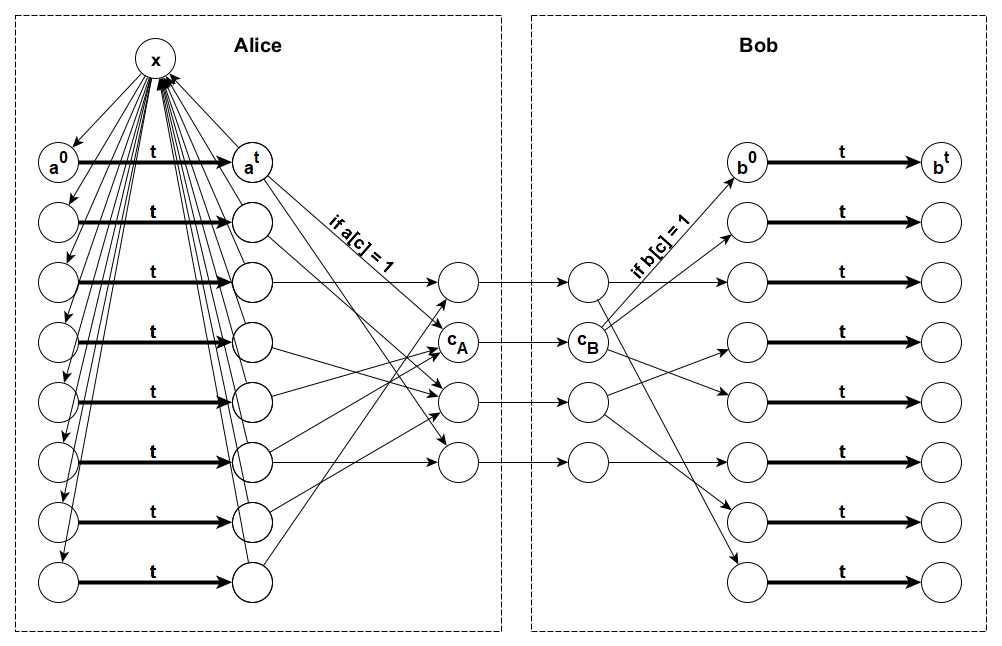}
        \caption{Sketch of Theorem \ref{thrmintro:-weightedDirectedRadiusLowerBoundTwo}, directed case construction. Bold lines represent paths of length $t$.}
        \label{fig:-dirRad2}
        \end{center}
\end{figure}

Figure \ref{fig:-dirRad2} illustrates our family of lower bound graphs, inspired by a construction from the sequential setting that reduces HSE to the source radius problem \cite[Theorem 1.4]{Abboud:2016:AFP:2884435.2884463}.
Let $t$ be an integer to be decided later. For each $a\in A$, create a directed path of length $t$; denote the first node in the path by $a^0$ and the last by $a^t$. Do the same for each $b\in B$. Add a node $x$, and for each $a$, add edges $(a^t,x)$ and $(x,a^0)$. Next, for each coordinate $c\in[d]$, create two nodes $c_A$ and $c_B$, with the edge $(c_A,c_B)$. For each $a\in A$, if $a[c]=1$, add edge $(a^t,c_A)$. For each $b\in B$, if $b[c]=1$, add edge $(c_B,b^0)$. Alice will simulate the $a$-paths, $x$, and the nodes $c_A$, and Bob will simulate the rest.

First, we claim that if there exists a hitting set $h\in A$, then the radius of the graph is at most $t+4$. To see this, consider the node $h^t$. It may reach all nodes $a^t$ in distance at most $t+2$ via $x$, all nodes $c_A$ in distance at most $t+3$ via the nodes $a^t$, and all nodes $c_B$ in distance at most $t+4$. There are no coordinates that are 0 for all $a\in A$, so each $c_A$ is reachable from {\em some} node $a^t$. Finally, note that because $h$ is a hitting set, for each $b\in B$ there is some coordinate $c$ such that $h[c]=b[c]=1$, so there exists the path $(h^t,c_A,c_B,b^0)$. From there, $h^t$ may reach all nodes on the $b$-path in at most $t$ additional steps. Thus, the eccentricity of $h^t$ is at most $t+4$, which upper-bounds the radius.

Second, we claim that if there is no hitting set, the radius is at least $2t+4$. For this, note that the only candidates for the center of the graph are the nodes $a^t$ and the node $x$, because all nodes on $a$-paths have eccentricity greater than the corresponding $a^t$, and all other nodes have infinite eccentricity. The node $x$ is clearly at distance at least $2t+4$ from every node $b^t$. Fix a vector $a$. There is some vector $b\in B$ such that $a\cdot b=0$; fix the value of $b$. There is no path of length $3$ from $a^t$ to $b^0$, because there is no coordinate that is 1 for both vectors. The only way for $a^t$ to reach $b^0$ is via $x$, at an additional cost of $t+2$ distance, so the distance from $a^t$ to $b^t$ is at least $2t+5$. Thus, every node has eccentricity at least $2t+4$, which lower-bounds the radius.

If we set $t$ such that $\frac{2t+4}{t+4}>2-\eps$, any algorithm for $(2-\eps)$-approximate radius must distinguish between the two cases. Note that the graph $G_{A,B}$ has $n = O(Nt) = O(N/\eps)$ nodes and $O(N\log{N})$ edges, with a cut of size $O(\log{N}) = O(\log(n\eps)) = O(\log{n})$. Thus, by Theorem \ref{generallowerboundtheorem} and the lower bound of $\Omega(N)$ for the communication complexity of HSE that we proved in Theorem \ref{thrm:-HSEhard}, any algorithm for $(2-\eps)$-approximate radius requires $\Omega(n/\log^2{n})$ rounds of communication.
\end{proof}


\subsection{Lower bounds for bi-chromatic diameter}\label{ssec:-bichromaticD_hard}

Now, we turn to prove our two lower bounds for bi-chromatic diameter.

\begin{theorem-repeat}{thm:lowerbound_diameter_bichromatic_intro}
\approxweightedlowerbichromaticdiameter
\end{theorem-repeat}

\begin{proof}
We reduce the OV problem on sets $A$ and $B$ of size $N$ and vectors of size $d=2\log{N}+1$ to an instance of bichromatic diameter on a undirected graph $G$. Without loss of generality we can assume that there is a coordinate $\hat{c}\in [d]$ such that $b[\hat{c}]=1$ for all $b\in B$ and $a[\hat{c}]=0$ for all $a\in A$. If no such coordinate exist, we can add it without affecting the OV instance. Let $t$ be an integer to be decided later. For each $a\in A$, let $a^0,\ldots,a^t$ be a path of length $t$ in $G$. For each $b\in B$, add a single node $b^0$. For each coordinate $c\in [d]$, create two nodes $c_A$ and $c_B$, with an edge $c_Ac_B$. For each $a\in A$, if $a[c]=1$, add a path of length $t$ from $a^t$ to $c_A$. Similarly, for each $b\in B$, if $b[c]=1$, add a path of length $t$ from $c_B$ to $b^0$. Let $S$ be the union of the set of all $a_i$ nodes, all $c_A$ nodes and all $c_B$ nodes. Let $T$ be the rest, i.e., $B=\{b^0~|~b\in B\}$. This completes the definition of the graph. 

Alice will simulate the subgraph of $a$ nodes and $c_A$s, and Bob will simulate the rest, which is the subgraph on $b$ nodes and $c_B$s. Note that the set Alice simulates is not $S$, it is a subset of $S$. 

First, suppose that there is no orthogonal pair in the OV instance. Then for each $a\in A$ and $b\in B$, there is a coordinate $c$ where $a[c]=b[c]=1$. So the path from $a^0$ to $b^0$ going through $a^t, c_A$ and $c_B$ has distance $3t+1$. So the distance between any $a^i$ to $b^0$ is at most $3t+1$. The distance between any coordinate node $c_B$ and any $b^0$ is at most $3$, since for some $b'$ where $b'[c]=1$, the path $c_B b'^0 \hat{c}_B b^0$ connects the two nodes. So the bichromatic diameter is $3t+1$ in this case.

Now, suppose that there is an orthogonal pair $(a,b)$ in the OV instance. Since there is no path of the form $a^0,\ldots,a^t,c^A,c^B,b^0$ for any coordinate $c$, the distance between $a^0$ and $b^0$ is at least $5t+1$, as a path between $a^0$ and $b^0$ must either first pass a node $b'^0$, or pass a node $a'^t$ through a coordinate $c_A$. So the bichromatic diameter is at least $5t+1$ in this case.

If we set $t$ such that $\frac{5t+1}{3t+1}>\frac{5}{3}-\epsilon$, any algorithm for $(\frac{5}{3}-\epsilon)$-approximate bichromatic diameter must distinguish the two cases. The graph has $n=O(tN\log{N})=O(N\log{N}/\eps)$ nodes and edges, with a cut of size $O(\log{N})=O(\log{n\epsilon})=O(\log{n})$. So, by Theorem \ref{generallowerboundtheorem} and the lower bound of $\Omega(N)$ on the communication complexity of OV, any algorithm for $(5/3-\epsilon)$-approximation bichromatic radius requires $\Omega(n/\log^3{n})$ rounds of communication.
\end{proof}

\begin{theorem-repeat}{thm:lowerbound_diameter_bichromatic_directed_intro}
\twoapproxbichromaticdirecteddiameter
\end{theorem-repeat}

\begin{proof}
We modify our construction for the proof of Theorem \ref{thm:lowerbound_diameter_bichromatic_intro}. We reduce the OV problem on sets $A$ and $B$ and vectors of size $d=2\log{N}+1$ to an instance of bichromatic diameter on a directed graph. First, recall that for each $a\in A$, we add a path $a^0,\ldots,a^t$ of length $t$, where we direct the edges from $a^i$ to $a^{i+1}$ for all $i=0\ldots,t-2$.
For each $b\in B$, we add a node $b^0$ and for each coordinate $c\in [d]$, we add two nodes $c_A$ and $c_B$, where there is an edge from $c_A$ to $c_B$. For each $a\in A$ and $c\in [d]$, connect $a^t$ to $c_A$ if $a[c]=1$. For each $b\in B$ and $c\in [d]$ connect $c_B$ to $b^0$ if $b[c]=1$. Let $P=p_0,\ldots,p_{t-2}$ be a directed path of length $t-2$. Connect all nodes $b^0$ to $p_0$, and connect $p_{t-2}$ to all the coordinate nodes $c_B$. Let $T=\{b^0~|~b\in B\}$ and let $S$ be the rest of the graph. This completes the definition of the bichromatic instance. 

Bob will simulate the subgraph induced on $T$, the path $P$ and the coordinate nodes $c_B$. Alice will simulate the subgraph induced on the rest of the nodes. 

First, suppose that there is no orthogonal pair in the OV instance. Thus, for each $a\in A$ and $b\in B$, there is a coordinate $c$ such that $a[c]=b[c]=1$, and hence $d(a^0,b^0)=t+3$. For any $c\in [d]$ and $b\in B$, we have that $d(c_A,b^0)\le t+3$. This is because if $b[c']=1$ for some coordinate $c'\in [d]$, then the path $c_A,c_B,b'^0,p_0,\ldots,p_{t-2},c'_B,b^0$ connects $c_A$ and $b^0$ for some $b'\in B$ where $b'[c]=1$. So the $ST$ diameter is $t+3$ in this case. 

Now, suppose that there is an orthogonal pair $(a,b)$ in the OV instance. In this case, the path from $a^0$ to $b^0$ has to use $P$, since there is no path from $a^0$ to $b^0$ in $G\setminus P$. Hence, we have that $d(a^0,b^0)\ge 2t+3$. If we set $t$ such that $\frac{2t+3}{t+3}\ge 2-\epsilon$, any algorithm for $(2-\epsilon)$-approximate bichromatic diameter must distinguish the two cases. The graph has $n=O(tN)=O(N/\epsilon)$ nodes and $O(N\log{N}/\epsilon)$ edges, with a cut of size $O(\log{N})=O(\log{n})$. So by Theorem \ref{generallowerboundtheorem} and the lower bound of $\Omega(N)$ on the communication complexity of OV, any algorithm for $(2-\epsilon)$-approximation bichromatic diameter in directed graphs requires $\Omega(n/\log^2{n})$ rounds of communication.
\end{proof}

\subsection{Lower bounds for arbitrary approximation ratios}\label{ssec:any_approx_lower_bound}
Our main approach in order to show lower bounds for arbitrary approximation ratios is to reduce from the SCSV problem defined next. A challenge is to make sure that the reductions one employs can be efficiently simulated in \cgst. A standard combinatorial reduction does not suffice since at the end of the day, the vertices of the graph need to be able to simulate the reduction in order to solve the original problem. We overcome this by coming up with suitable reductions in which every round in the new graph can be simulated using a constant amount of rounds on the original graph. 
\begin{definition}[The Spanning Connected Subgraph Verification (SCSV) Problem {\cite[Section 4]{DasSarma:2011:DVH:1993636.1993686}}]\label{def:-SCSV}
Given a connected, unweighted, undirected graph $G$, in which a subset of edges are marked as being part of a subgraph $H$ of $G$, determine whether $H$ is a connected, spanning subgraph of $G$.
\end{definition}
It is shown by Das Sarma et al. in \cite[Theorem 5.1]{DasSarma:2011:DVH:1993636.1993686} that solving SCSV requires $\tilde{\Omega}(\sqrt{n}+D)$ rounds. Even for randomized algorithms succeeding with high probability.

\begin{theorem-repeat}{thrmintro:-weightedDiameterLowerBoundAny}
\anyapproxweighteddiameter
\end{theorem-repeat}
\begin{proof}
We reduce from SCSV to any approximation of weighted diameter, radius, or all eccentricities. Let $\mathcal{A}$ be an algorithm for one of these problems with approximation ratio $\alpha(n)$.  We begin with an instance of SCSV with graph $G=(V,E)$ and subgraph $H = (V, E_H)$. We set the weight of all edges in the subgraph to 1, and the weight of all other edges to $n\cdot\alpha(n)$. This requires no additional communication. Note that if the subgraph is spanning and connected, all eccentricities are at most $n-1$ using edges of the subgraph of weight 1. Otherwise, the graph is broken up into at least two connected components of the subgraph connected by edges of weight $n\cdot\alpha(n)$, and the eccentricity of every node is thus at least $n\cdot\alpha(n)$. $\mathcal{A}$ must distinguish between these two cases, since it approximates the maximum (diameter) eccentricity. If $\mathcal{A}$ is for bi-chromatic diameter, we may set any node as the sole element of $S$ and all others to $T$, and detect whether the bi-chromatic diameter is at most $n-1$ or at least $n\cdot\alpha(n)$. Applying the lower bound of $\tilde{\Omega}(\sqrt{n}+D)$ rounds for SCSV gives that $\mathcal{A}$ requires $\tilde{\Omega}(\sqrt{n}+D)$ rounds.
\end{proof}

\begin{theorem-repeat}{thm:bichromatic_directed_diameter_lowerbound_any}
\anyapproxbichromaticdirecteddiameter
\end{theorem-repeat}
\begin{proof}
We reduce from SCSV to any approximation of directed bi-chromatic diameter. We begin with an instance of SCSV with graph $G=(V,E)$ and subgraph $H = (V, E_H)$. We define a new graph $G'=(V',E')$, as follows, essentially duplicating $H$ and directing it to $G$ (see Figure \ref{figure:dirRadAny}). For each $v\in V$, there are two nodes $v_G$ and $v_H$ in $V'$, and $(v_H,v_G)\in E'$. If $(u,v)\in E$, then $(u_G,v_G)\in E'$ and $(v_G,u_G)\in E'$, and if $(u,v)\in E_H$, then $(u_H,v_H),(v_H,u_H)\in E'$. Recall that $G$ is connected, and every $v_H$ has an edge to $v_G$, so every $v_H$ can reach every $u_G$ in $V'$. Also, every $v_G$ has infinite eccentricity, as there is no $(u_G,v_H)$ edge in $V'$.

\begin{figure}[ht]
        \begin{center}
        \includegraphics[width=.7\textwidth]{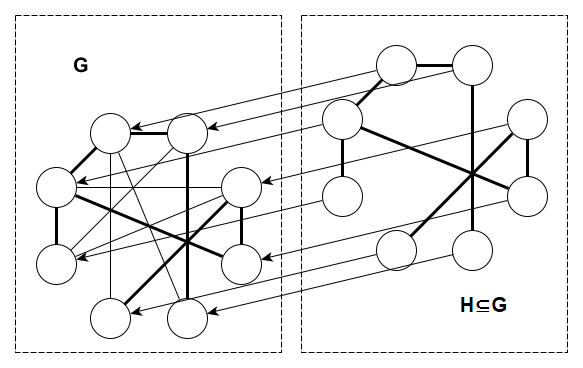}
        \caption{Sketch of Theorem \ref{thm:bichromatic_directed_diameter_lowerbound_any} construction. Lines without arrows denote bidrectional edges, and bold lines highlight subgraph $H$ in $G$.}
        \label{figure:dirRadAny}
        \end{center}
\end{figure}

We claim that $H$ is spanning and connected iff every $v_H\in V'$ has finite eccentricity. If $H$ is spanning and connected, there is a path from each $v_H$ to all other $u_H$, and therefore all $v_H$ can reach all nodes in the graph in distance at most $n$. Otherwise, $H$ has at least two connected components, and every node $v_H$ cannot reach any node in a different component from itself. Thus, all $v_H$ have infinite eccentricity, which means that all nodes in $G'$ have infinite eccentricity and the radius is infinite. 

Given any approximation algorithm for $\mathcal{A}$ for bi-chromatic diameter, we may set $S=\set{v_H}$ for any arbitrary member of $v_H\in V'$ and set $T$ to all other nodes, and the bi-chromatic diameter will be the eccentricity of $v_H$, either finite or infinite.

We complete our reduction by pointing out that a node $v$ in the original instance may simulate $v_H$ and $v_G$, at a cost of doubling the number of rounds of the approximation algorithm. This is because the original edges are doubled in the reduction, and the only edges added in the reduction are between $v_H$ and $v_G$, which are simulated by the same node. No additional communication is required for the reduction, so we apply the lower bound of $\tilde{\Omega}(\sqrt{n}+D)$ rounds for SCSV and achieve the same lower bound of $\tilde{\Omega}(\sqrt{n}+D)$ for any approximation algorithm $\mathcal{A}$ for radius or eccentricities.
\end{proof}

\begin{theorem-repeat}{thrmintro:-unweightedDirectedDiameterLowerBoundAny}
\anyapproxdirecteddiameter
\end{theorem-repeat}
\begin{proof}
We reduce from SCSV to any approximation of directed diameter. We use our construction for the proof of Theorem \ref{thm:bichromatic_directed_diameter_lowerbound_any}, with the following change: for an arbitrary node $v\in G$, we add the edge $(v_G,v_H)$ in $G'$. Note that for all $v\neq v'\in G$ we have that $(v'_G,v'_H)\notin E'$.

If the SCSV instance gives a spanning, connected subgraph, then all nodes may reach all other nodes in $G'$, by similar arguments as in the proof of Theorem \ref{thm:bichromatic_directed_diameter_lowerbound_any}; therefore, the eccentricity of all nodes, and therefore the diameter, is finite. However, if $H$ is not spanning, then consider a node $v'\in G$ that is not reachable from $v$ via edges in $H$. There is no path from $v_G$ to $v'_G$ in $G'$ either, so the diameter is infinite in this case.
Since no additional communication is required for the reduction, we apply the lower bound of $\tilde{\Omega}(\sqrt{n}+D)$ rounds for SCSV and achieve the same lower bound of $\tilde{\Omega}(\sqrt{n}+D)$ for any approximation algorithm for directed diameter.
\end{proof}


\section*{Acknowledgment}
This project has received funding from the European Research Council (ERC) under the European
Unions Horizon 2020 research and innovation programme under grant agreement No 755839.

\bibliographystyle{alpha}
\bibliography{main}

\newcommand{\etalchar}[1]{$^{#1}$}
\begin{thebibliography}{HPRW14b}

\bibitem[ACK16]{AbboudCK16}
Amir Abboud, Keren Censor{-}Hillel, and Seri Khoury.
\newblock Near-linear lower bounds for distributed distance computations, even
  in sparse networks.
\newblock In {\em {DISC}}, pages 29--42, 2016.

\bibitem[ACKP19]{DBLP:journals/corr/abs-1901-01630}
Amir Abboud, Keren Censor{-}Hillel, Seri Khoury, and Ami Paz.
\newblock Smaller cuts, higher lower bounds.
\newblock {\em CoRR}, 2019.

\bibitem[AES91]{DBLP:journals/dcg/AgarwalES91}
Pankaj~K. Agarwal, Herbert Edelsbrunner, and Otfried Schwarzkopf.
\newblock Euclidean minimum spanning trees and bichromatic closest pairs.
\newblock {\em Discret. Comput. Geom.}, pages 407--422, 1991.

\bibitem[AR18]{DBLP:journals/corr/abs-1810-08544}
Udit Agarwal and Vijaya Ramachandran.
\newblock New and simplified distributed algorithms for weighted all pairs
  shortest paths.
\newblock {\em CoRR}, 2018.

\bibitem[AR19]{DBLP:conf/ipps/AgarwalR19}
Udit Agarwal and Vijaya Ramachandran.
\newblock Distributed weighted all pairs shortest paths through pipelining.
\newblock In {\em {IPDPS}}, pages 23--32, 2019.

\bibitem[AR20]{DBLP:conf/spaa/AgarwalR20}
Udit Agarwal and Vijaya Ramachandran.
\newblock Faster deterministic all pairs shortest paths in congest model.
\newblock In {\em {SPAA}}, pages 11--21, 2020.

\bibitem[ARKP18]{DBLP:conf/podc/AgarwalRKP18}
Udit Agarwal, Vijaya Ramachandran, Valerie King, and Matteo Pontecorvi.
\newblock A deterministic distributed algorithm for exact weighted all-pairs
  shortest paths in {$\widetilde{\uppercase{O}}$}(n\({}^{\mbox{3/2}}\)) rounds.
\newblock In {\em {PODC}}, pages 199--205, 2018.

\bibitem[AWW16]{Abboud:2016:AFP:2884435.2884463}
Amir Abboud, Virginia~Vassilevska Williams, and Joshua Wang.
\newblock Approximation and fixed parameter subquadratic algorithms for radius
  and diameter in sparse graphs.
\newblock In {\em {SODA}}, pages 377--391, 2016.

\bibitem[BCD{\etalchar{+}}19]{BachrachCDELP19}
Nir Bachrach, Keren Censor{-}Hillel, Michal Dory, Yuval Efron, Dean
  Leitersdorf, and Ami Paz.
\newblock Hardness of distributed optimization.
\newblock In {\em {PODC}}, pages 238--247, 2019.

\bibitem[BJKS04]{DBLP:journals/jcss/Bar-YossefJKS04}
Ziv Bar{-}Yossef, T.~S. Jayram, Ravi Kumar, and D.~Sivakumar.
\newblock An information statistics approach to data stream and communication
  complexity.
\newblock {\em J. Comput. Syst. Sci.}, pages 702--732, 2004.

\bibitem[BK18]{BRINGMANN201810}
Karl Bringmann and Sebastian Krinninger.
\newblock A note on hardness of diameter approximation.
\newblock {\em Information Processing Letters}, pages 10 -- 15, 2018.

\bibitem[BKKL17]{becker_et_al:LIPIcs:2017:8003}
Ruben Becker, Andreas Karrenbauer, Sebastian Krinninger, and Christoph Lenzen.
\newblock {Near-Optimal Approximate Shortest Paths and Transshipment in
  Distributed and Streaming Models}.
\newblock In {\em (DISC)}, pages 7:1--7:16, 2017.

\bibitem[BN19]{Bernstein:2019:DEW:3313276.3316326}
Aaron Bernstein and Danupon Nanongkai.
\newblock Distributed exact weighted all-pairs shortest paths in near-linear
  time.
\newblock In {\em {STOC}}, pages 334--342, 2019.

\bibitem[BRS{\etalchar{+}}18]{Backurs:2018:TTA:3188745.3188950}
Arturs Backurs, Liam Roditty, Gilad Segal, Virginia~Vassilevska Williams, and
  Nicole Wein.
\newblock Towards tight approximation bounds for graph diameter and
  eccentricities.
\newblock In {\em {STOC}}, pages 267--280, 2018.

\bibitem[CD18]{Censor-HillelD18}
Keren Censor{-}Hillel and Michal Dory.
\newblock Distributed spanner approximation.
\newblock In {\em {PODC}}, pages 139--148, 2018.

\bibitem[CDKL19]{DBLP:conf/podc/Censor-HillelDK19}
Keren Censor{-}Hillel, Michal Dory, Janne~H. Korhonen, and Dean Leitersdorf.
\newblock Fast approximate shortest paths in the congested clique.
\newblock In {\em {PODC}}, pages 74--83, 2019.

\bibitem[CG20]{2eccalgo}
Keerti Choudhary and Omer Gold.
\newblock Extremal distances in directed graphs: Tight spanners and
  near-optimal approximation algorithms.
\newblock In {\em {SODA}}, pages 495--514, 2020.

\bibitem[CGR16]{Cairo:2016:NBA:2884435.2884462}
Massimo Cairo, Roberto Grossi, and Romeo Rizzi.
\newblock New bounds for approximating extremal distances in undirected graphs.
\newblock In {\em {SODA}}, pages 363--376, 2016.

\bibitem[CK18]{CzumajK18}
Artur Czumaj and Christian Konrad.
\newblock Detecting cliques in {CONGEST} networks.
\newblock In {\em {DISC}}, pages 16:1--16:15, 2018.

\bibitem[CKK{\etalchar{+}}19]{DBLP:journals/dc/Censor-HillelKK19}
Keren Censor{-}Hillel, Petteri Kaski, Janne~H. Korhonen, Christoph Lenzen, Ami
  Paz, and Jukka Suomela.
\newblock Algebraic methods in the congested clique.
\newblock {\em Distributed Comput.}, pages 461--478, 2019.

\bibitem[CKP17]{DBLP:conf/wdag/Censor-HillelKP17}
Keren Censor{-}Hillel, Seri Khoury, and Ami Paz.
\newblock Quadratic and near-quadratic lower bounds for the {CONGEST} model.
\newblock In {\em {DISC}}, pages 10:1--10:16, 2017.

\bibitem[CM19]{DBLP:conf/wdag/ChechikM19}
Shiri Chechik and Doron Mukhtar.
\newblock Reachability and shortest paths in the broadcast {CONGEST} model.
\newblock In {\em {DISC}}, pages 11:1--11:13, 2019.

\bibitem[CM20]{10.1145/3382734.3405729}
Shiri Chechik and Doron Mukhtar.
\newblock Single-source shortest paths in the congest model with improved
  bound.
\newblock In {\em {PODC}}, pages 464--473, 2020.

\bibitem[DG04]{DBLP:journals/jgaa/DumitrescuG04}
Adrian Dumitrescu and Sumanta Guha.
\newblock Extreme distances in multicolored point sets.
\newblock {\em J. Graph Algorithms Appl.}, pages 27--38, 2004.

\bibitem[DKO14]{DruckerKO13}
Andrew Drucker, Fabian Kuhn, and Rotem Oshman.
\newblock On the power of the congested clique model.
\newblock In {\em {PODC}}, pages 367--376, 2014.

\bibitem[DN19]{DBLP:conf/opodis/DinitzN19}
Michael Dinitz and Yasamin Nazari.
\newblock Massively parallel approximate distance sketches.
\newblock In {\em {OPODIS}}, pages 35:1--35:17, 2019.

\bibitem[DP20]{DBLP:journals/corr/abs-2003-03058}
Michal Dory and Merav Parter.
\newblock Exponentially faster shortest paths in the congested clique.
\newblock In {\em {PODC}}, pages 59--68, 2020.

\bibitem[DSHK{\etalchar{+}}11]{DasSarma:2011:DVH:1993636.1993686}
Atish Das~Sarma, Stephan Holzer, Liah Kor, Amos Korman, Danupon Nanongkai,
  Gopal Pandurangan, David Peleg, and Roger Wattenhofer.
\newblock Distributed verification and hardness of distributed approximation.
\newblock In {\em {STOC}}, pages 363--372, 2011.

\bibitem[DWVW19]{DBLP:conf/icalp/DalirrooyfardW019a}
Mina Dalirrooyfard, Virginia~Vassilevska Williams, Nikhil Vyas, and Nicole
  Wein.
\newblock Tight approximation algorithms for bichromatic graph diameter and
  related problems.
\newblock In {\em {ICALP}}, pages 47:1--47:15, 2019.

\bibitem[Elk17]{DBLP:conf/stoc/Elkin17}
Michael Elkin.
\newblock Distributed exact shortest paths in sublinear time.
\newblock In {\em {STOC}}, pages 757--770, 2017.

\bibitem[EN19a]{DBLP:journals/siamcomp/ElkinN19}
Michael Elkin and Ofer Neiman.
\newblock Hopsets with constant hopbound, and applications to approximate
  shortest paths.
\newblock {\em {SIAM} J. Comput.}, pages 1436--1480, 2019.

\bibitem[EN19b]{DBLP:conf/spaa/ElkinN19}
Michael Elkin and Ofer Neiman.
\newblock Linear-size hopsets with small hopbound, and constant-hopbound
  hopsets in {RNC}.
\newblock In {\em {SPAA}}, pages 333--341, 2019.

\bibitem[FGKO18]{FischerGKO18}
Orr Fischer, Tzlil Gonen, Fabian Kuhn, and Rotem Oshman.
\newblock Possibilities and impossibilities for distributed subgraph detection.
\newblock In {\em {SPAA}}, pages 153--162, 2018.

\bibitem[FHW12]{FrischknechtHW12}
Silvio Frischknecht, Stephan Holzer, and Roger Wattenhofer.
\newblock Networks cannot compute their diameter in sublinear time.
\newblock In {\em {SODA}}, pages 1150--1162, 2012.

\bibitem[FN18]{DBLP:conf/focs/ForsterN18}
Sebastian Forster and Danupon Nanongkai.
\newblock A faster distributed single-source shortest paths algorithm.
\newblock In {\em {FOCS}}, pages 686--697, 2018.

\bibitem[Gal16]{DBLP:conf/wdag/Gall16}
Fran{\c{c}}ois~Le Gall.
\newblock Further algebraic algorithms in the congested clique model and
  applications to graph-theoretic problems.
\newblock In {\em {DISC}}, pages 57--70, 2016.

\bibitem[GKP20]{GKP20}
Ofer Grossman, Seri Khoury, and Ami Paz.
\newblock Improved hardness of approximation of diameter in the congest model.
\newblock In {\em DISC}, pages 19:1--19:16, 2020.

\bibitem[GL18]{DBLP:conf/stoc/GhaffariL18}
Mohsen Ghaffari and Jason Li.
\newblock Improved distributed algorithms for exact shortest paths.
\newblock In {\em {STOC}}, pages 431--444, 2018.

\bibitem[HKN16]{DBLP:conf/stoc/HenzingerKN16}
Monika Henzinger, Sebastian Krinninger, and Danupon Nanongkai.
\newblock A deterministic almost-tight distributed algorithm for approximating
  single-source shortest paths.
\newblock In {\em {STOC}}, pages 489--498, 2016.

\bibitem[HNS17]{DBLP:conf/focs/HuangNS17}
Chien{-}Chung Huang, Danupon Nanongkai, and Thatchaphol Saranurak.
\newblock Distributed exact weighted all-pairs shortest paths in
  {$\widetilde{\uppercase{O}}$}(n\({}^{\mbox{5/4}}\)) rounds.
\newblock In {\em {FOCS}}, pages 168--179, 2017.

\bibitem[HP15]{holzer_et_al:LIPIcs:2016:6597}
Stephan Holzer and Nathan Pinsker.
\newblock {Approximation of Distances and Shortest Paths in the Broadcast
  Congest Clique}.
\newblock In {\em (OPODIS)}, pages 1--16, 2015.

\bibitem[HPRW14a]{32Diam}
Stephan Holzer, David Peleg, Liam Roditty, and Roger Wattenhofer.
\newblock Brief announcement: Distributed 3/2-approximation of the diameter.
\newblock In {\em {DISC}}, pages 562--564, 2014.

\bibitem[HPRW14b]{HolzerPRW14}
Stephan Holzer, David Peleg, Liam Roditty, and Roger Wattenhofer.
\newblock Distributed 3/2-approximation of the diameter.
\newblock In {\em {DISC}}, pages 562--564, 2014.

\bibitem[HW12]{HolzerW12}
Stephan Holzer and Roger Wattenhofer.
\newblock Optimal distributed all pairs shortest paths and applications.
\newblock In {\em {PODC}}, pages 355--364, 2012.

\bibitem[IP01]{DBLP:journals/jcss/ImpagliazzoP01}
Russell Impagliazzo and Ramamohan Paturi.
\newblock On the complexity of k-sat.
\newblock {\em J. Comput. Syst. Sci.}, pages 367--375, 2001.

\bibitem[JKS03]{DBLP:conf/stoc/JayramKS03}
T.~S. Jayram, Ravi Kumar, and D.~Sivakumar.
\newblock Two applications of information complexity.
\newblock In {\em {STOC}}, pages 673--682, 2003.

\bibitem[KI95]{DBLP:journals/ijcga/KatohI95}
Naoki Katoh and Kazuo Iwano.
\newblock Finding k farthest pairs and k closest/farthest bichromatic pairs for
  points in the plane.
\newblock {\em Int. J. Comput. Geometry Appl.}, pages 37--51, 1995.

\bibitem[KN97]{Kushilevitz:1996:CC:264772}
Eyal Kushilevitz and Noam Nisan.
\newblock {\em Communication Complexity}.
\newblock Cambridge University Press, 1997.

\bibitem[LP13a]{DBLP:conf/stoc/LenzenP13}
Christoph Lenzen and Boaz Patt{-}Shamir.
\newblock Fast routing table construction using small messages: extended
  abstract.
\newblock In {\em STOC}, pages 381--390, 2013.

\bibitem[LP13b]{LenzenP13}
Christoph Lenzen and David Peleg.
\newblock Efficient distributed source detection with limited bandwidth.
\newblock In {\em {PODC}}, pages 375--382, 2013.

\bibitem[LP19]{LiP19}
Jason Li and Merav Parter.
\newblock Planar diameter via metric compression.
\newblock In {\em {STOC}}, pages 152--163, 2019.

\bibitem[LPP19]{DBLP:journals/dc/LenzenPP19}
Christoph Lenzen, Boaz Patt{-}Shamir, and David Peleg.
\newblock Distributed distance computation and routing with small messages.
\newblock {\em Distributed Comput.}, 2019.

\bibitem[Nan14]{DBLP:conf/stoc/Nanongkai14}
Danupon Nanongkai.
\newblock Distributed approximation algorithms for weighted shortest paths.
\newblock In {\em {STOC}}, pages 565--573, 2014.

\bibitem[PR00]{PelegR00}
David Peleg and Vitaly Rubinovich.
\newblock A near-tight lower bound on the time complexity of distributed
  minimum-weight spanning tree construction.
\newblock {\em {SIAM} J. Comput.}, pages 1427--1442, 2000.

\bibitem[PRT12]{PelegRT12}
David Peleg, Liam Roditty, and Elad Tal.
\newblock Distributed algorithms for network diameter and girth.
\newblock In {\em {ICALP}}, pages 660--672, 2012.

\bibitem[Raz92]{RAZBOROV1992385}
A.A. Razborov.
\newblock On the distributional complexity of disjointness.
\newblock {\em Theoretical Computer Science}, pages 385 -- 390, 1992.

\bibitem[{Vas}18]{vsurvey}
Virginia {Vassilevska Williams}.
\newblock On some fine-grained questions in algorithms and complexity.
\newblock In {\em (ICM)}, pages 3447--3487, 2018.

\bibitem[Yao82]{DBLP:journals/siamcomp/Yao82}
Andrew~Chi{-}Chih Yao.
\newblock On constructing minimum spanning trees in k-dimensional spaces and
  related problems.
\newblock {\em {SIAM} J. Comput.}, pages 721--736, 1982.

\end{thebibliography}


\end{document}